\def\BibTeX{{\rm B\kern-.05em{\sc i\kern-.025em b}\kern-.08emT\kern-.1667em\lower.7ex\hbox{E}\kern-.125emX}}
\newcommand{\calO}{\mathcal{O}}
\newcommand{\calL}{\mathcal{L}}
\newcommand{\bbR}{\mathbb{R}}
\newcommand{\bbE}{\mathbb{E}}
\newcommand{\diag}{\text{diag}}
\newcommand{\tr}{\text{trace}}
\newcommand{\ErdosRenyi}{Erd\H{o}s-R\'enyi}
\definecolor{burgundy}{RGB}{140,0,26}
\begin{document}
\title{Network Density of States}

\author{Kun Dong}
\affiliation{%
  \institution{Cornell University}
  \streetaddress{}
  \city{Ithaca}
  \state{New York}
  \postcode{14853}
}
\email{kd383@cornell.edu}

\author{Austin R.~Benson}
\affiliation{%
  \institution{Cornell University}
  \streetaddress{}
  \city{Ithaca}
  \state{New York}
  \postcode{14853}
}
\email{arb@cs.cornell.edu}

\author{David Bindel}
\affiliation{%
  \institution{Cornell University}
  \streetaddress{}
  \city{Ithaca}
  \state{New York}
  \postcode{14853}
}
\email{bindel@cornell.edu}

% The default list of authors is too long for headers.
\renewcommand{\shortauthors}{K. Dong et al.}

\begin{abstract}
Spectral analysis connects graph structure to the eigenvalues and eigenvectors
of associated matrices.  Much of spectral graph theory descends directly from
spectral geometry, the study of differentiable manifolds through the spectra of
associated differential operators.  But the translation from spectral geometry
to spectral graph theory has largely focused on results involving only a few
extreme eigenvalues and their associated eigenvalues.  Unlike in geometry, the
study of graphs through the overall distribution of eigenvalues --- the {\em
spectral density} --- is largely limited to simple random graph models.  The
interior of the spectrum of real-world graphs remains largely unexplored,
difficult to compute and to interpret.

In this paper, we delve into the heart of spectral densities of real-world
graphs.  We borrow tools developed in condensed matter physics, and add novel
adaptations to handle the spectral signatures of common graph motifs.  The
resulting methods are highly efficient, as we illustrate by computing spectral
densities for graphs with over a billion edges on a single compute node. Beyond
providing visually compelling fingerprints of graphs, we show how the 
estimation of spectral densities facilitates the computation of many common
centrality measures, and use spectral densities to estimate meaningful
information about graph structure that cannot be inferred from the extremal
eigenpairs alone.
\end{abstract}

%
% The code below should be generated by the tool at
% http://dl.acm.org/ccs.cfm{}
% Please copy and paste the code instead of the example below.
%
% \begin{CCSXML}
% <ccs2012>
% <concept>
% <concept_id>10002950.10003624.10003633.10003645</concept_id>
% <concept_desc>Mathematics of computing~Spectra of graphs</concept_desc>
% <concept_significance>500</concept_significance>
% </concept>
% <concept>
% <concept_id>10002950.10003624.10003633.10010918</concept_id>
% <concept_desc>Mathematics of computing~Approximation algorithms</concept_desc>
% <concept_significance>500</concept_significance>
% </concept>
% </ccs2012>
% \end{CCSXML}

% \ccsdesc[500]{Mathematics of computing~Spectra of graphs}
% \ccsdesc[500]{Mathematics of computing~Approximation algorithms}

% \keywords{Spectral Graph Theory; Numerical Linear Algebra}

\maketitle

\section{Introduction}
\label{sec:introduction}
%!TEX root = kdd_2019.tex

Spectral theory is a powerful analysis tool in graph theory~
\cite{ctekovic1998spectra,chung1997spectral,chung2006complex}, geometry~
\cite{chavel1984eigenvalues}, and physics~\cite{jackson2006math}. One follows
the same steps in each setting:
\begin{itemize}
\item Identify an object of interest, such as a graph or manifold;
\item Associate the object with a matrix or operator, often the generator
  of a linear dynamical system or the Hessian of a quadratic form over
  functions on the object; and
\item Connect spectral properties of the matrix or operator to
  structural properties of the original object.
\end{itemize}
In each case, the {\em complete} spectral decomposition is enough to
recover the original object; the interesting results relate structure
to {\em partial} spectral information.

Many spectral methods use extreme eigenvalues and associated eigenvectors. 
These are easy to compute by standard methods, and are easy to interpret in
terms of the asymptotic behavior of dynamical systems or the solutions to
quadratic optimization problems with quadratic constraints.  Several network
centrality measures, such as PageRank~\cite{page1999pagerank}, are expressed via
the stationary vectors of transition matrices, and the rate of convergence to
stationarity is bounded via the second-largest eigenvalue.  In geometry and
graph theory, Cheeger's inequality relates the second-smallest eigenvalue of a
Laplacian or Laplace-Beltrami operator to the size of the smallest bisecting
cut~\cite{cheeger1969lower,mohar1989isoperimetric}; in the graph setting, the
associated eigenvector (the Fiedler vector) is the basis for spectral algorithms
for graph partitioning~\cite{pothen1990partitioning}.  Spectral algorithms for
graph coordinates and clustering use the first few eigenvectors of a transition
matrix or (normalized) adjacency or Laplacian~\cite{belkin2001laplacian,
ng2002spectral}. For a survey of such approaches in network science, we refer to~\cite{chung2006complex}.

Mark Kac popularized an alternate approach to spectral analysis in an expository
article~\cite{kac1966hear} in which he asked whether one can determine the shape
of a physical object (Kac used a drum as an example) given the spectrum of the
Laplace operator; that is, can one ``hear'' the shape of a drum?  One can ask a
similar question in graph theory: can one uniquely determine the structure of a
network from the spectrum of the Laplacian or another related matrix?  Though
the answer is negative in both cases~\cite{gordon1992cannot,
ctekovic1998spectra}, the spectrum is enormously informative even without
eigenvector information.  Unlike the extreme eigenvalues and vectors,
eigenvalues deep in the spectrum are difficult to compute and to interpret, but
the overall distribution of eigenvalues --- known as the spectral density or
density of states --- provides valuable structural information.  For example,
knowing the spectrum of a graph adjacency matrix is equivalent to knowing
$\tr(A^k)$, the number of closed walks of any given length $k$. In some cases,
one wants {\em local} spectral densities in which the eigenvalues also have
positive weights associated with a location. Following Kac, this would give us
not only the frequencies of a drum, but also amplitudes based on where the drum
is struck.  In a graph setting, the local spectral density of an adjacency
matrix at node $j$ is equivalent to knowing $(A^k)_{jj}$, the number of closed
walks of any given length $k$ that begin and end at the node.

Unfortunately, the analysis of spectral densities of networks has been
limited by a lack of scalable algorithms.  While the normalized
Laplacian spectra of \ErdosRenyi\ random graphs have an approximately
semicircular distribution~\cite{wigner1958distribution}, and the spectral
distributions for other popular scale-free and small-world random graph models
are also known~\cite{farkas2001spectra}, there has been relatively little work
on computing spectral densities of large ``real-world'' networks. Obtaining the
full eigendecomposition is $\calO(N^3)$ for a graph with $N$ nodes, which is
prohibitive for graphs of more than a few thousand nodes.  In prior work,
researchers have employed methods, such as thick-restart Lanczos, that still do
not scale to very large graphs~\cite{farkas2001spectra}, or heuristic
approximations with no convergence analysis~\cite{banerjee2008spectrum}. It is
only recently that clever computational methods were developed simply to 
\emph{test} for hypothesized power laws in the spectra of large real-world
matrices by computing only \emph{part} of the spectrum~
\cite{eikmeier2017revisiting}.

In this paper, we show how methods used to
study densities of states in condensed matter physics~\cite{weisse2006kernel} can be
used to study spectral densities in networks.  We study these methods for both
the {\em global} density of states and for {\em local} densities of states
weighted by specific eigenvector components. We adapt these methods to take
advantage of graph-specific structure not present in most physical systems, and
analyze the stability of the spectral density to perturbations as well as the
convergence of our computational methods.  Our methods are remarkably efficient,
as we illustrate by computing densities for graphs with billions of edges and
tens of millions of nodes on a single cloud compute node. We use our methods for
computing these densities to create compelling visual fingerprints that summarize a graph.
We also show how the density of states reveals graph properties that are not
evident from the extremal eigenvalues and eigenvectors alone, and use it as a
tool for fast computation of standard measures of graph connectivity and node
centrality. This opens the door for the use of complete spectral information
as a tool in large-scale network analysis.

\section{Background}
\label{sec:background}
%!TEX root = kdd_2019.tex

\subsection{Graph Operators and Eigenvalues}

We consider weighted, undirected graphs $G = (V, E)$ with vertices 
$V=\{v_1,\cdots, v_N\}$ and edges $E \subseteq V \times V$. The weighted adjacency
matrix $A\in\bbR^{N\times N}$ has entries $a_{ij} > 0$ to give the weight of an
edge $(i,j) \in E$ and $a_{ij} = 0$ otherwise. The degree matrix $D\in\bbR^
{N\times N}$ is the diagonal matrix of weighted node degrees, i.e. $D_{ii} =
\sum_j a_{ij}$. Several of the matrices in spectral graph theory are defined in
terms of $D$ and $A$.  We describe a few of these below, along with their
connections to other research areas. For each operator, we let $\lambda_1 \leq
\ldots \leq \lambda_N$ denotes the eigenvalues in ascending order.

\noindent{\bf Adjacency Matrix: $\pmb{A}$.} 
Many studies on the spectrum of $A$ originate from random matrix theory where
$A$ represents a random graph model. In these cases, the limiting behavior of
eigenvalues as $N\to\infty$ is of particular interest. Besides the growth of
extremal eigenvalues~\cite{chung1997spectral}, Wigner's semicircular law is 
the most renowned result about the spectral distribution of the adjacency
matrix~\cite{wigner1958distribution}. When the edges are i.i.d.~random variables
with bounded moments, the density of eigenvalues within a range converges to a
semicircular distribution. One famous graph model of this type is the
\ErdosRenyi\ graph, where $a_{ij} = a_{ji} = 1$ with probability $p < 1$, and
$0$ with probability $1-p$. Farkas et al.~\cite{farkas2001spectra} has extended
the semicircular law by investigating the spectrum of scale-free and small-world
random graph models. They show the spectra of these random graph models relate
to geometric characteristics such as the number of cycles and the degree
distribution.

\noindent{\bf Laplacian Matrix: $\pmb{L = D - A}$.} 
The Laplace operator arises naturally from the study of dynamics in both
spectral geometry and spectral graph theory. The continuous Laplace operator
and its generalizations are central to the description of physical systems
including heat diffusion~\cite{mckean1972selberg}, wave propagation~
\cite{levy2006laplace}, and quantum mechanics~\cite{ducastelle1970moments}. It
has infinitely many non-negative eigenvalues, and Weyl's law~
\cite{weyl1911asymptotische} relates their asymptotic distribution to the volume
and dimension of the manifold. On the other hand, the discrete Laplace matrix
appears in the formulation of graph partitioning problems. If $f \in \{ \pm 1
\}^N$ is an indicator vector for a partition $V = V_+ \cup V_-$, then $f^T L
f/4$ is the number of edges between $V_+$ and $V_{-}$, also known as the cut
size. $L$ is a positive-semidefinite matrix with the vector of all ones as a
null vector. The eigenvalue $\lambda_2$, called the {\em algebraic
connectivity}, bounds from below the smallest bisecting cut size; $\lambda_2 =
0$ if and only if the graph is disconnected. In addition, eigenvalues of $L$
also appear in bounds for vertex connectivity ($\lambda_2$)~
\cite{cvetkovic2009introduction}, minimal bisection ($\lambda_2$)~
\cite{donath2003lower}, and maximum cut ($\lambda_N$)~\cite{trevisan2012max}.

\noindent{\bf Normalized Laplacian Matrix: $\pmb{\overline{L} = I - D^{-1/2}AD^
{-1/2}}$.} We will also mention the normalized adjacency matrix $\overline{A} =
D^{-1/2}AD^{-1/2}$ and graph random walk matrix $P = D^{-1}A$ here, because
these matrices have the same eigenvalues as $\bar{L}$ up to a shift. The
connection to some of the most influential results in spectral geometry is
established in terms of eigenvalues and eigenvectors of normalized Laplacian. A
prominent example is the extension of Cheeger's inequality to the discrete
case, which relates the set of smallest conductance $h(G)$ (the Cheeger
constant)  to the second smallest eigenvalue of the normalized Laplacian,
$\lambda_2(\overline{L})$~\cite{montenegro2006mathematical}:
\[
\lambda_2(\overline{L})/2\leq
h(G) = \min_{S\subset V}\frac{\lvert \{(i,j)\in E, i\in S, j\notin S\} \rvert
  }{\min(\text{vol}(S), \text{vol}(V\backslash S))}
\leq \sqrt{2\lambda_2(\overline{L})} ,
\]
where $\text{vol}(T) = \sum_{i \in T}\sum_{j=1}^{N}a_{ij}$. Cheeger's inequality
offers crucial insights and powerful techniques for understanding popular
spectral graph algorithms for partitioning~\cite{mcsherry2001spectral} and
clustering~\cite{ng2002spectral}. It also plays a key role in analyzing the
mixing time of Markov chains and random walks on a graph
~\cite{Mihail-1989-Markov,Sinclair-1989-Markov}. For all these problems,
extremal eigenvalues again emerge from relevant optimization formulations.

\subsection{Spectral Density (Density of States --- DOS)}
Let $H = \bbR^{N\times N}$ be any symmetric graph matrix with an
eigendecomposition $H = Q\Lambda Q^T$, where $\Lambda = \diag 
(\lambda_1,\cdots,\lambda_N)$ and $Q = [q_1,\cdots, q_N]$ is orthogonal. The
spectral density induced by $H$ is the generalized function
\begin{equation}\label{eqn:dos}
\mu(\lambda) = \frac{1}{N}\sum_{i=1}^N \delta(\lambda - \lambda_i),
\quad \int f(\lambda) \mu(\lambda) = \tr(f(H))
\end{equation}
where $\delta$ is the Dirac delta function and $f$ is any analytic test
function. The spectral density $\mu$ is also referred to as the \emph{density
of states} (DOS) in the condensed matter physics literature~
\cite{weisse2006kernel}, as it describes the number of states at different
energy levels. For any vector $u\in\bbR^N$, the local density of states (LDOS)
is
\begin{equation}\label{eqn:ldos}
\mu(\lambda; u) = \sum_{i=1}^N|u^Tq_i|^2\delta(\lambda-\lambda_i),
\quad
\int f(\lambda)\mu(\lambda; u) = u^T f(H) u.
\end{equation}
Most of the time, we are interested in the case $u=e_k$ where $e_k$ is the $k$th
standard basis vector---this provides the spectral information about a
particular node. We will write $\mu_k(\lambda) = \mu(\lambda; e_k)$ for the
pointwise density of states (PDOS) for node $v_k$. It is noteworthy $|e_k^Tq_i|
= |q_i(k)|$ gives the magnitude of the weight for $v_k$ in the $i$-th
eigenvector, thereby the set of $\{\mu_k\}$ encodes the entire spectral
information of the graph up to sign differences. These concepts can be easily
extended to directed graphs with asymmetric matrices, for which the eigenvalues
are replaced by singular values, and eigenvectors by left/right singular
vectors.

Naively, to obtain the DOS and LDOS requires computing all eigenvalues and
eigenvectors for an $N$-by-$N$ matrix, which is infeasible for large graphs.
Therefore, we turn to algorithms that approximate these densities. Since the DOS
is a generalized function, it is important we specify how the estimation is
evaluated. One choice is to treat $\mu$ (or $\mu_k$) as a distribution, and
measure its  approximation error with respect to a chosen function space
$\calL$. For  example, when $\calL$ is the set of Lipschitz continuous functions
taking the value 0 at 0, the error for estimated $\widetilde{\mu}$ is in the
Wasserstein distance (a.k.a.\ earth-mover distance)~\cite{kantorovich1958space}
\begin{equation}\label{eqn:waisserstein}
W_1(\mu,\widetilde{\mu}) = \sup\Big\{\int (\mu(\lambda)-\widetilde{\mu}(\lambda
))f(\lambda)d\lambda : \text{Lip}(f)\leq 1\Big\}.
\end{equation}
This notion is particularly useful when $\mu$ is integrated against in 
applications such as computing centrality measures.

On the other hand, we can regularize $\mu$ with a mollifier $K_\sigma$ (i.e.,
a smooth approximation of the identity function):
\begin{equation}\label{eqn:reg_dos}
(K_\sigma\ast\mu)(\lambda) = \int_\bbR \sigma^{-1} K\left(\frac{\lambda-\nu}{
\sigma}\right)\mu(\nu)d\nu
\end{equation}
A simplified approach is numerically integrating $\mu$ over small intervals of
equal size to generate a spectral histogram. The advantage is the error is now
easily measured and visualized in the $L_\infty$ norm. For example, Figure 
\ref{fig:caida} shows the exact and approximated spectral histogram for the
normalized adjacency matrix of an Internet topology. 
\begin{figure}
  \begin{subfigure}{0.235\textwidth}
    \centering  
    \includegraphics[width=\textwidth,trim = .4cm 0.5cm 3.5cm 1.3cm,clip]{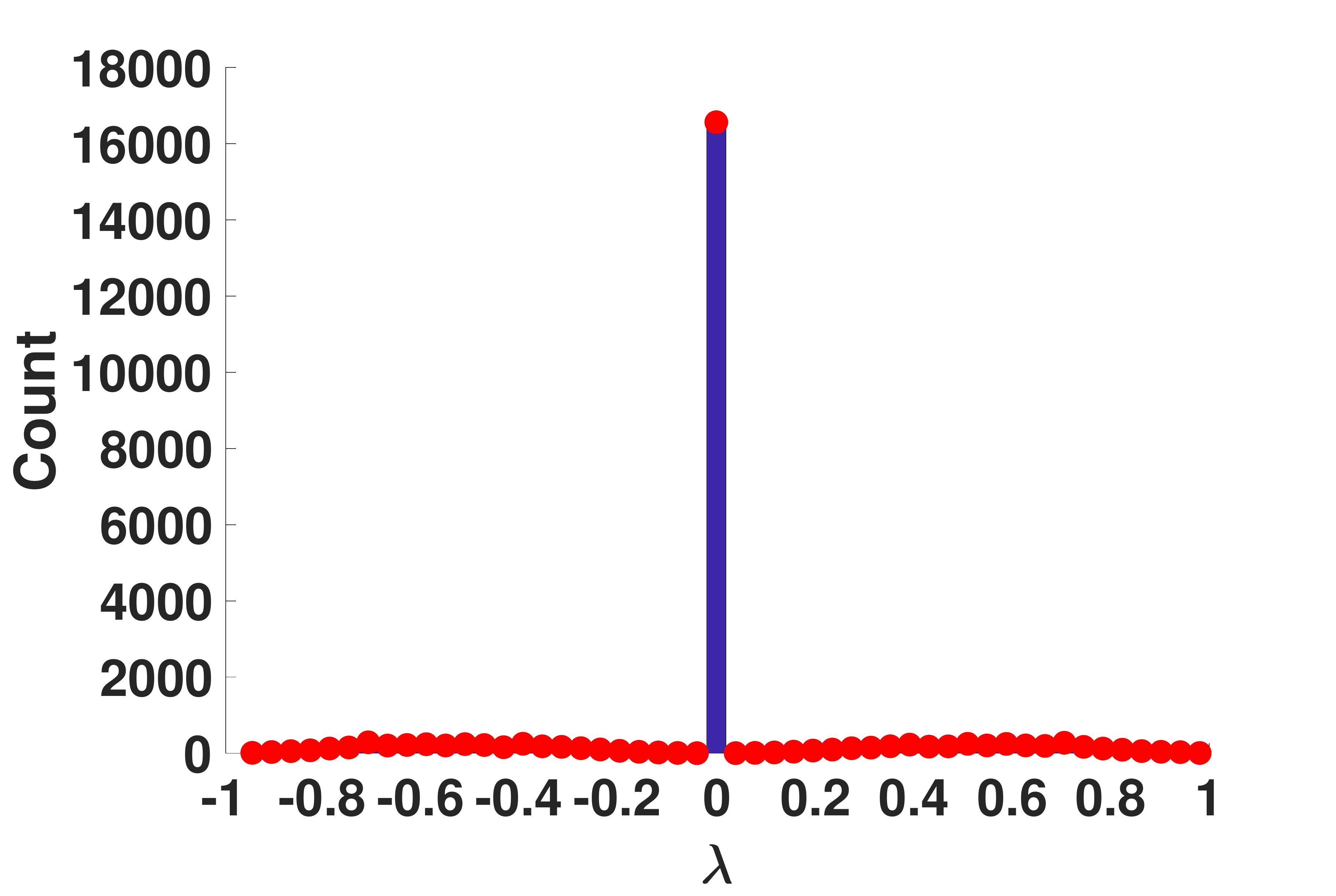}
    \caption{Spectral Histogram}
    \label{fig:caida_full}
  \end{subfigure}
  \begin{subfigure}{0.235\textwidth}
    \centering
    \includegraphics[width=\textwidth,trim = .4cm 0.5cm 3.5cm 1.3cm,clip]{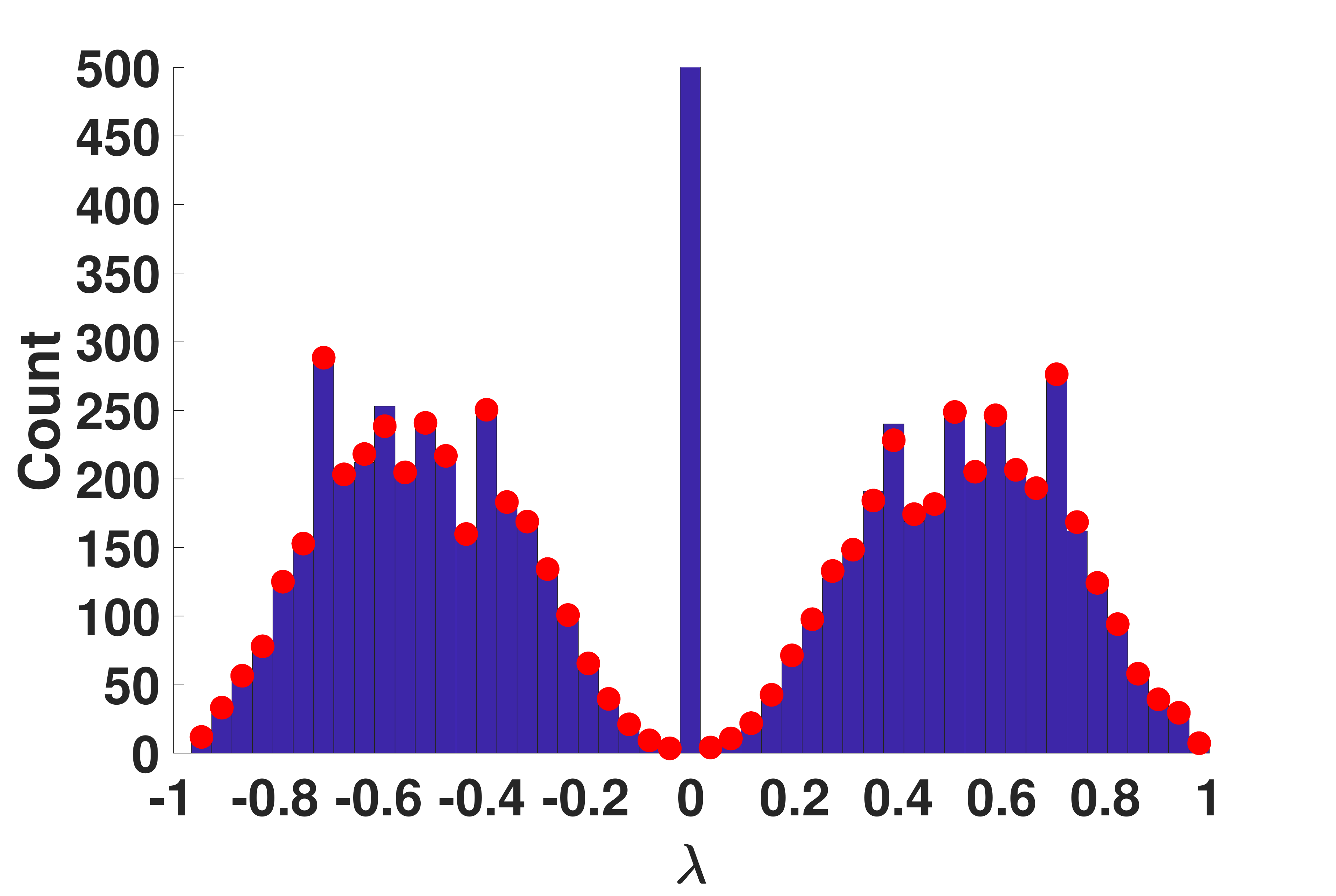}
    \caption{Zoom-in View}
    \label{fig:caida_zoom}
  \end{subfigure}
  \caption{Spectral histogram for the normalized adjacency matrix
  for the CAIDA autonomous systems graph~\cite{caida2012}, %representing
  an Internet topology with $22965$ nodes and $47193$ edges. Blue
  bars are the real spectrum, and red points are the approximated heights. (\ref{fig:caida_full}) contains high multiplicity around eigenvalue $0$, so (\ref{fig:caida_zoom}) zooms in to height between $[0,500]$.}
  \label{fig:caida}
\end{figure}

\section{Methods}
\label{sec:methods}
%!TEX root = kdd_2019.tex

The density of states plays a significant role in understanding electronic band
structure in solid state physics, and so several methods have been proposed in
that literature to estimate spectral densities. We review two such methods: the
kernel polynomial method (KPM) which involves a polynomial expansion of the
DOS/LDOS, and the Gauss Quadrature via Lanczos iteration (GQL). These methods
have not previously been applied in the network setting, though Cohen-Steiner et
al.~\cite{cohen2018approximating} have independently invented an approach
similar to KPM for the global DOS alone, albeit using a less numerically stable
polynomial basis (the power basis associated with random walks). We then
introduce a new direct nested dissection method for LDOS, as well as new
graph-specific modifications to improve the convergence of the KPM and GQL
approaches.

Throughout this section, $H$ denotes any symmetric matrix.

\subsection{Kernel Polynomial Method (KPM)}\label{subsec:kpm}

The Kernel Polynomial Method (KPM)~\cite{weisse2006kernel} approximates
the spectral density through an expansion in the dual basis of an orthogonal
polynomial basis. Traditionally, the Chebyshev basis $\{T_m\}$ is used because
of its connection to the best polynomial interpolation. Chebyshev approximation
requires the spectrum to be supported on the interval $[-1,1]$ for numerical
stability. However, this condition can be satisfied by any graph matrix after
shifting and rescaling:
\begin{equation}\label{eqn:shiftscale}
\widetilde{H} = \frac{2H - (\lambda_{\max}(H)+\lambda_{\min}(H))}{\lambda_{\max}(H) -
\lambda_{\min}(H)}
\end{equation}
We can compute these extremal eigenvalues efficiently for our sparse matrix $H$,
so the pre-computation is not an issue~\cite{Parlett-1984-sparse}.

The Chebyshev polynomials $T_m(x)$ satisfy the recurrence
\begin{equation}\label{eq:cheb_3term}
T_0(x) = 1,\; T_1(x) = x,\; T_{m+1}(x) = 2xT_m(x) - T_{m-1}(x).
\end{equation}
They are orthogonal with respect to $w(x) =  2 / [(1+\delta_{0n})\pi
\sqrt{1-x^2}]$:
\begin{equation} \label{eqn:orth}
\int_{-1}^1 w(x)T_m(x)T_n(x)dx = \delta_{mn}.
\end{equation}
(Here and elsewhere, $\delta_{ij}$ is the Kronecker delta: $1$ if $i = j$ and
$0$ otherwise.) Therefore, $T_m^\ast(x) = w(x)T_m(x)$ also forms the dual
Chebyshev basis. Using (\ref{eqn:orth}), we can expand our DOS $\mu(\lambda)$ as
\begin{gather} \label{eqn:doscoeff}
\mu(x) = \sum_{m=1}^\infty d_mT^\ast_m(\lambda) \\
d_m = \int_{-1}^1T_m(\lambda)\mu(\lambda)d\lambda = \frac{1}{N}\sum_{i=1}^N 
T_m(\lambda_i) = \frac{1}{N}\tr(T_m(H)),
\end{gather}
Here, $T_m(H)$ is the $m$th Chebyshev polynomial of the matrix $H$. The last
equality comes from the spectral mapping theorem, which says that taking a
polynomial of $H$ maps the eigenvalues by the same polynomial. Similarly, we
express the PDOS $\mu_k(\lambda)$ as \begin{equation}\label{eqn:pdoscoeff}
d_{mk} = \int_{-1}^1 T_m(\lambda)\mu_k(\lambda)d\lambda = \sum_{i=1}^N |q_i(k)|
^2T_m(\lambda_i) = T_m(H)_{kk}.
\end{equation}

We want to efficiently extract the diagonal elements of the matrices $\{T_m
(H)\}$ without forming them explicitly; the key idea is to apply the stochastic
trace/diagonal estimation, proposed by Hutchinson~
\cite{hutchinson1990stochastic} and Bekas et al.~\cite{bekas2007estimator}.
Given a random probe vector $z$ such that $z_i$'s are i.i.d.~with mean $0$ and
variance $1$,
\begin{equation}\label{eq:probe_trace}
\bbE[z^THz] = \sum_{i,j}H_{ij}\bbE[z_iz_j] = \tr(H)
\end{equation}
\begin{equation}\label{eq:probe_diag}
\bbE[z\odot Hz] = \diag(H)
\end{equation}
where $\odot$ represents the Hadamard (elementwise) product. Choosing $N_z$
independent probe vectors $Z_j$, we obtain the unbiased estimator
\begin{equation*}
\tr(H) = \bbE[z^THz] \approx \frac{1}{N_z}\sum_{j=1}^{N_z}Z_j^THZ_j 
\end{equation*}
and similarly for the diagonal. Avron and Toledo~\cite{avron2011randomized}
review many possible choices of probes for \cref{eq:probe_trace,eq:probe_diag};
a common choice is vectors with independent standard normal entries. Using the
Chebyshev recurrence (\cref{eq:cheb_3term}), we can compute the sequence $T_j
(H)z$ for each probe at a cost of one matrix-vector product per term, for a
total cost of $O(|E| N_z)$ time per moment $T_m(H)$.

In practice, we only use a finite number of moments rather than an 
infinite expansion. The number of moments required depends on the convergence
rate of the Chebyshev approximation for the class of functions DOS/LDOS is
integrated with. For example, the approximation error decays exponentially for
test functions that are smooth over the spectrum
\cite{trefethen2013approximation}, so only a few moments are needed. On the
other hand, such truncation leads to Gibbs oscillations that cause error in the
interpolation ~\cite{Trefethen-2013-ATAP}. However, to a large extent, we can 
use smoothing techniques such as Jackson damping to resolve this issue 
~\cite{jackson1911genauigkeit} (we will formalize this in 
\cref{thm:Jackson_damping}).

\subsection{Gauss Quadrature and Lanczos (GQL)}
Golub and Meurant developed the well-known Gauss Quadrature and Lanczos (GQL)
algorithm to approximate bilinear forms for smooth functions of a matrix~
\cite{golub1997matrices}.  Using the same stochastic estimation from \S
\ref{subsec:kpm}, we can also apply GQL to compute DOS.

For a starting vector $z$ and graph matrix $H$, Lanczos iterations after
$M$ steps produce a decomposition
\begin{equation*}
HZ_M = Z_M^T\Gamma_M + r_Me_M^T
\end{equation*}
where $Z_M^TZ_M = I_M$, $Z_M^Tr_M = 0$, and $\Gamma_M$ tridiagonal. GQL 
approximates $z^Tf(H)z$ with $\|z\|^2e_1^Tf(T_M)e_1$, implying
\begin{equation*}
z^Tf(H)z = \sum_{i=1}^N |z^Tq_i|^2f(\lambda_i) \approx \|z\|^2\sum_{i=1}^M |p_{
i1}|^2f(\tau_i)
\end{equation*}
where $(\tau_1,p_1)\cdots,(\tau_M, p_M)$ are the eigenpairs of $\Gamma_M$. 
Consequently,
\begin{equation*}
\|z\|^2\sum_{i=1}^M |p_{i1}|^2\delta(\lambda-\tau_i)
\end{equation*}
approximates the LDOS $\mu(\lambda; z)$.

Building upon the stochastic estimation idea and the invariance of probe 
vectors under orthogonal transformation, we have
\[
\bbE[\mu(\lambda;z)] = \sum_{i=1}^N \delta(\lambda-\lambda_i) = N\mu(\lambda)
\]
Hence 
\begin{equation*}
\mu(\lambda)\approx\sum_{i=1}^M |p_{i1}|^2\delta(\lambda-\tau_i).
\end{equation*}
The approximate generalized function is exact when applied to polynomials of
degree $\leq 2M+1$. Furthermore, if we let $z = e_k$ then GQL also provides an
estimation for the PDOS $\mu_k(\lambda)$. Estimation from GQL can also be
converted to Chebyshev moments if needed.

\subsection{Nested Dissection (ND)}

The estimation error via Monte Carlo method intrinsically decays at the rate
$\calO(1 / \sqrt{N_z})$, where $N_z$ is the number of random probing vectors.
Hence, we have to tolerate the higher variance when increasing the number of
probe vectors becomes too expensive. This is particularly problematic when we
try to compute the PDOS for all nodes using the stochastic diagonal estimator.
Therefore, we introduce an alternative divide-and-conquer  method, which
computes more accurate PDOS for any set of nodes at a cost comparable to the
stochastic approach in practice.

Suppose the graph can be partitioned into two subgraphs by removal of a small
vertex separator.  Permuting the vertices so that the two partitions appear
first, followed by the separator vertices. Up to vertex permutations, we can
rewrite $H$ in block form as
\begin{equation*}
H = \begin{bmatrix}
H_{11} & 0 & H_{13}\\
0 & H_{22} & H_{23}\\
H_{13}^T & H_{23}^T & H_{33}
\end{bmatrix},
\end{equation*}
where the indices indicate the groups identities. Leveraging this structure, we
can update the recurrence relation for Chebyshev polynomials to become
\begin{equation}\label{eqn:ndcheb}
T_{m+1}(H)_{11} = 2H_{11}T_m(H)_{11} - T_{m-1}(H)_{11} + 2H_{13}T_m(H)_{31}
\end{equation}

Recursing on the partitioning will lead to a nested dissection, after which we
will use direct computation on sufficiently small sub-blocks. We denote the
indexing of each partition with $I^{(t)}_p = I^{(t)}_s\bigcup I^{
(t)}_\ell\bigcup I^{(t)}_r$, which represents all nodes in the current
partition, the separators, and two sub-partitions, respectively. For the
separators, equation \ref{eqn:ndcheb} leads to
\begin{align}\label{eqn:ndchebsep}
&T_{m+1}(H)(I^{(t)}_p,I^{(t)}_s) = 2H(I^{(t)}_p, I^{(t)}_p)T_m(H)(I^{(t)}_p,
I^{(t)}_s) \nonumber\\
&-T_{m-1}(H)(I^{(t)}_p,I^{(t)}_s) + 2\sum_{t'\in S_t}H(I^{(t)}_p,I^{(t')}_s)T_m
(H)(I^{(t')}_s, I^{(t)}_s)
\end{align}
where $S_t$ is the path from partition $t$ to the root; and for the leaf
blocks, $I^{(t)}_s = I^{(t)}_p$ in equation \ref{eqn:ndchebsep}.
The result is Algorithm 1.

\begin{algorithm}
\SetKwInput{KwInput}{Input}
\SetKwInput{KwOutput}{Output}
\DontPrintSemicolon
	\KwInput{Symmetric graph matrix $H$ with eigenvalues in $[-1,1]$}
	\KwOutput{$C\in\bbR^{N\times M}$ where $c_{ij}$ is the $j$-th Chebyshev
	moment for $i$-th node.}
	\Begin{
		Obtain partitions $\{I^{(t)}_p\}$ in a tree structure through
		multilevel nested dissection.\;
		\For{$m=1$ \textbf{to} $M$}{
			Traverse partition tree in pre-order:\;
			\quad Compute the separator columns with \cref{eqn:ndchebsep}.\;
			\quad \If{$I^{(t)}_p$ is a leaf block}{Compute the diagonal entries
			with equation (\ref{eqn:ndchebsep}).}
		}
	}
\caption{Nested Dissection for PDOS Approximation}
\label{figure:ndalg}        
\end{algorithm}

The multilevel nested dissection process itself has a well-established
algorithm by Karypis and Kumar, and efficient implementation is available
in \emph{METIS}~\cite{karypis1998fast}. Note that this approach is only viable
when the graph can be partitioned with a separator of small size. Empirically,
we observe this assumption to hold for many real-world networks. The biggest
advantage of this approach is we can very efficiently obtain PDOS estimation for
a subset of nodes with much better accuracy than KPM.

\subsection{Motif Filtering}

In many graphs, there are large spikes around particular eigenvalues; for
example, see \cref{fig:caida}. This phenomenon affects the accuracy of DOS
estimation in two ways. First, the singularity-like behavior means we need many
more moments to obtain a good approximation in polynomial basis. Secondly, due
to the equi-oscillation property of Chebyshev approximation, error in
irregularities (say, at a point of high concentration in the spectral density),
spreads to other parts of the spectrum. This is a problem in our case, as the
spectral density of real-world networks are far from uniform.

High multiplicity eigenvalues are typically related to local symmetries in a
graph. The most prevalent example is two dangling nodes attached to the same
neighbor as shown in \ref{fig:motif_0}, which accounts for most eigenvalues
around $0$ for (normalized) adjacency matrix with a localized eigenvector taking
value $+1$ on one node and $-1$ on the other. In addition, we list a few more
motifs in figure \ref{fig:motifs} that appear most frequently in real-world
graphs. All of them can be associated with specific eigenvalues, and we include
the corresponding ones in normalized adjacency matrix for our example.

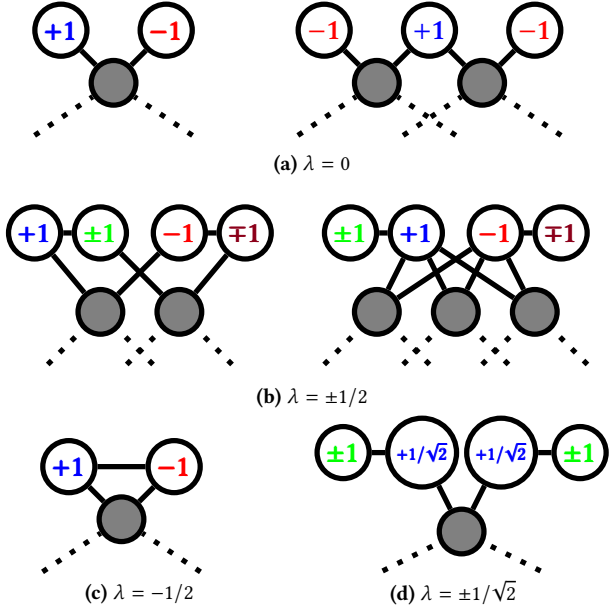
\begin{figure}
  \begin{subfigure}{0.5\textwidth}
    \scalebox{0.7}{
	\begin{tikzpicture}[remember picture, overlay]
		\node[draw,line width=1mm,circle,inner sep=1mm] (A) at (1.5,-0.5) 
		{\fontsize{16}{1}\selectfont$\pmb{\color{blue}+1}$};
		\node[draw,line width=1mm,circle,inner sep=1mm] (B) at (3.5,-0.5)
		{\fontsize{16}{1}\selectfont$\pmb{\color{red}-1}$};
		\node[draw,line width=1mm,fill=gray,circle,inner sep=3mm] (C) at 
		(2.5,-1.5) {};
		\node[draw,line width=1mm,circle,inner sep=1mm] (D) at (6.5,-0.5)
		{\fontsize{16}{1}\selectfont$\mathbf{\color{red}-1}$};
		\node[draw,line width=1mm,circle,inner sep=1mm] (E) at (8.5,-0.5)
		{\fontsize{16}{1}\selectfont$\mathbf{\color{blue}+1}$};
		\node[draw,line width=1mm,circle,inner sep=1mm] (F) at (10.5,-0.5)
		{\fontsize{16}{1}\selectfont$\mathbf{\color{red}-1}$};
		\node[draw,line width=1mm,fill=gray,circle,inner sep=3mm] (G) at 
		(7.5,-1.5) {};
		\node[draw,line width=1mm,fill=gray,circle,inner sep=3mm] (H) at 
		(9.5,-1.5) {};

		\draw[line width=1mm] (A)--(C);
		\draw[line width=1mm] (B)--(C);
		\draw[line width=1mm, loosely dashed] (1,-2.5)--(C);
		\draw[line width=1mm, loosely dashed] (4,-2.5)--(C);
		\draw[line width=1mm] (D)--(G);
		\draw[line width=1mm] (E)--(G);
		\draw[line width=1mm] (E)--(H);
		\draw[line width=1mm] (F)--(H);
		\draw[line width=1mm, loosely dashed] (6,-2.5)--(G);
		\draw[line width=1mm, loosely dashed] (9,-2.5)--(G);
		\draw[line width=1mm, loosely dashed] (8,-2.5)--(H);
		\draw[line width=1mm, loosely dashed] (11,-2.5)--(H);
	\end{tikzpicture}}
	\vspace{1.75cm}{}
	\caption{$\lambda = 0$}
	\label{fig:motif_0}
  \end{subfigure}
  \begin{subfigure}{0.5\textwidth}
    \scalebox{0.7}{
	\begin{tikzpicture}[remember picture, overlay]
		\node[draw,line width=1mm,circle,inner sep=1mm] (A) at (1,-1) 
		{\fontsize{16}{1}\selectfont$\pmb{\color{blue}+1}$};
		\node[draw,line width=1mm,circle,inner sep=1mm] (B) at (2.25,-1)
		{\fontsize{16}{1}\selectfont$\pmb{\color{green}\pm 1}$};
		\node[draw,line width=1mm,circle,inner sep=1mm] (C) at (3.75,-1)
		{\fontsize{16}{1}\selectfont$\pmb{\color{red}-1}$};
		\node[draw,line width=1mm,circle,inner sep=1mm] (D) at (5,-1)
		{\fontsize{16}{1}\selectfont$\pmb{\color{burgundy}\mp 1}$};
		\node[draw,line width=1mm,fill=gray,circle,inner sep=3mm] (E) at 
		(2.25,-2.5) {};
		\node[draw,line width=1mm,fill=gray,circle,inner sep=3mm] (F) at 
		(3.75,-2.5) {};
		\node[draw,line width=1mm,circle,inner sep=1mm] (G) at (7,-1) 
		{\fontsize{16}{1}\selectfont$\pmb{\color{green}\pm1}$};
		\node[draw,line width=1mm,circle,inner sep=1mm] (H) at (8.25,-1)
		{\fontsize{16}{1}\selectfont$\pmb{\color{blue} +1}$};
		\node[draw,line width=1mm,circle,inner sep=1mm] (I) at (9.75,-1)
		{\fontsize{16}{1}\selectfont$\pmb{\color{red}-1}$};
		\node[draw,line width=1mm,circle,inner sep=1mm] (J) at (11,-1)
		{\fontsize{16}{1}\selectfont$\pmb{\color{burgundy}\mp 1}$};
		\node[draw,line width=1mm,fill=gray,circle,inner sep=3mm] (K) at 
		(7.5,-2.5) {};
		\node[draw,line width=1mm,fill=gray,circle,inner sep=3mm] (L) at 
		(9,-2.5) {};
		\node[draw,line width=1mm,fill=gray,circle,inner sep=3mm] (M) at 
		(10.5,-2.5) {};

		\draw[line width=1mm] (A)--(E);
		\draw[line width=1mm] (C)--(E);
		\draw[line width=1mm] (A)--(B);
		\draw[line width=1mm] (C)--(D);
		\draw[line width=1mm] (B)--(F);
		\draw[line width=1mm] (D)--(F);
		\draw[line width=1mm, loosely dashed] (1.25,-3.5)--(E);
		\draw[line width=1mm, loosely dashed] (2.75,-3.5)--(F);
		\draw[line width=1mm, loosely dashed] (3.25,-3.5)--(E);
		\draw[line width=1mm, loosely dashed] (4.75,-3.5)--(F);

		\draw[line width=1mm] (G)--(H);
		\draw[line width=1mm] (I)--(J);
		\draw[line width=1mm] (H)--(K);
		\draw[line width=1mm] (H)--(L);
		\draw[line width=1mm] (H)--(M);
		\draw[line width=1mm] (I)--(K);
		\draw[line width=1mm] (I)--(L);
		\draw[line width=1mm] (I)--(M);
		\draw[line width=1mm, loosely dashed] (6.5,-3.5)--(K);
		\draw[line width=1mm, loosely dashed] (8.5,-3.5)--(K);
		\draw[line width=1mm, loosely dashed] (8,-3.5)--(L);
		\draw[line width=1mm, loosely dashed] (10,-3.5)--(L);
		\draw[line width=1mm, loosely dashed] (9.5,-3.5)--(M);
		\draw[line width=1mm, loosely dashed] (11.5,-3.5)--(M);
	\end{tikzpicture}}
	\vspace{2.5cm}
	\caption{$\lambda = \pm 1/2$}
	\label{fig:motif_pmhalf}
  \end{subfigure}
  \begin{subfigure}{0.23\textwidth}
  \scalebox{0.7}{
	\begin{tikzpicture}[remember picture, overlay]
		\node[draw,line width=1mm,circle,inner sep=1mm] (A) at (1.5,-1) 
		{\fontsize{16}{1}\selectfont$\pmb{\color{blue}+1}$};
		\node[draw,line width=1mm,circle,inner sep=1mm] (B) at (3.5,-1)
		{\fontsize{16}{1}\selectfont$\pmb{\color{red}-1}$};
		\node[draw,line width=1mm,fill=gray,circle,inner sep=3mm] (C) at 
		(2.5,-2) {};

		\draw[line width=1mm] (A)--(C);
		\draw[line width=1mm] (B)--(C);
		\draw[line width=1mm] (A)--(B);
		\draw[line width=1mm, loosely dashed] (1,-3)--(C);
		\draw[line width=1mm, loosely dashed] (4,-3)--(C);
	\end{tikzpicture}}
	\vspace{2cm}
	\caption{$\lambda = -1/2$}
	\label{fig:motif_minushalf}
  \end{subfigure}
  \begin{subfigure}{0.23\textwidth}
  \scalebox{0.7}{
	\begin{tikzpicture}[remember picture, overlay]
		\node[draw,line width=1mm,circle,inner sep=1mm] (A) at (0.75,-.75) 
		{\fontsize{16}{1}\selectfont$\pmb{\color{green}\pm1}$};
		\node[draw,line width=1mm,circle,inner sep=1mm] (B) at (2.25,-.75)
		{\fontsize{10}{1}\selectfont$\pmb{\color{blue}+1/\sqrt{2}}$};
		\node[draw,line width=1mm,circle,inner sep=1mm] (C) at (3.75,-.75)
		{\fontsize{10}{1}\selectfont$\pmb{\color{blue}+1/\sqrt{2}}$};
		\node[draw,line width=1mm,circle,inner sep=1mm] (D) at (5.25,-.75)
		{\fontsize{16}{1}\selectfont$\pmb{\color{green}\pm 1}$};
		\node[draw,line width=1mm,fill=gray,circle,inner sep=3mm] (E) at 
		(3,-2.25) {};

		\draw[line width=1mm] (A)--(B);
		\draw[line width=1mm] (B)--(E);
		\draw[line width=1mm] (C)--(D);
		\draw[line width=1mm] (C)--(E);
		\draw[line width=1mm, loosely dashed] (1.5,-3)--(E);
		\draw[line width=1mm, loosely dashed] (4.5,-3)--(E);
	\end{tikzpicture}}
	\vspace{2cm}{}
	\caption{$\lambda = \pm1/\sqrt{2}$}
	\label{fig:motif_sqrt2}
  \end{subfigure}
  \caption{Common motifs (induced subgraphs) in graph data that result in
  localized spikes in the spectral density. Each motif generates a specific
  eigenvalue with locally-supported eigenvectors. Here we uses the normalized
  adjacency matrix to represent the graph, although we can perform the same
  analysis for the adjacency, Laplacian, or normalized Laplacian (only the
  eigenvalues would be different). The eigenvectors are supported only on the
  labeled nodes.}
  \label{fig:motifs}
\end{figure}

To detect these motifs in large graphs, we deploy a randomized hashing
technique. Given a random vector $z$, the hashing weight $w=Hz$ encodes all
the neighborhood information of each node. To find node copies (left in Figure 
\ref{fig:motif_0}), we seek pairs $(i, j)$ such that $w_i=w_j$; with high
probability, this only happens when $v_i$ and $v_j$ share the same neighbors.
Similarly, all motifs in Figure \ref{fig:motifs} can be characterized by union
and intersection of neighborhood lists.

After identifying motifs, we need only approximate the (relatively smooth)
density of the remaining spectrum.  The eigenvectors associated with these
remaining non-motif eigenvalues must be constant across cycles in the canonical
decomposition of the associated permutations.  Let $P \in \bbR^{N \times r}$
denote an orthonormal basis for the space of such vectors formed from columns of
the identity and (normalized) indicators for nodes cyclically permuted by the
motif.  The matrix $H_r = P^T H P$ then has identical eigenvalues to $H$, except
with all the motif eigenvalues omitted. We may form $H_r$ explicitly, as it has
the same sparsity structure as $H$ but with a supernode replacing the nodes in
each instance of a motif cycle; or we can achieve the same result by replacing
each random probe $Z$ with the projected probe $Z_r = PP^T Z$ at an additional
cost of $O(N_{\mathrm{motif}})$ per probe, where $N_{\mathrm{motif}}$ is the
number of nodes involved in motifs.

The motif filtering method essentially allow us to isolate the spiky components
from the spectrum. As a result, we are able to obtain a more accurate
approximation using fewer Chebyshev moments. Figure \ref{fig:motif_filt}
demonstrates the improvement on the approximation as we procedurally filter out
motifs at $0$, $-1/3$, $-1/2$, and $-1/4$. The eigenvalue $-1/m$ can be
generated by an edge attached to the graph through $m-1$ nodes, similar to
motif (\ref{fig:motif_minushalf}).
\begin{figure}
\begin{subfigure}{0.235\textwidth}
    \centering  
    \captionsetup{justification=centering}
    \includegraphics[width=\textwidth,trim = .3cm 0.5cm 3.5cm 1.3cm,clip]{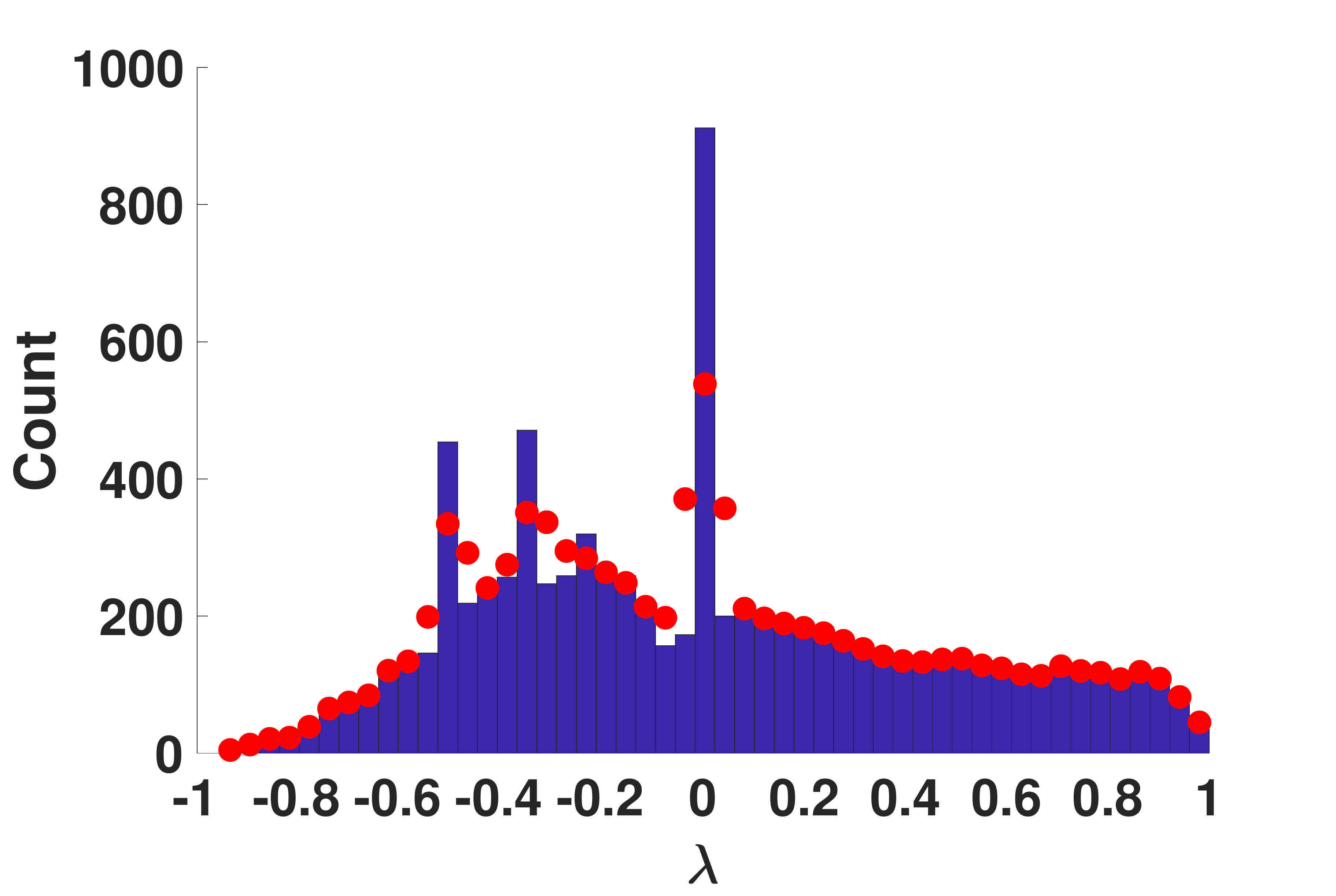}
    \caption{No Filter}
    \label{fig:hepth_0filt}
  \end{subfigure}
  \begin{subfigure}{0.235\textwidth}
    \centering
    \captionsetup{justification=centering}
    \includegraphics[width=\textwidth,trim = .3cm 0.5cm 3.5cm 1.3cm,clip]{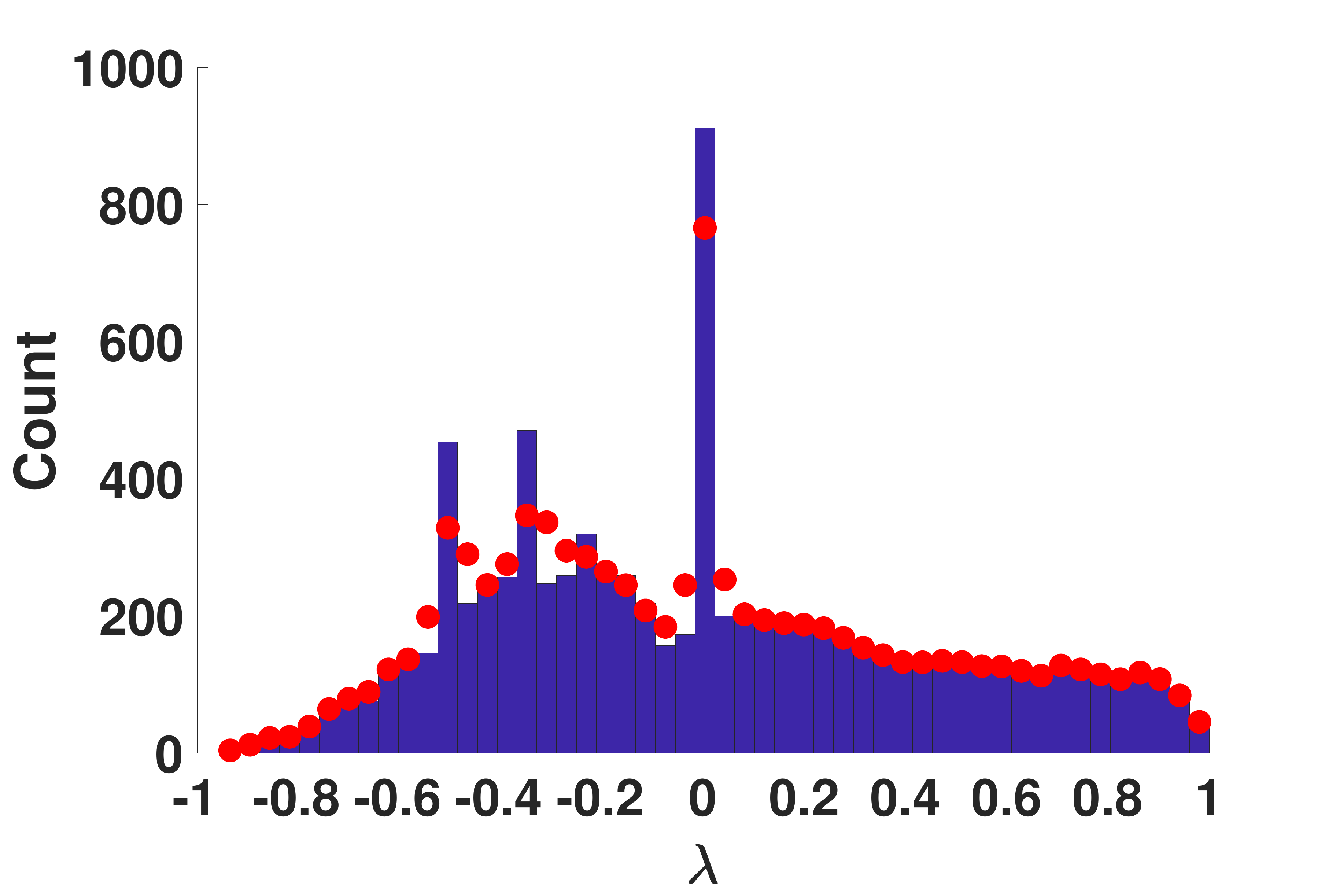}
    \caption{Filter at $\lambda=0$}
    \label{fig:hepth_1filt}
  \end{subfigure}
  \begin{subfigure}{0.235\textwidth}
    \centering
    \captionsetup{justification=centering}
    \includegraphics[width=\textwidth,trim = .3cm 0.5cm 3.5cm 1.3cm,clip]{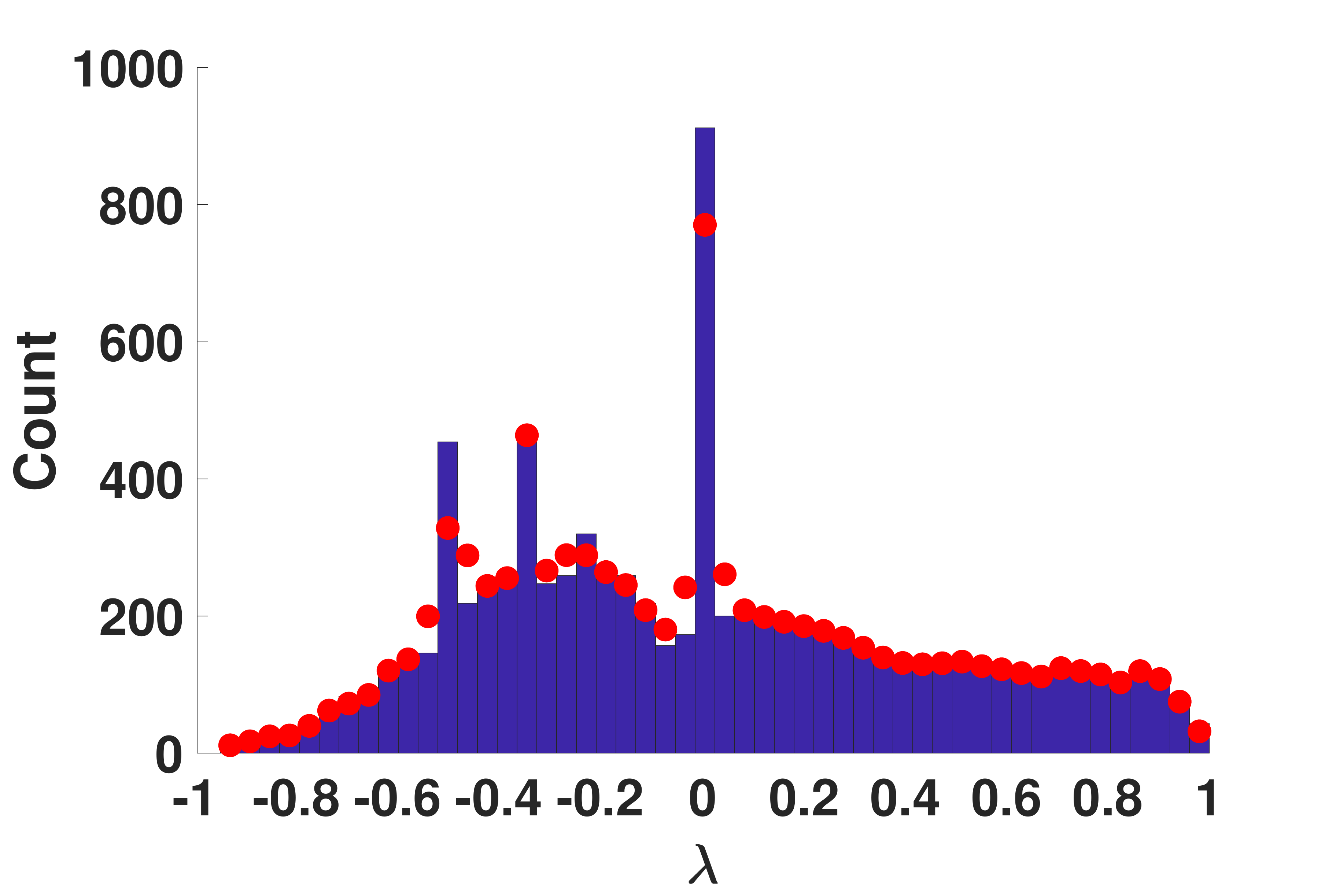}
    \caption{Filter at $\lambda=-1/3$}
    \label{fig:hepth_2filt}
  \end{subfigure}
  \begin{subfigure}{0.235\textwidth}
    \centering
    \captionsetup{justification=centering}
    \includegraphics[width=\textwidth,trim = .3cm 0.5cm 3.5cm 1.3cm,clip]{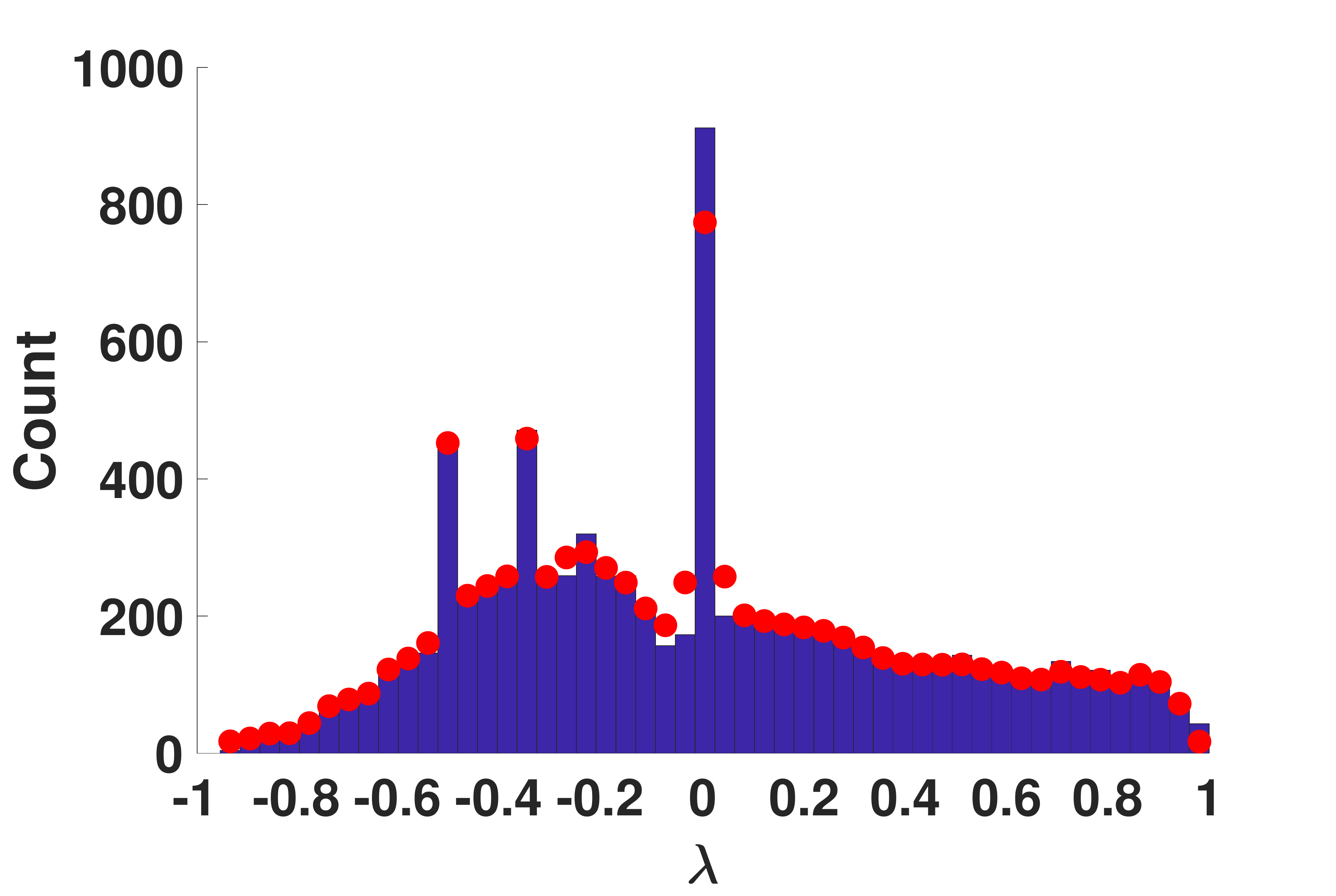}
    \caption{Filter at $\lambda=-1/2$}
    \label{fig:hepth_3filt}
  \end{subfigure}
  \begin{subfigure}{0.235\textwidth}
    \centering
    \captionsetup{justification=centering}
    \includegraphics[width=\textwidth,trim = .3cm 0.5cm 3.5cm 1.3cm,clip]{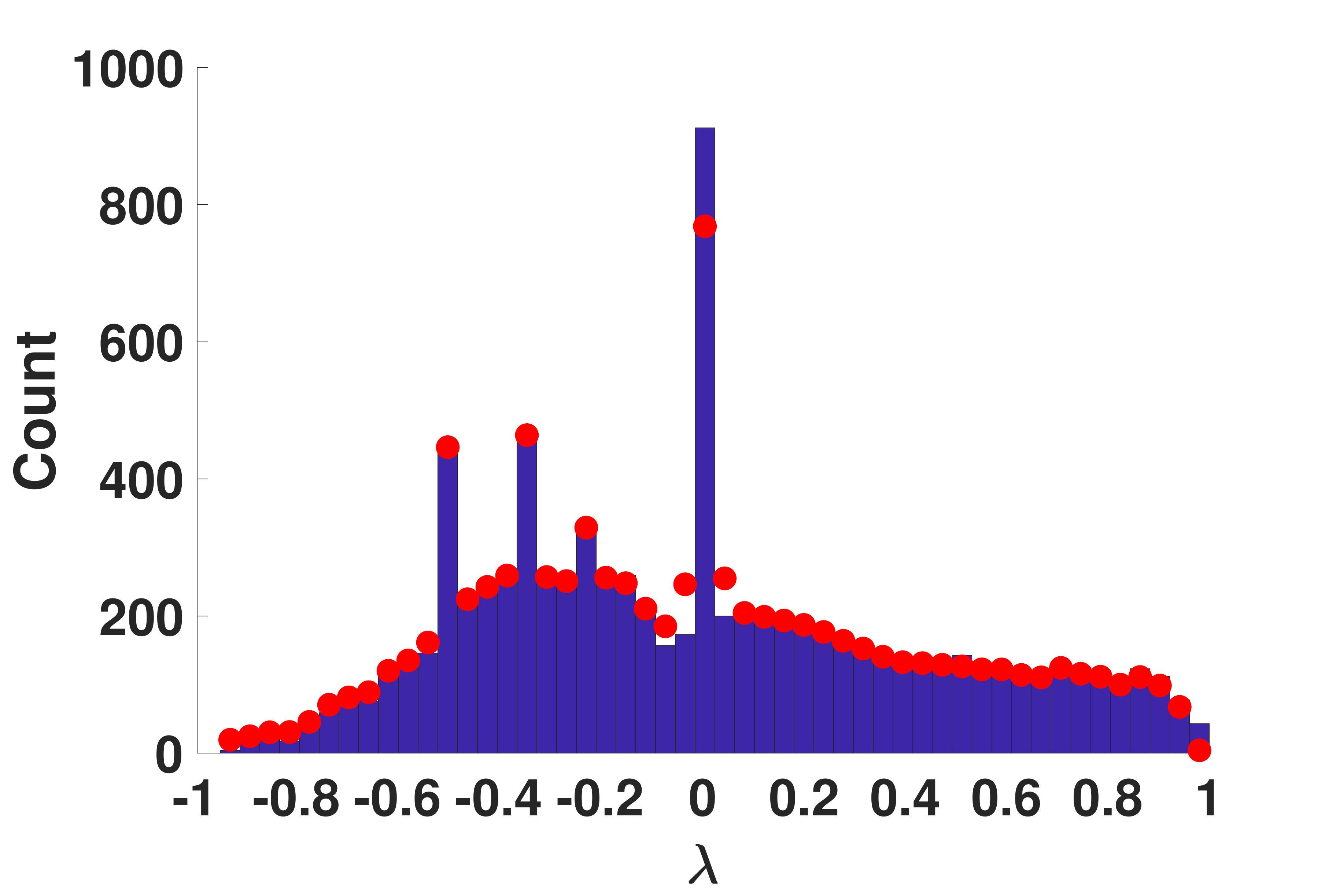}
    \caption{Filter at $\lambda=-1/4$}
    \label{fig:hepth_4filt}
  \end{subfigure}
  \begin{subfigure}{0.235\textwidth}
    \centering
    \captionsetup{justification=centering}
    \includegraphics[width=\textwidth,trim = .3cm 0cm 1.5cm 1.3cm,clip]{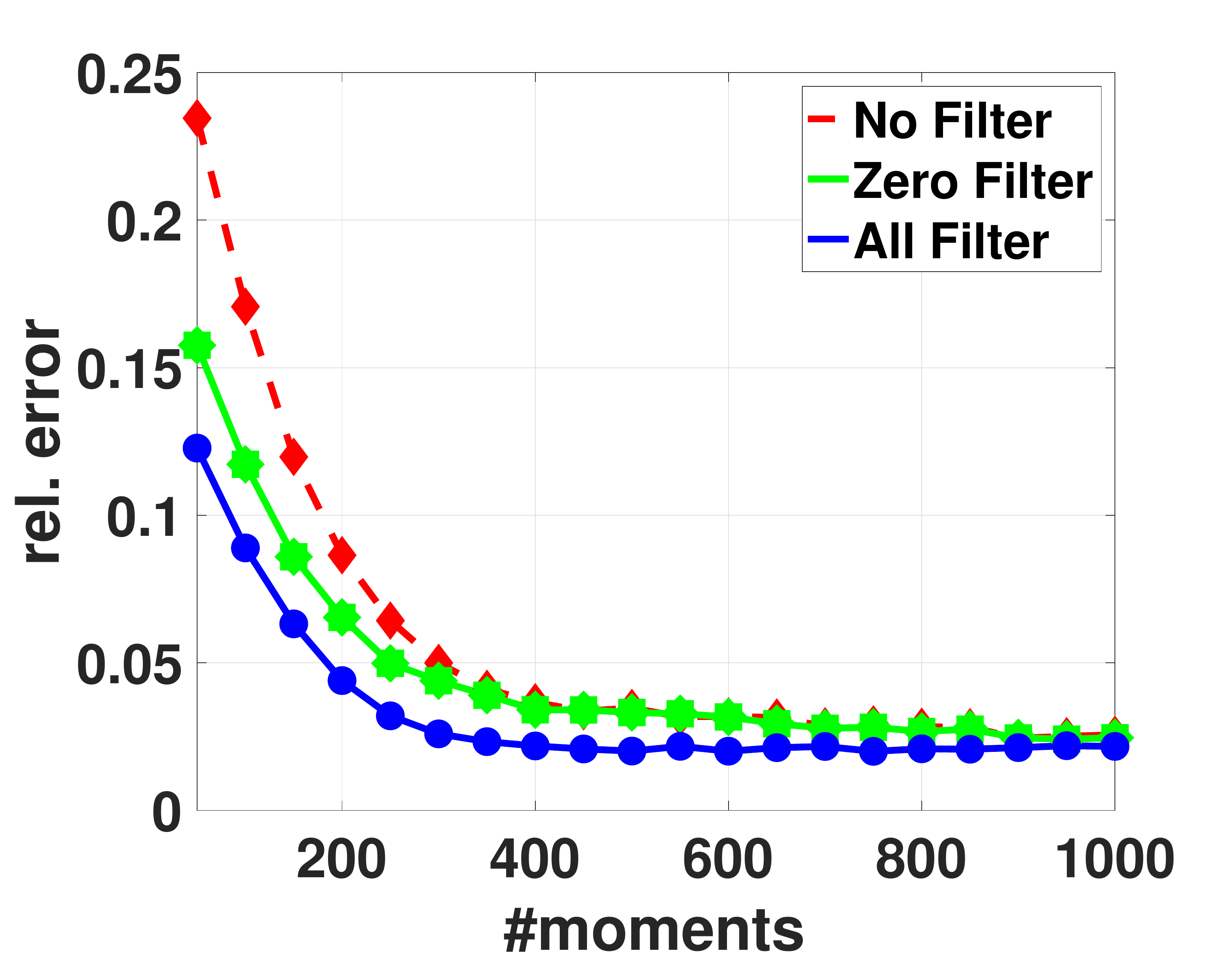}
    \caption{Relative Error}
    \label{fig:hepth_error}
  \end{subfigure}
  \caption{The improvement in accuracy of the spectral histogram
  approximation on the normalized adjacency matrix for the High Energy Physics
  Theory (HepTh) Collaboration Network, as we sweep through spectrum and
  filter out motifs. The graph has $8638$ nodes and $24816$ edges. Blue bars
  are the real spectrum, and red points are the approximated heights. 
  (\ref{fig:hepth_0filt}-\ref{fig:hepth_4filt}) use $100$ moments and $20$ 
  probe vectors. (\ref{fig:hepth_error}) shows the relative $L_1$ error of the
  spectral histogram when using no filter, filter at $\lambda=0$, and all
  filters.}
  \label{fig:motif_filt}
\end{figure}

\section{Error Analysis}
\label{sec:analysis}
%!TEX root = kdd_2019.tex

\subsection{KPM Approximation Error}
This section provides an error bound for our regularized DOS approximation
$K_\sigma\ast\mu$. We will start with the following theorem.
\begin{theorem}[Jackson's Theorem~\cite{jackson1911genauigkeit}]\label{thm:Jackson_damping}
If $f:[-1,1]\rightarrow \bbR$ is Lipschitz continuous with constant $L$, its
best degree $M$ polynomial approximation $\widehat{f}^M$ has an $L_\infty$ 
error of at most $6L/M$. The approximation can be constructed as
\begin{equation*}
\widehat{f}^M = \sum_{m=0}^MJ_mc_mT_m(x)
\end{equation*}
where $J_m$ are Jackson smoothing factors and $c_m$ are the Chebyshev
coefficients.
\end{theorem}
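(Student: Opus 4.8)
\medskip
\noindent\textbf{Proof strategy.}
The plan is to prove the (slightly stronger) statement that the explicit damped Chebyshev sum $\sum_{m=0}^{M} J_m c_m T_m$ already achieves the bound $6L/M$; the claim for the genuine \emph{best} degree-$M$ approximation then follows immediately, since the best approximation can only do better. First I would transfer the problem from the interval to the circle via the substitution $x=\cos\theta$: set $g(\theta)=f(\cos\theta)$, an even $2\pi$-periodic function that is Lipschitz with the same constant $L$ because $\cos$ is $1$-Lipschitz (indeed $|\cos\theta_1-\cos\theta_2|\le 2|\sin\tfrac{\theta_1-\theta_2}{2}|\le d(\theta_1,\theta_2)$). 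Under this substitution, algebraic polynomials of degree $\le M$ in $x$ correspond exactly to even trigonometric polynomials of degree $\le M$ in $\theta$, and the Chebyshev series of $f$ corresponds to the Fourier cosine series of $g$. So it suffices to approximate $g$ by an even trigonometric polynomial of degree $\le M$.

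Next I would introduce the Jackson kernel: a normalized, nonnegative, even trigonometric polynomial $K_n$ of degree $2n-2$, taken proportional to $\left(\sin(n\theta/2)/\sin(\theta/2)\right)^{4}$ and scaled so that $\frac{1}{2\pi}\int_{-\pi}^{\pi}K_n(\theta)\,d\theta=1$. The decisive estimate is a bound on its first absolute moment, $\frac{1}{2\pi}\int_{-\pi}^{\pi}|\theta|\,K_n(\theta)\,d\theta\le c/n$, which follows from elementary inequalities such as $|\sin(n\theta/2)|\le n|\theta|/2$ and $|\sin(\theta/2)|\ge|\theta|/\pi$ on $[-\pi,\pi]$, together with a matching lower bound on the normalization constant. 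Choosing $n$ as large as possible subject to $2n-2\le M$ and carefully tracking these constants is what produces the clean factor $6$.

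Then I would define the approximant by circular convolution, $g_M=K_n\ast g$, which is again an even trigonometric polynomial of degree $\le M$ because $K_n$ is, and estimate
\[
\|g-g_M\|_\infty \;=\; \sup_\theta\Bigl|\tfrac{1}{2\pi}\!\int K_n(t)\,\bigl(g(\theta)-g(\theta-t)\bigr)\,dt\Bigr| \;\le\; \tfrac{L}{2\pi}\!\int K_n(t)\,|t|\,dt \;\le\; \frac{6L}{M},
\]
using nonnegativity of $K_n$, the Lipschitz bound on $g$, and the moment estimate. Finally I would translate back to the interval: in the Fourier basis, convolution with $K_n$ simply multiplies the $m$-th coefficient of $g$ by the $m$-th Fourier coefficient of $K_n$; naming these numbers $J_m$ (the Jackson smoothing factors, which satisfy $0\le J_m\le 1$ and $J_0=1$) and writing $c_m$ for the Chebyshev coefficients of $f$, the identity $g_M(\theta)=\sum_{m=0}^{M}J_m c_m\cos(m\theta)$ becomes $\widehat{f}^{M}(x)=\sum_{m=0}^{M}J_m c_m T_m(x)$ with $\|f-\widehat{f}^{M}\|_\infty\le 6L/M$. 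The main obstacle is the second step: producing a \emph{nonnegative polynomial} kernel whose first moment is $O(1/n)$ rather than $O(\log n/n)$ (this is exactly why the fourth power is needed instead of the bare Fej\'er kernel) and pinning the constant down to $6$ instead of some larger unspecified value; for that final bookkeeping I would lean on the classical estimates in the approximation-theory literature rather than reprove every inequality here.
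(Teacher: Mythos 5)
Your argument is correct and is the standard constructive proof of Jackson's theorem; the paper itself offers no proof of this statement but cites it from the classical literature, which proceeds exactly as you propose --- transfer to the circle via $x=\cos\theta$, convolve with the nonnegative fourth-power (Jackson) kernel of degree $2n-2$ whose first absolute moment is $O(1/n)$, and read off the damping factors $J_m$ as that kernel's Fourier coefficients, so that the damped sum $\sum_{m=0}^{M}J_m c_m T_m$ is the convolution written in the Chebyshev basis. Your reading that it suffices to bound the damped approximant (the best degree-$M$ polynomial can only do better) is the right way to handle the theorem's slightly loose phrasing, and the only piece you defer --- the bookkeeping that pins the constant to $6$ after imposing $2n-2\le M$, rather than settling for an unspecified $c$ --- is precisely where the classical treatments spend their effort, so the proposal is sound as an outline.
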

We can pick a smooth mollifier $K$ with $\text{Lip}(K)=1$. For any $\nu\in\bbR$
and $\lambda\in [-1,1]$ there exists a degree $M$ polynomial such that 
\begin{equation*}
|K_\sigma(\nu-\lambda)-\widehat{K}^{M}_\sigma(\nu-\lambda)| < \frac{6L}{M\sigma}
\end{equation*}
Define $\widehat{\mu}^M = \sum_{m=0}^MJ_md_m\phi_m$ to be the truncated DOS
series,
\begin{equation*}
\int_{-1}^1 \widehat{f}^M(\lambda)\mu(\lambda)d\lambda = \int_{-1}^1f
(\lambda)\widehat{\mu}^M(\lambda)d\lambda = \sum_{m=0}^MJ_mc_md_m.
\end{equation*}
Therefore,
\begin{align*}
\|K_\sigma\ast(\mu-\widehat{\mu}^M)\|_\infty
= &\max_\nu\left|\int_{-1}^1K_\sigma(\nu-\lambda)(\mu(\lambda)-\widehat{\mu}^M
(\lambda))d\lambda\right|\\
\leq &\max_\nu\int_{-1}^1|K_\sigma(\nu-\lambda)-\widehat{K}^M_\sigma
(\nu-\lambda)|\mu(\lambda)d\lambda\\
\leq &\frac{6L}{M\sigma}.
\end{align*}
Consider $\widetilde{\mu}^M$ to be the degree $M$ approximation from KPM,
\begin{equation*}
\|K_\sigma\ast(\mu-\widetilde{\mu}^M)\|_\infty \leq \|K_\sigma\ast(\mu-
\widehat{\mu}^M)\|_\infty+ \|K_\sigma\|_\infty\|\widehat{\mu}^M - 
\widetilde{\mu}^M\|_1.
\end{equation*}
If we use a probe $z$ with independent standard normal entries for the
trace estimation,
\begin{equation*}
\widetilde{\mu}(\lambda) = \sum_{i=1}^Nw_i^2\delta(\lambda-\lambda_i)
\end{equation*}
where $w=Q^Tz$ is the weight for $z$ in the eigenbasis. Hence
\begin{equation*}
\|\widehat{\mu}^M-\widetilde{\mu}^M\|_1 \leq \sum_{i=1}^N |1-w_i^2|.
\end{equation*}
Finally,
\begin{equation*}
\bbE\left[\|K_\sigma\ast(\mu-\widetilde{\mu}^M)\|\right]\leq \frac{1}
{\sigma}\left(\frac{6L}{M}+\|K\|_\infty\bbE[|1-w_1^2|]\right)
\end{equation*}
If we take $N_z$ independent probe vectors, then
$N_zw_1^2\sim\chi^2(N_z)$, which means the expectation decays asymptotically
like $\sqrt{2 / (\pi N_z)}$.

\subsection{Perturbation Analysis}
In this section, we limit our attention to symmetric graph matrix $H$.
Extracting graph information using DOS, whether as a distribution for functions
on a graph or as a direct feature in the form of spectral moments, requires
stability under small perturbations. In the case of removing/adding a few 
number of nodes/edges, the Cauchy Interlacing Theorem~\cite{magnus1988matrix}
gives a bound on each individual new eigenvalue by the old ones. For example, 
if we remove $r\ll N$ nodes to get a new graph matrix $
\widetilde{H}$, then
\begin{equation}\label{eqn:interlacing}
\lambda_i(H)\leq \lambda_i(\widetilde{H})\leq \lambda_{i+r}(H)\quad\text{for
}\quad i\leq N-r
\end{equation}
However, this bound may not be helpful when the impact of the change is not
reflected by its size. Hence, we provide a theorem that relates the Wasserstein
distance (see equation \ref{eqn:waisserstein}) change and the Frobenius norm of
the perturbation. Without loss of generality, we assume the eigenvalues of $H$
lie in $[-1,1]$ already.

\begin{theorem}
Suppose $\widetilde{H} = H + \delta H$ is the perturbed graph matrix with
spectral density $\widetilde{\mu}$, then
\begin{equation*}
W_1(\mu, \widetilde{\mu}) \leq \|\delta H\|_F
\end{equation*}
\end{theorem}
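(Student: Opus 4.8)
The plan is to bound $W_1(\mu,\widetilde{\mu})$ by exhibiting one explicit coupling of the two empirical spectral measures and then invoking a classical eigenvalue perturbation inequality. Let $\lambda_1\le\cdots\le\lambda_N$ and $\widetilde{\lambda}_1\le\cdots\le\widetilde{\lambda}_N$ be the eigenvalues of $H$ and $\widetilde{H}$ in ascending order, so that $\mu=\frac1N\sum_i\delta(\lambda-\lambda_i)$ and $\widetilde{\mu}=\frac1N\sum_i\delta(\lambda-\widetilde{\lambda}_i)$. Transporting mass $1/N$ from each $\lambda_i$ to the correspondingly ranked $\widetilde{\lambda}_i$ is a valid transport plan between $\mu$ and $\widetilde{\mu}$; since $W_1$ (equivalently, the Kantorovich--Rubinstein dual written in (\ref{eqn:waisserstein})) is the infimum of transport cost over all such plans, this already gives $W_1(\mu,\widetilde{\mu})\le\frac1N\sum_{i=1}^N|\lambda_i-\widetilde{\lambda}_i|$.

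Next I would pass from the $\ell^1$ to the $\ell^2$ norm of the eigenvalue differences by Cauchy--Schwarz, $\frac1N\sum_{i=1}^N|\lambda_i-\widetilde{\lambda}_i|\le\frac{1}{\sqrt N}\big(\sum_{i=1}^N|\lambda_i-\widetilde{\lambda}_i|^2\big)^{1/2}$, and then apply the Hoffman--Wielandt inequality for Hermitian matrices: for symmetric $H$ and $\widetilde{H}$ with eigenvalues sorted in the same order, $\sum_{i=1}^N|\lambda_i(H)-\lambda_i(\widetilde{H})|^2\le\|\widetilde{H}-H\|_F^2=\|\delta H\|_F^2$. Chaining the three bounds gives $W_1(\mu,\widetilde{\mu})\le\frac{1}{\sqrt N}\|\delta H\|_F\le\|\delta H\|_F$; the $1/\sqrt N$ factor shows the statement as written carries some slack, which one may keep or discard.

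The only non-elementary ingredient is Hoffman--Wielandt, so that is the one place worth care. If a self-contained derivation is wanted in place of a citation, I would argue directly in the symmetric case: writing $H=Q\Lambda Q^T$ and $\widetilde{H}=\widetilde{Q}\widetilde{\Lambda}\widetilde{Q}^T$, expand $\|\delta H\|_F^2=\tr(H^2)+\tr(\widetilde{H}^2)-2\tr(H\widetilde{H})=\sum_i\lambda_i^2+\sum_i\widetilde{\lambda}_i^2-2\sum_{i,j}\lambda_i\widetilde{\lambda}_j S_{ij}$, where $S_{ij}=(q_i^T\widetilde{q}_j)^2$ is doubly stochastic; then Birkhoff's theorem reduces the maximization of $\sum_{i,j}\lambda_i\widetilde{\lambda}_j S_{ij}$ over doubly stochastic $S$ to permutation matrices, and the rearrangement inequality shows the identity permutation is optimal for ascending spectra, giving $\sum_{i,j}\lambda_i\widetilde{\lambda}_j S_{ij}\le\sum_i\lambda_i\widetilde{\lambda}_i$ and hence $\sum_i(\lambda_i-\widetilde{\lambda}_i)^2\le\|\delta H\|_F^2$. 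I expect the main thing to watch is consistency of the pairing: the coupling bound and the Hoffman--Wielandt bound must both use the same (sorted) matching of $\{\lambda_i\}$ with $\{\widetilde{\lambda}_i\}$ for the chain of inequalities to close.
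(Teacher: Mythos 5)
Your proof is correct, but it takes a genuinely different route from the paper's. The paper works entirely on the dual side of (\ref{eqn:waisserstein}): it writes $W_1(\mu,\widetilde{\mu}) = \frac{1}{N}\sup_{\text{Lip}(f)\leq 1}\tr\bigl(f(H)-f(\widetilde{H})\bigr)$, bounds the normalized trace by $\sup_{\|v\|=1} v^T\bigl(f(H)-f(\widetilde{H})\bigr)v$, and then invokes a matrix-function perturbation result (Higham, Theorem 3.8) based on the Fr\'echet derivative, $\|f(H)-f(\widetilde{H})\|_2 \leq \text{Lip}(f)\|\delta H\|_F + o(\|\delta H\|_F)$. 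You instead work on the primal (transport) side: the sorted-eigenvalue coupling gives $W_1(\mu,\widetilde{\mu}) \leq \frac{1}{N}\sum_{i}|\lambda_i-\widetilde{\lambda}_i|$, and Cauchy--Schwarz plus Hoffman--Wielandt yield $\frac{1}{\sqrt{N}}\|\delta H\|_F \leq \|\delta H\|_F$. Your route buys three things: it is fully non-asymptotic (the paper's bound carries an $o(\|\delta H\|_F)$ remainder from the Fr\'echet-derivative step, so it is strictly a first-order statement), it sharpens the constant by the factor $1/\sqrt{N}$ you note, and it sidesteps the delicate issue of whether scalar Lipschitz functions are operator Lipschitz in the spectral norm, which the paper's citation glosses over. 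The paper's route, in exchange, stays within the function-of-a-matrix framework already used in the KPM error analysis and requires no eigenvalue-matching lemma. Two small points to make explicit in a final write-up: the paper's class $\calL$ requires $f(0)=0$, which is harmless here because $\mu$ and $\widetilde{\mu}$ have equal total mass, and only the easy direction of Kantorovich--Rubinstein duality is needed (any coupling upper-bounds the supremum over $1$-Lipschitz test functions), which is exactly how your first step should be phrased since (\ref{eqn:waisserstein}) is stated as a supremum rather than an infimum over transport plans; your insistence on using the same sorted pairing in both the coupling and Hoffman--Wielandt is indeed the right consistency check.
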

\begin{proof}
Let $\calL$ be the space of Lipschitz functions with $f(0)=0$.
\begin{align*}\label{eqn:proof1}
W_1(\mu, \widetilde{\mu}) =& \sup_{f\in\calL, \text{Lip}(f)=1} \int f(\lambda)
(\mu(\lambda)-\widetilde{\mu}(\lambda))d\lambda \\
=&\frac{1}{N}\sup_{f\in\calL, \text{Lip}(f)=1}\tr(f(H)-f(\widetilde{H}))\\
\leq & \sup_{f\in\calL, \text{Lip}(f)=1, \|v\|=1} v^T(f(H)-f(\widetilde{H}))v.
\end{align*}
By Theorem 3.8 from Higham~\cite{higham2008functions}, the perturbation on
$f(H)$ is bounded by the Fr\'{e}chet derivative,
\begin{equation*}
\|f(H) - f(\widetilde{H})\|_2\leq \text{Lip}(f)\|\delta H\|_F + o(\|\delta
H\|_F).
\end{equation*}
\end{proof}

\section{Experiments}
\label{sec:experiments}
%!TEX root = kdd_2019.tex

\begin{figure*}
  \begin{subfigure}{0.19\textwidth}
    \centering  
    \captionsetup{justification=centering}
    \includegraphics[width=\textwidth,trim = .4cm 0.5cm 3.5cm 1.3cm,clip]
    {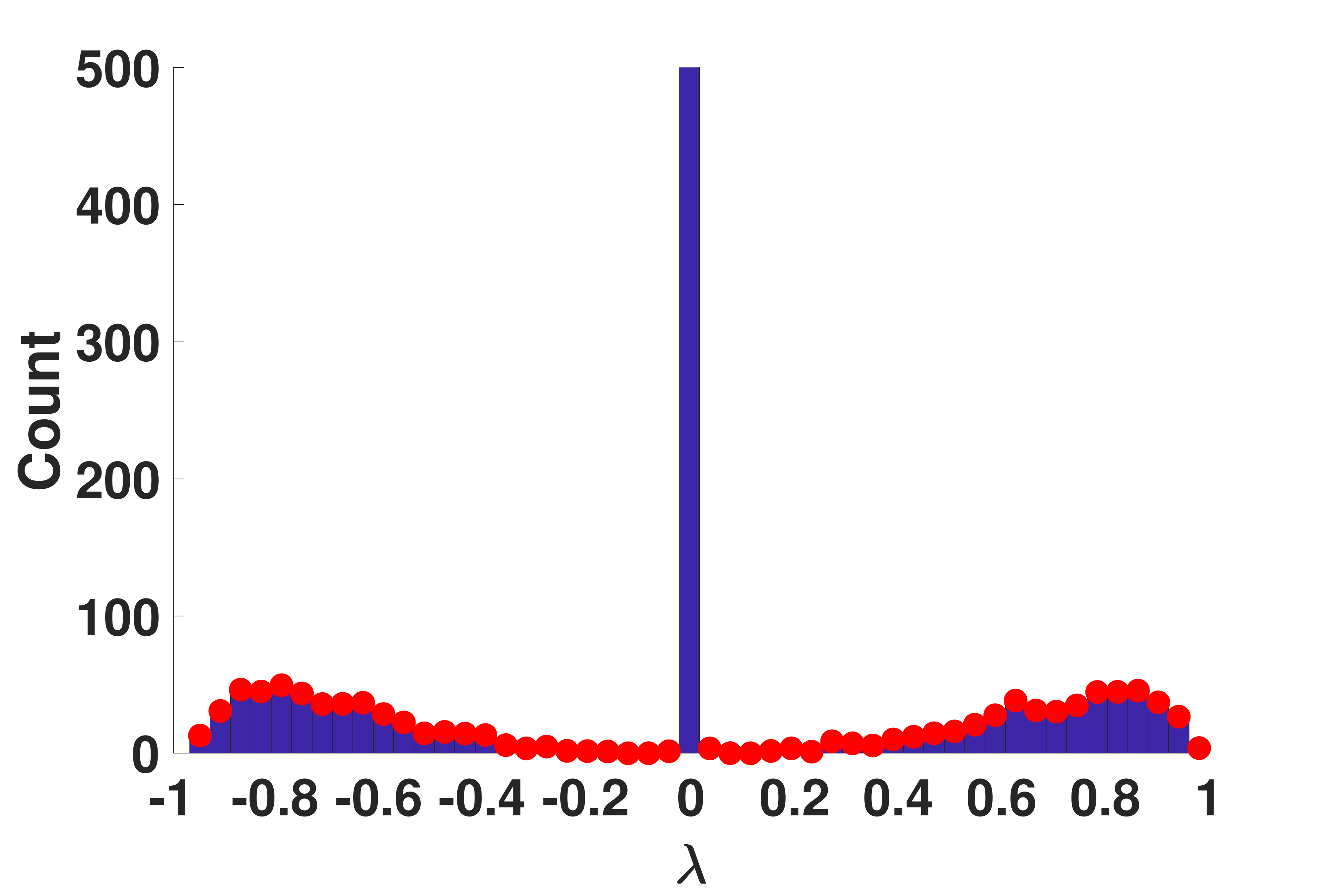}
    \label{fig:erdos_dos}
  \end{subfigure}
  \begin{subfigure}{0.19\textwidth}
    \centering  
    \captionsetup{justification=centering}
    \includegraphics[width=\textwidth,trim = .4cm 0.5cm 3.5cm 1.3cm,clip]
    {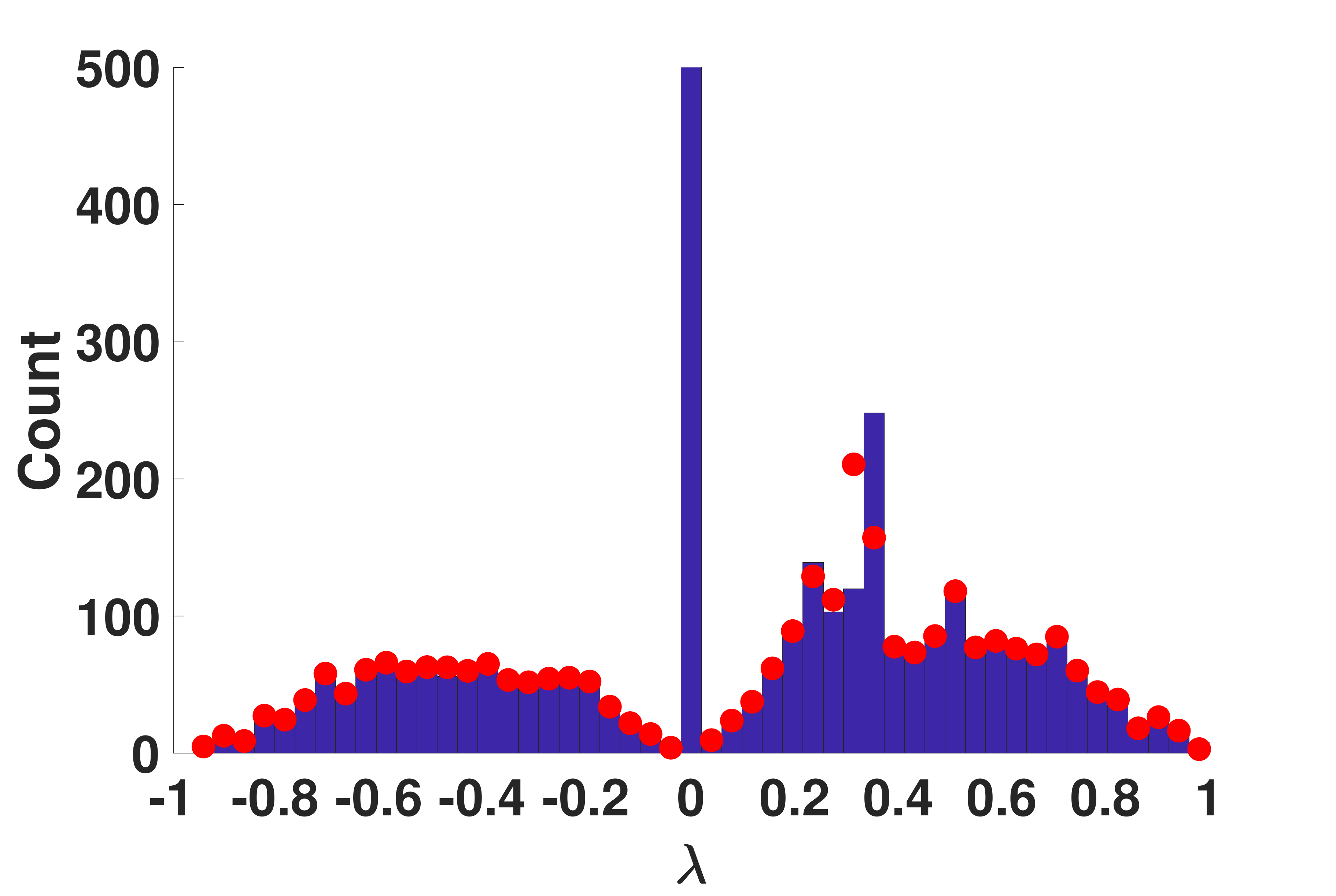}
    \label{fig:as_dos}
  \end{subfigure}
  \begin{subfigure}{0.19\textwidth}
    \centering
    \captionsetup{justification=centering}
    \includegraphics[width=\textwidth,trim = .4cm 0.5cm 3.5cm 1.3cm,clip]
    {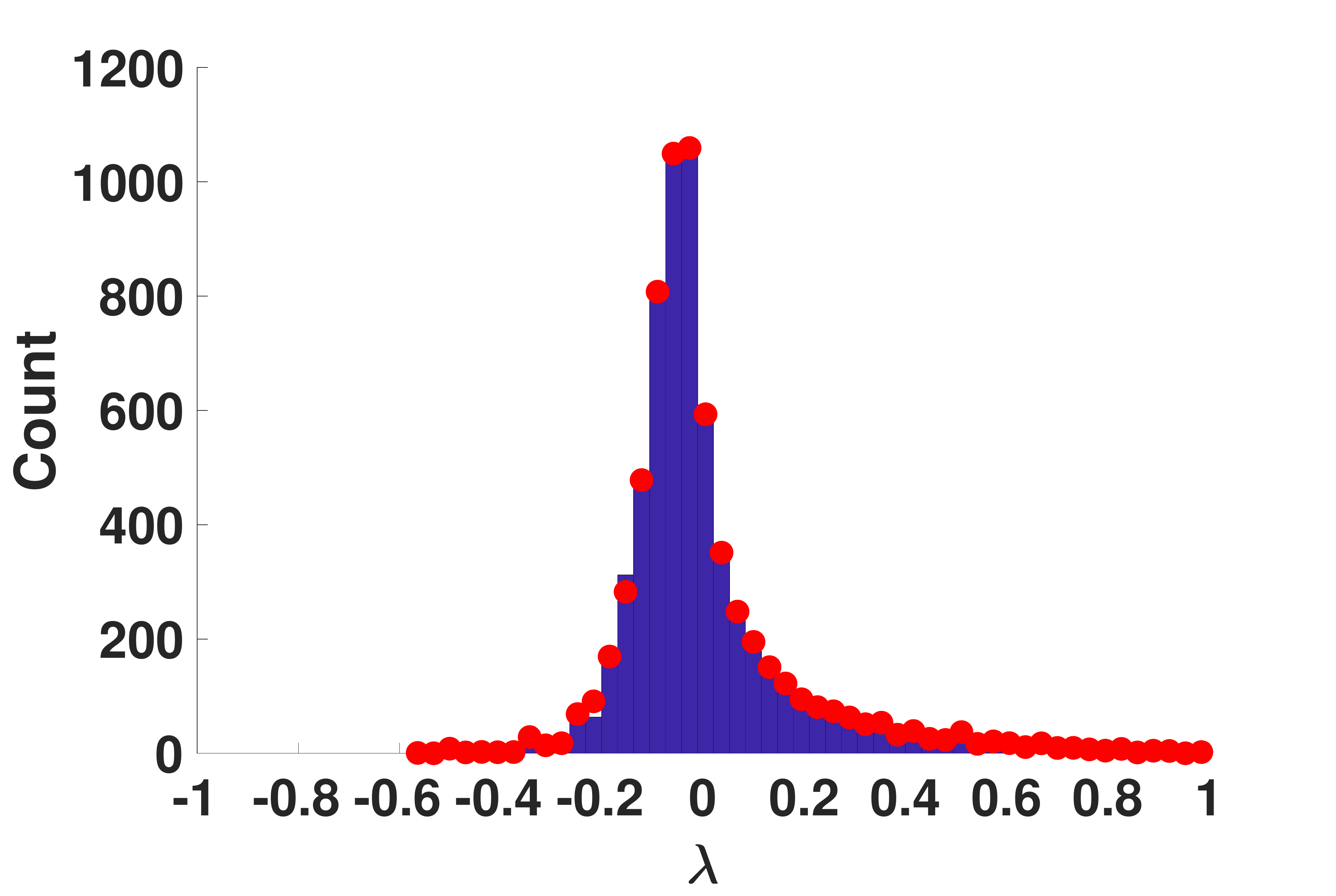}
    \label{fig:marvel_dos}
  \end{subfigure}
  \begin{subfigure}{0.19\textwidth}
    \centering  
    \captionsetup{justification=centering}
    \includegraphics[width=\textwidth,trim = .4cm 0.5cm 3.5cm 1.3cm,clip]
    {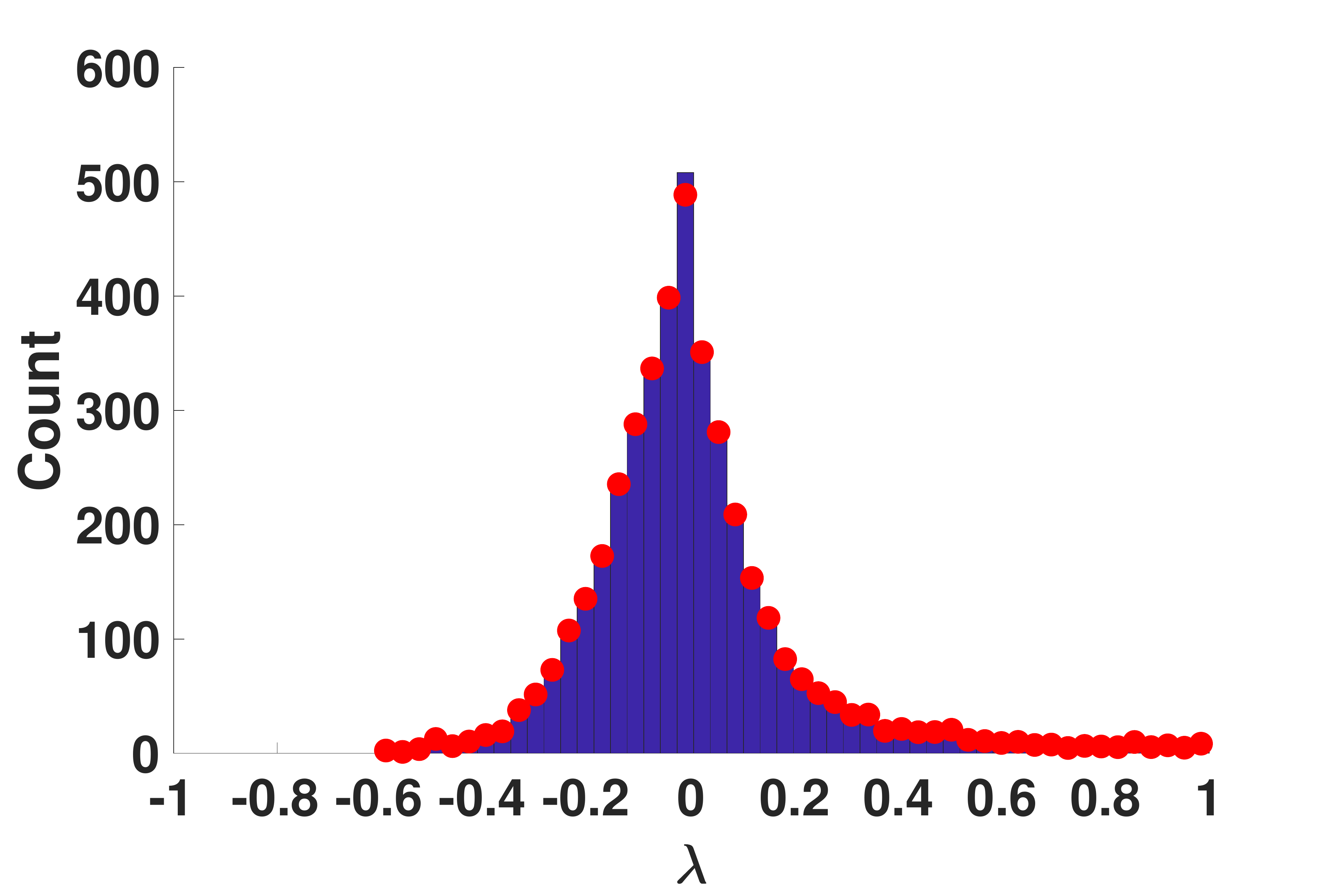}
    \label{fig:facebook_dos}
  \end{subfigure}
  \begin{subfigure}{0.19\textwidth}
    \centering  
    \captionsetup{justification=centering}
    \includegraphics[width=\textwidth,trim = .4cm 0.5cm 3.5cm 1.3cm,clip]
    {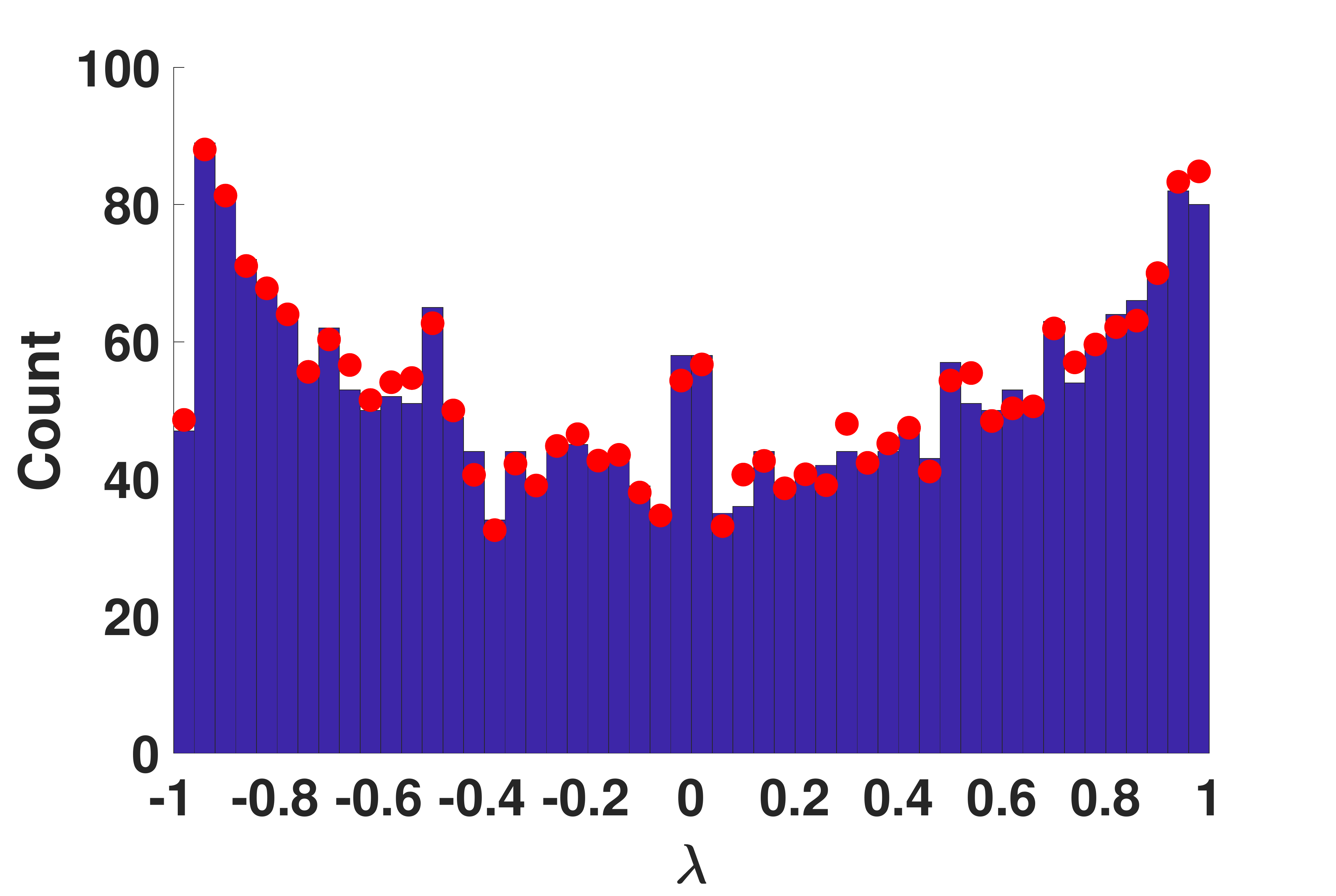}
    \label{fig:minnesota_dos}
  \end{subfigure}
  \begin{subfigure}{0.19\textwidth}
    \centering  
    \captionsetup{justification=centering}
    \includegraphics[width=\textwidth,trim = .4cm 0.5cm 3.5cm 1.3cm,clip]
    {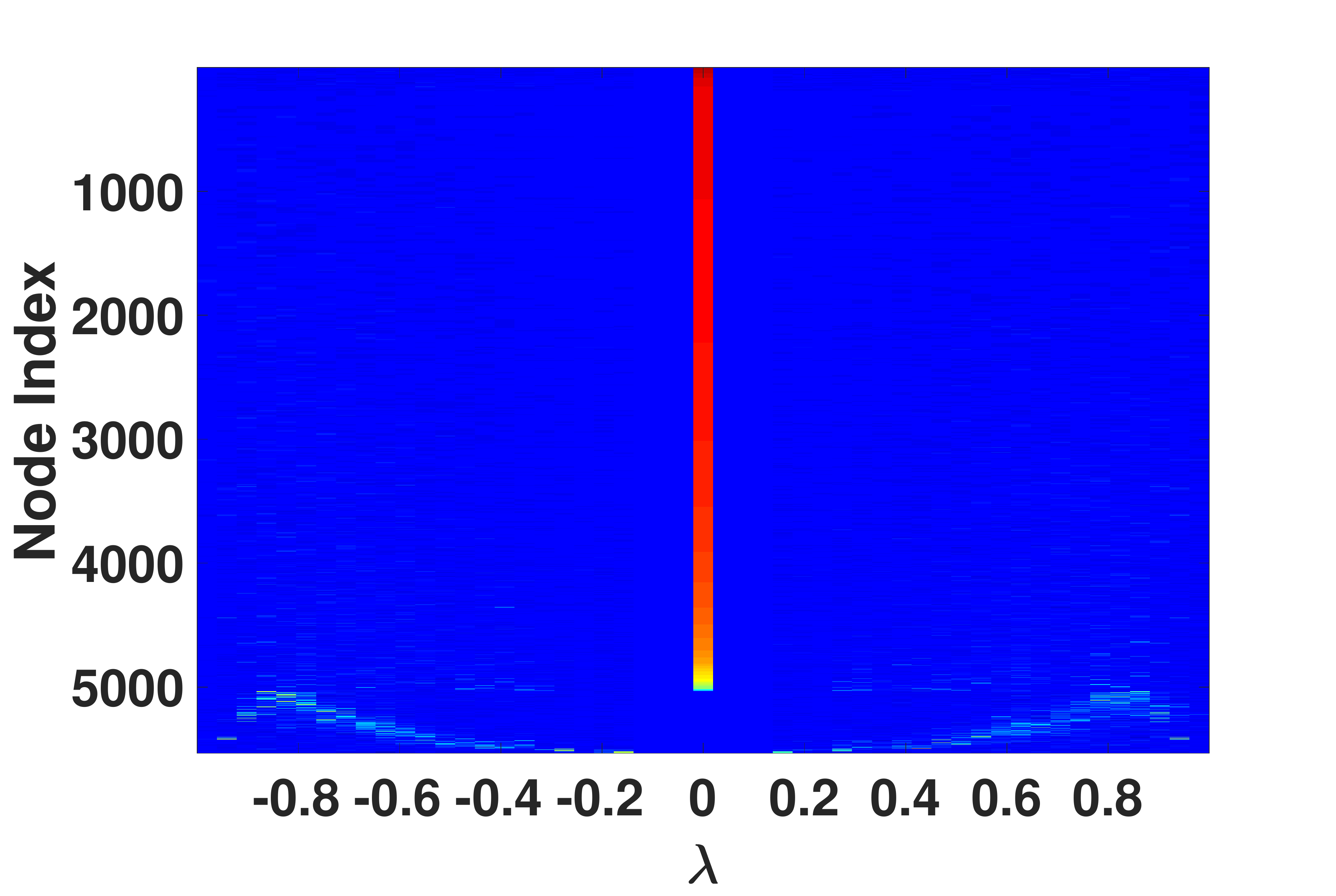}
    \caption{Erd\H{o}s Collaboration Network}
    \label{fig:erdos_ldos}
  \end{subfigure}
  \begin{subfigure}{0.19\textwidth}
    \centering  
    \captionsetup{justification=centering}
    \includegraphics[width=\textwidth,trim = .4cm 0.5cm 3.5cm 1.3cm,clip]
    {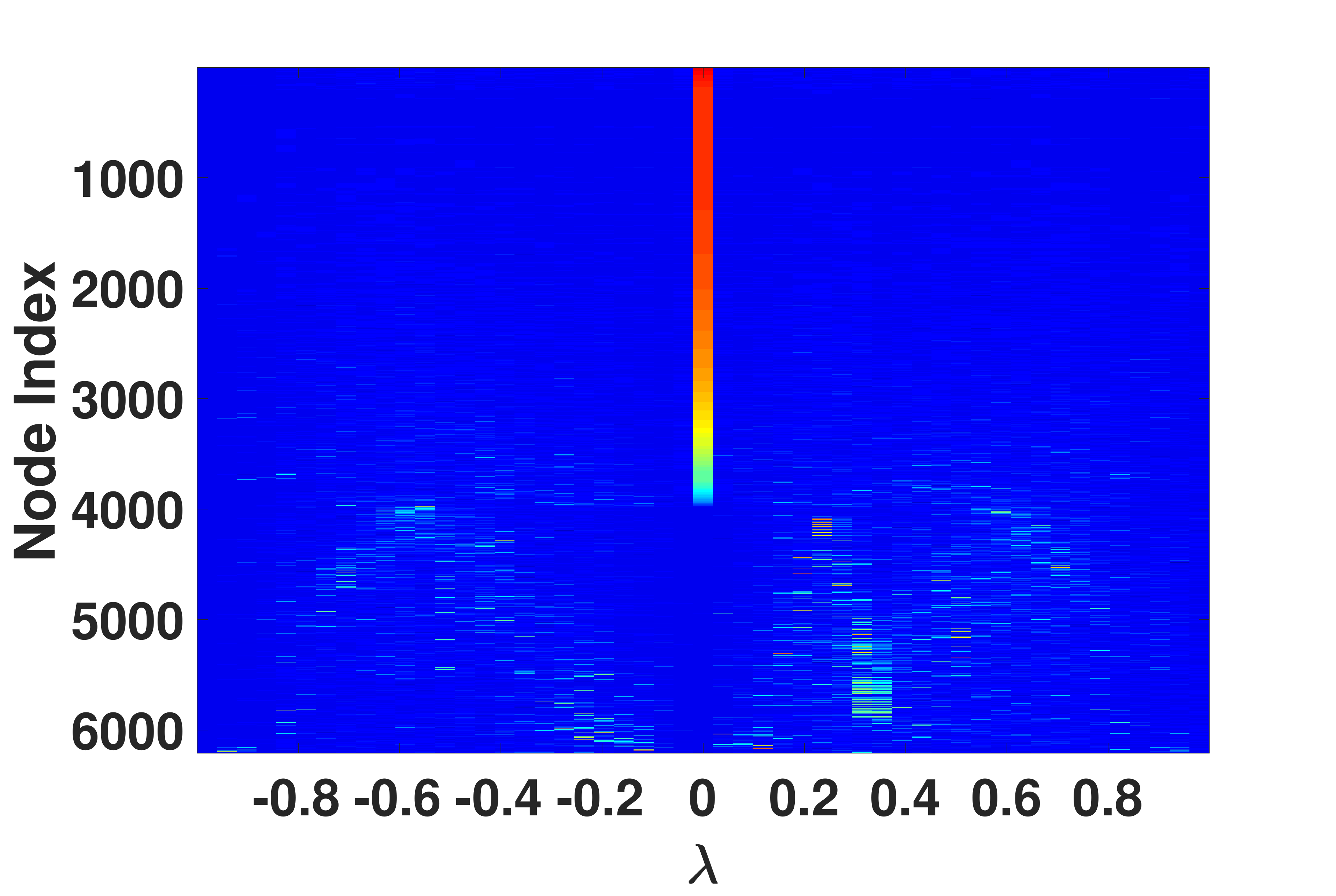}
    \caption{Autonomous System Network (1999)}
    \label{fig:as_ldos}
  \end{subfigure}
  \begin{subfigure}{0.19\textwidth}
    \centering  
    \captionsetup{justification=centering}
    \includegraphics[width=\textwidth,trim = .4cm 0.5cm 3.5cm 1.3cm,clip]
    {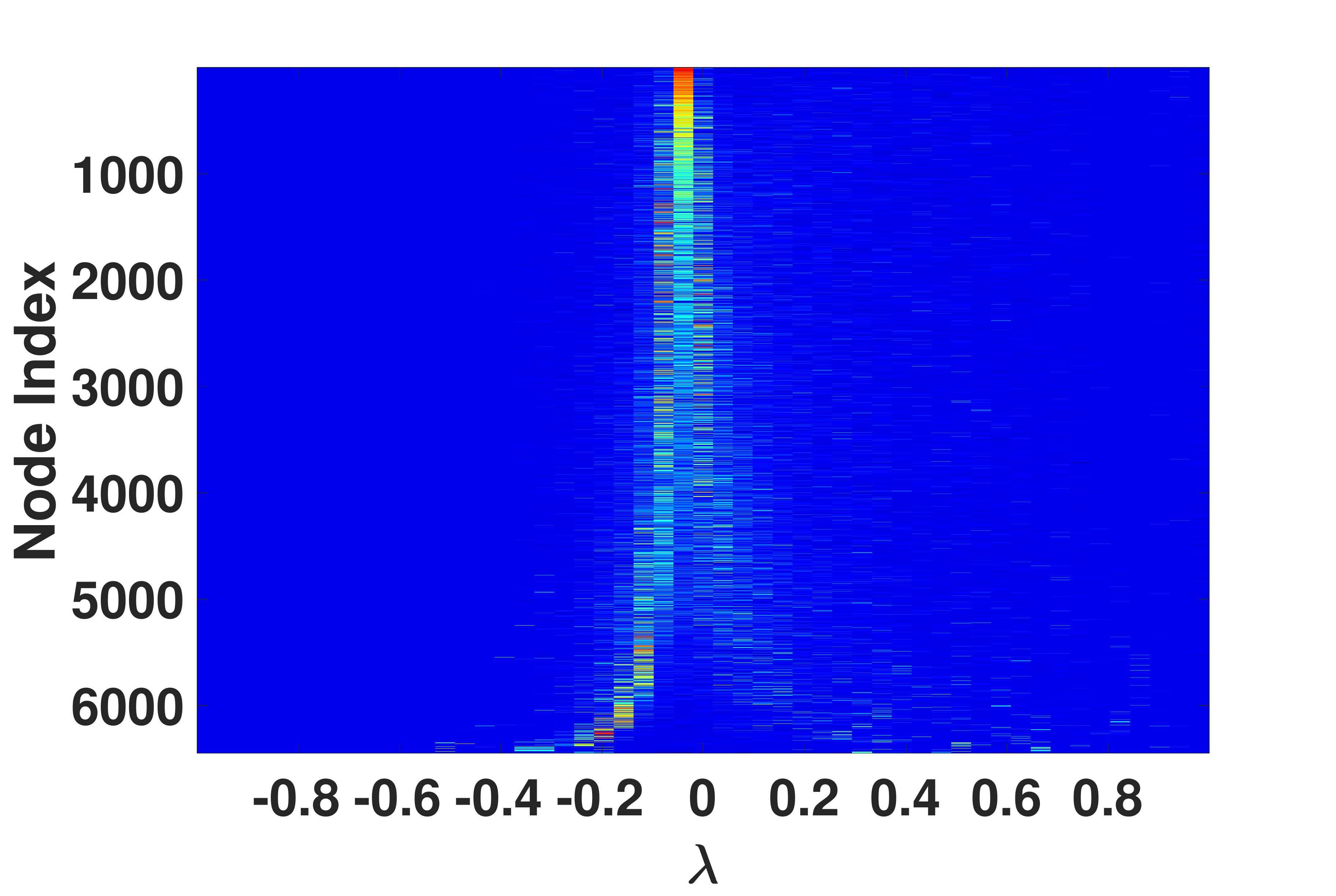}{}
    \caption{Marvel Characters Network}
    \label{fig:marvel_ldos}
  \end{subfigure}
  \begin{subfigure}{0.19\textwidth}
    \centering  
    \captionsetup{justification=centering}
    \includegraphics[width=\textwidth,trim = .4cm 0.5cm 3.5cm 1.3cm,clip]
    {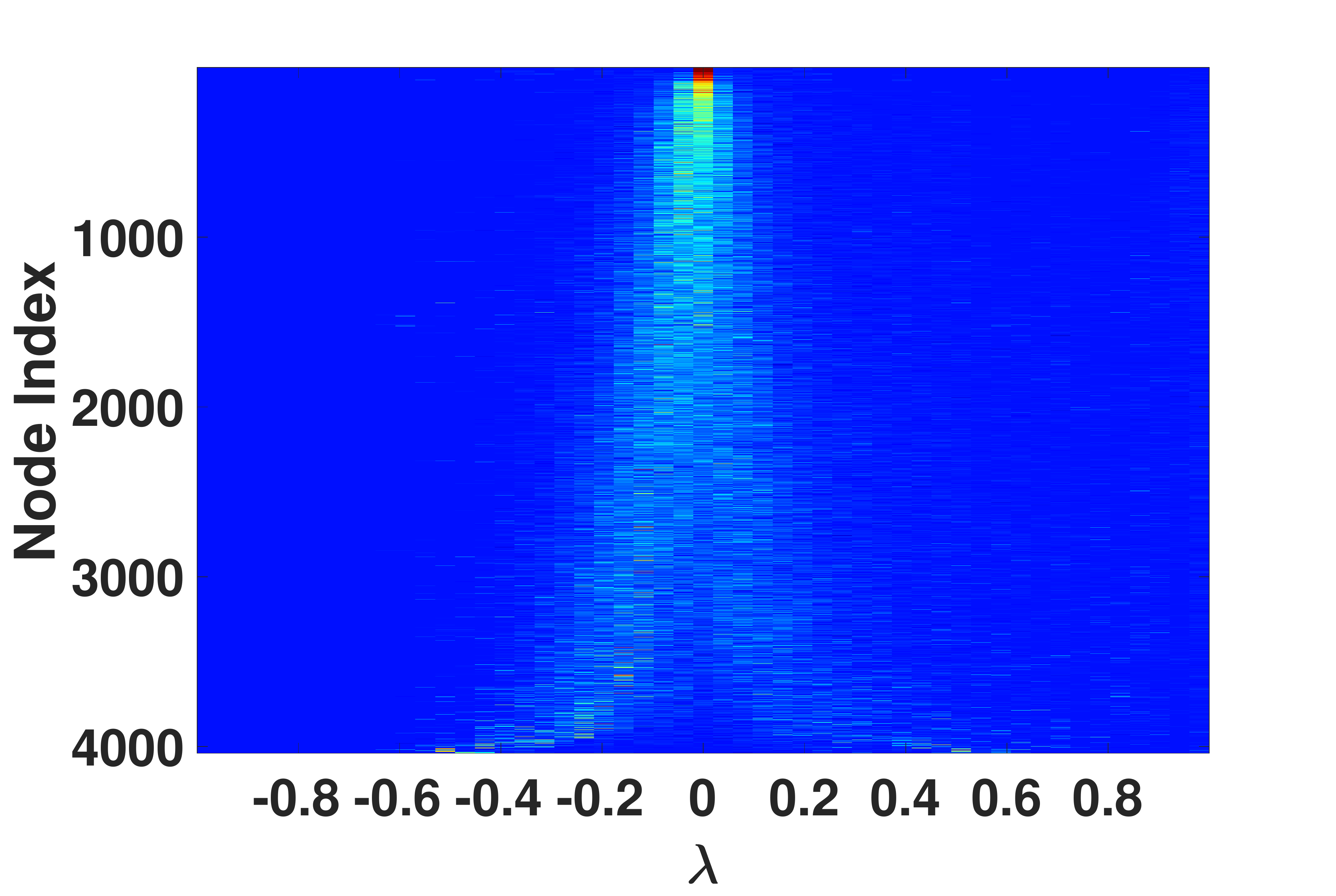}
    \caption{Facebook Ego Networks}
    \label{fig:facebook_ldos}
  \end{subfigure}
  \begin{subfigure}{0.19\textwidth}
    \centering  
    \captionsetup{justification=centering}
    \includegraphics[width=\textwidth,trim = .4cm 0.5cm 3.5cm 1.3cm,clip]
    {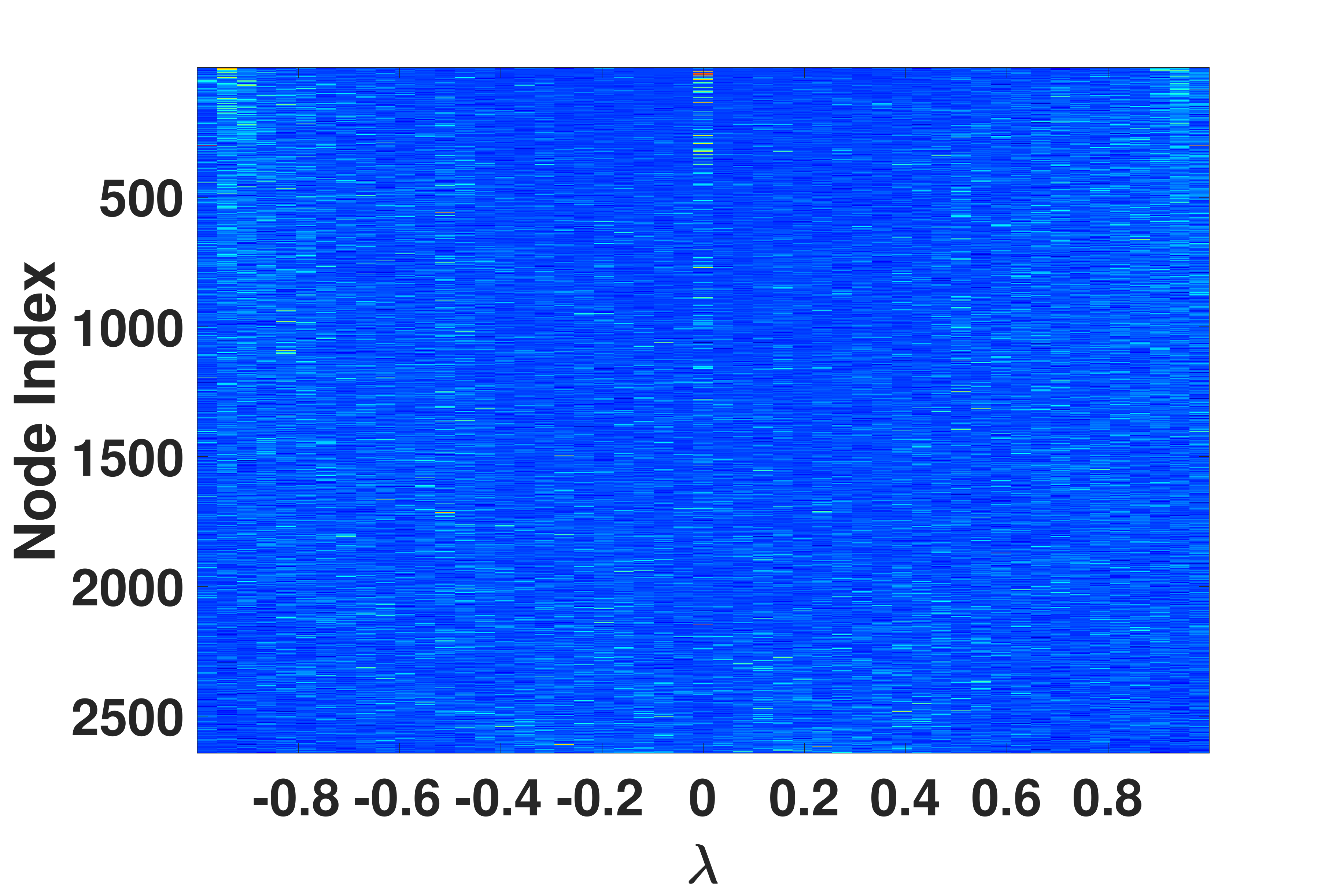}
    \caption{Minnesota Road Network}
    \label{fig:minnesota_ldos}
  \end{subfigure}
  \begin{subfigure}{0.19\textwidth}
    \centering  
    \captionsetup{justification=centering}
    \includegraphics[width=\textwidth,trim = .4cm 0.5cm 3.5cm 1.3cm,clip]
    {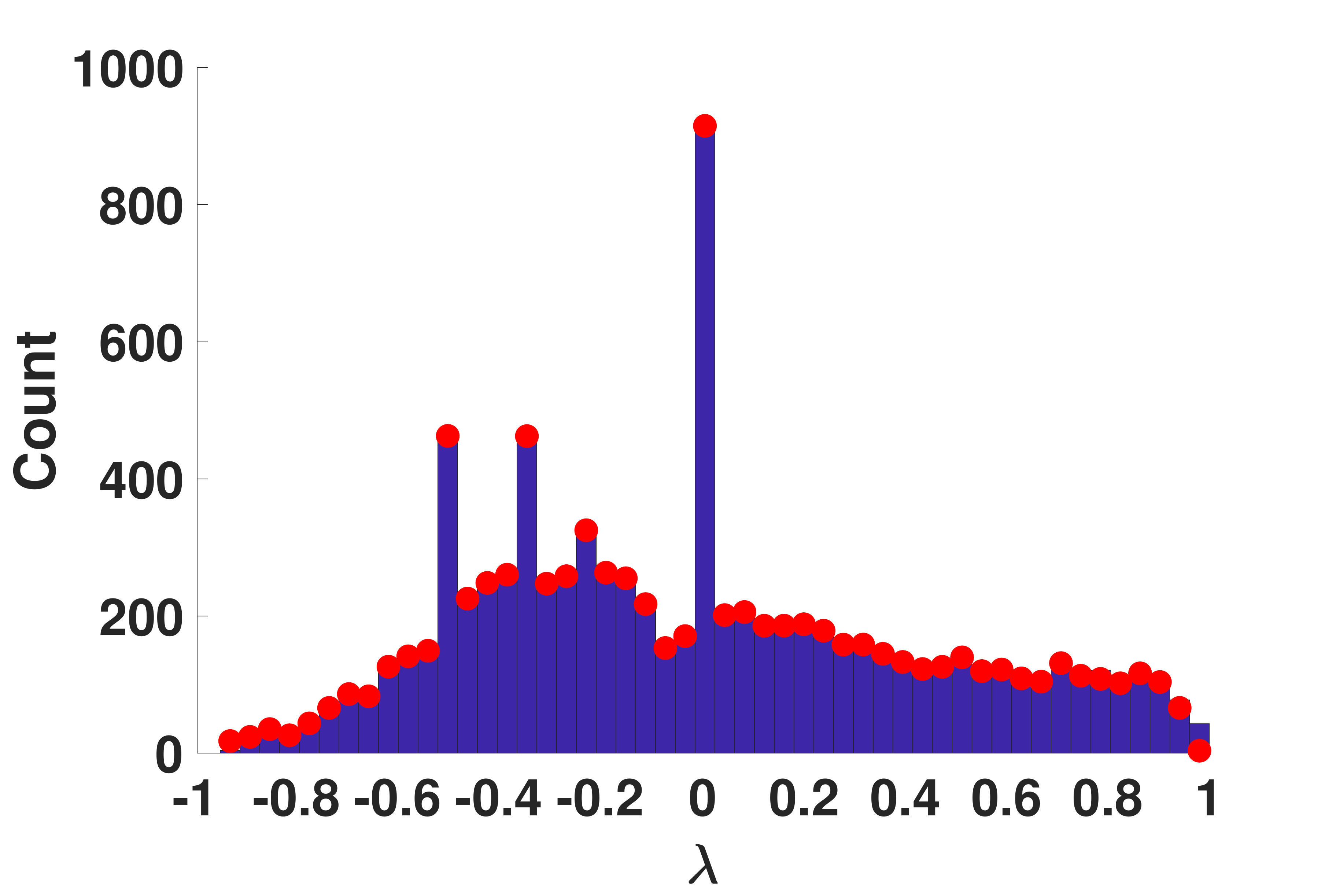}
    \label{fig:hepth_dos}
  \end{subfigure}
  \begin{subfigure}{0.19\textwidth}
    \centering  
    \captionsetup{justification=centering}
    \includegraphics[width=\textwidth,trim = .4cm 0.5cm 3.5cm 1.3cm,clip]
    {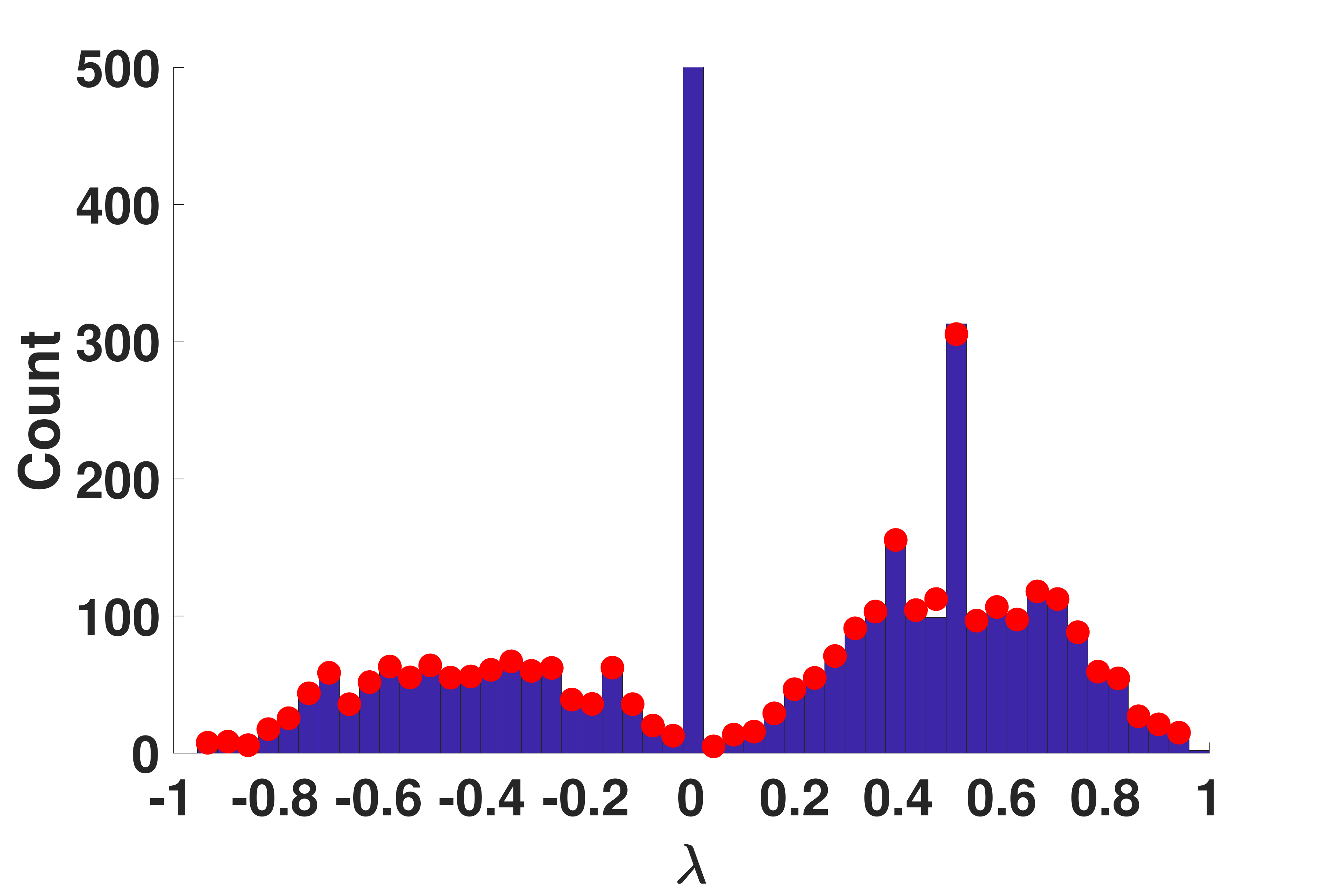}
    \label{fig:as2_dos}
  \end{subfigure}
  \begin{subfigure}{0.19\textwidth}
    \centering
    \captionsetup{justification=centering}
    \includegraphics[width=\textwidth,trim = .4cm 0.5cm 3.5cm 1.3cm,clip]
    {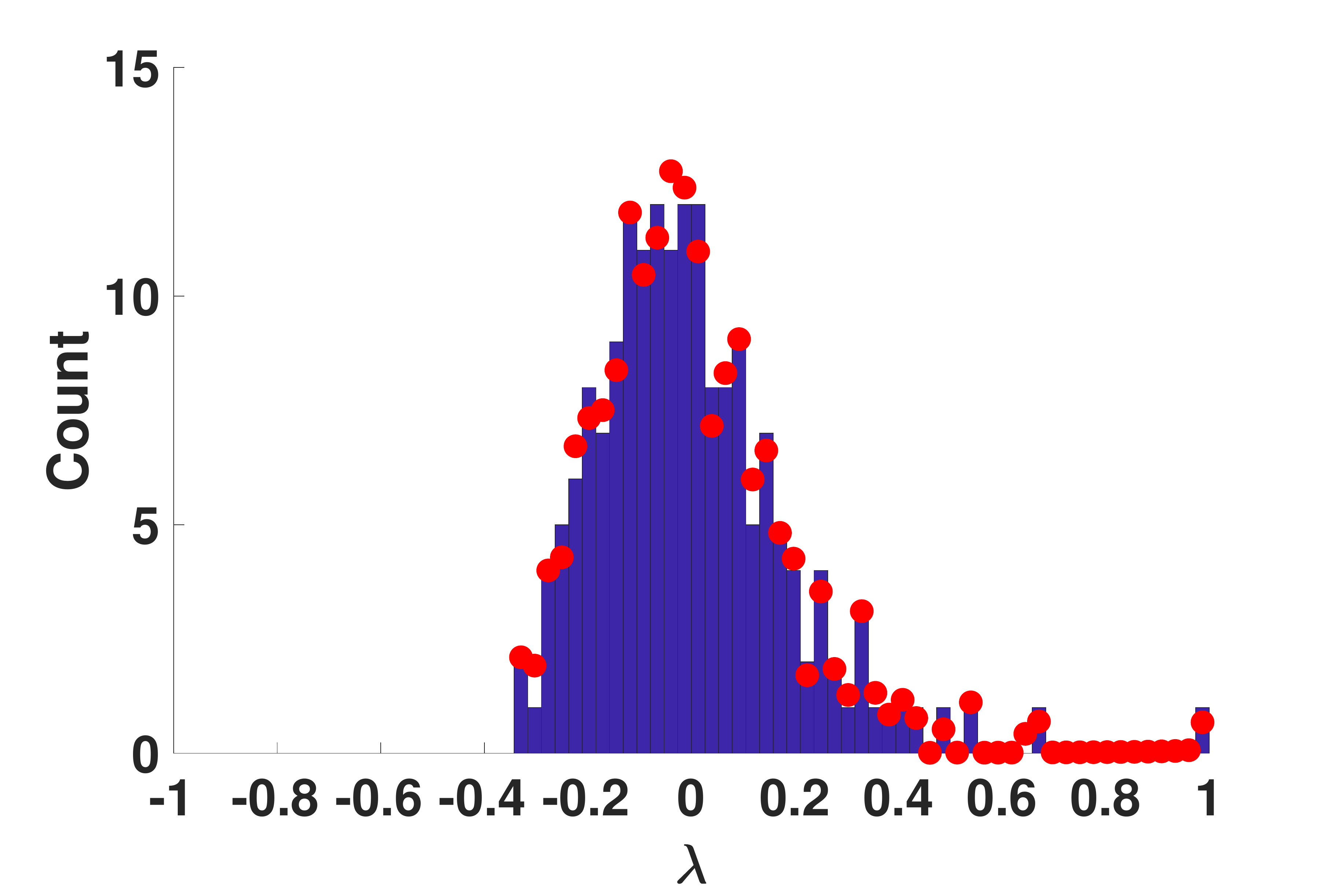}
    \label{fig:hp_dos}
  \end{subfigure}
  \begin{subfigure}{0.19\textwidth}
    \centering  
    \captionsetup{justification=centering}
    \includegraphics[width=\textwidth,trim = .4cm 0.5cm 3.5cm 1.3cm,clip]
    {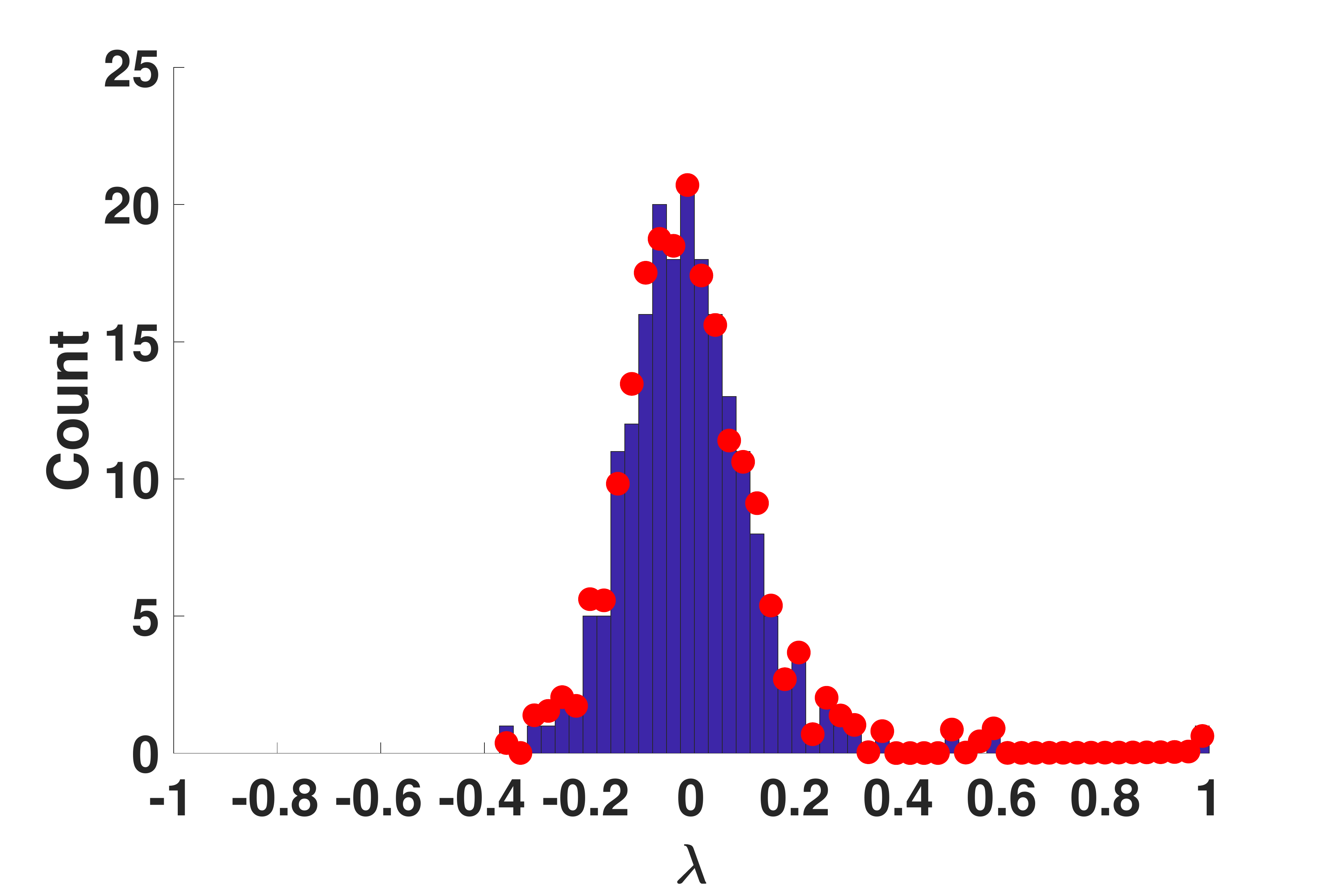}
    \label{fig:twitter_dos}
  \end{subfigure}
  \begin{subfigure}{0.19\textwidth}
    \centering  
    \captionsetup{justification=centering}
    \includegraphics[width=\textwidth,trim = .4cm 0.5cm 3.5cm .3cm,clip]
    {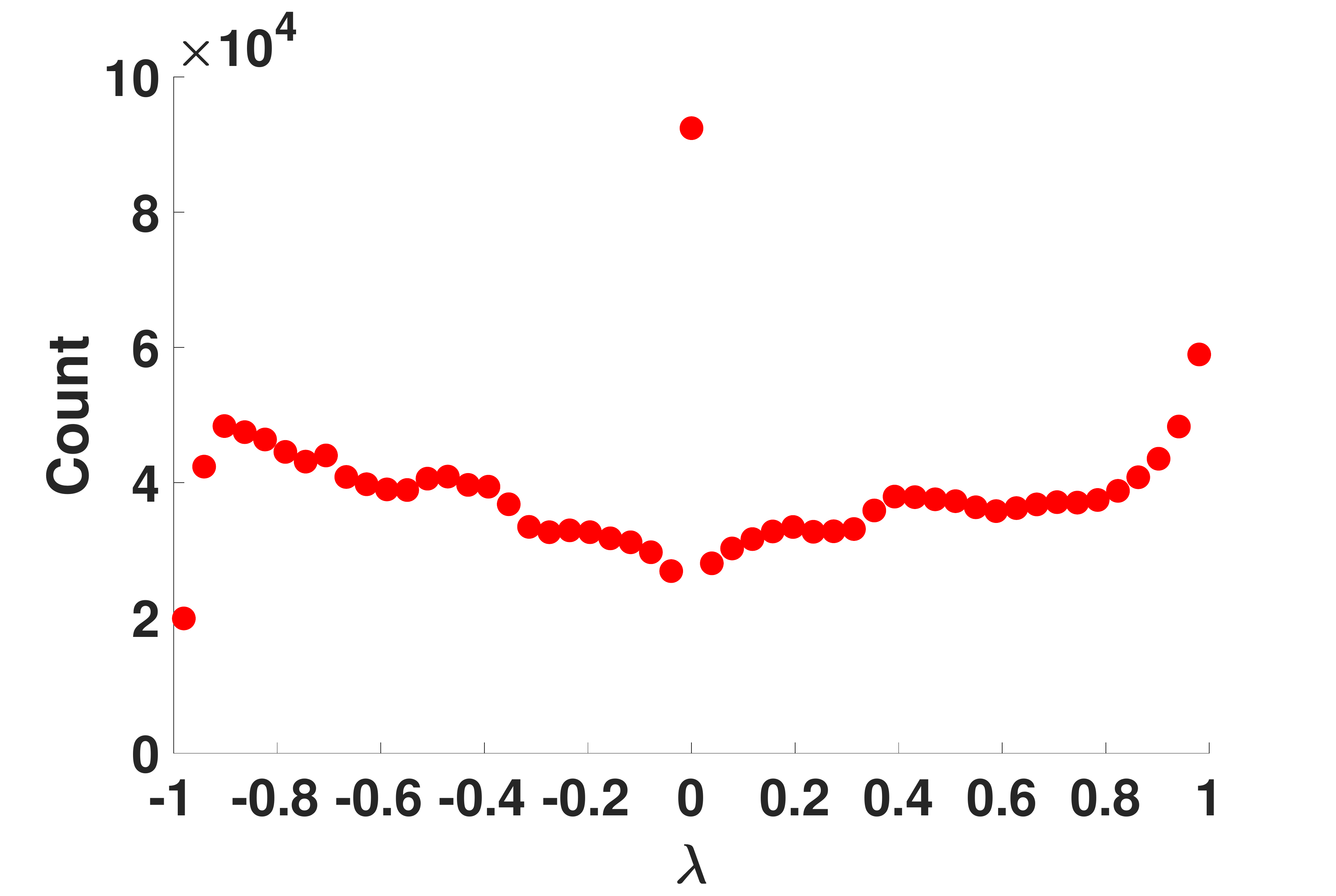}
    \label{fig:roadca_dos}
  \end{subfigure}
  \begin{subfigure}{0.19\textwidth}
    \centering  
    \captionsetup{justification=centering}
    \includegraphics[width=\textwidth,trim = .4cm 0.5cm 3.5cm 1.3cm,clip]
    {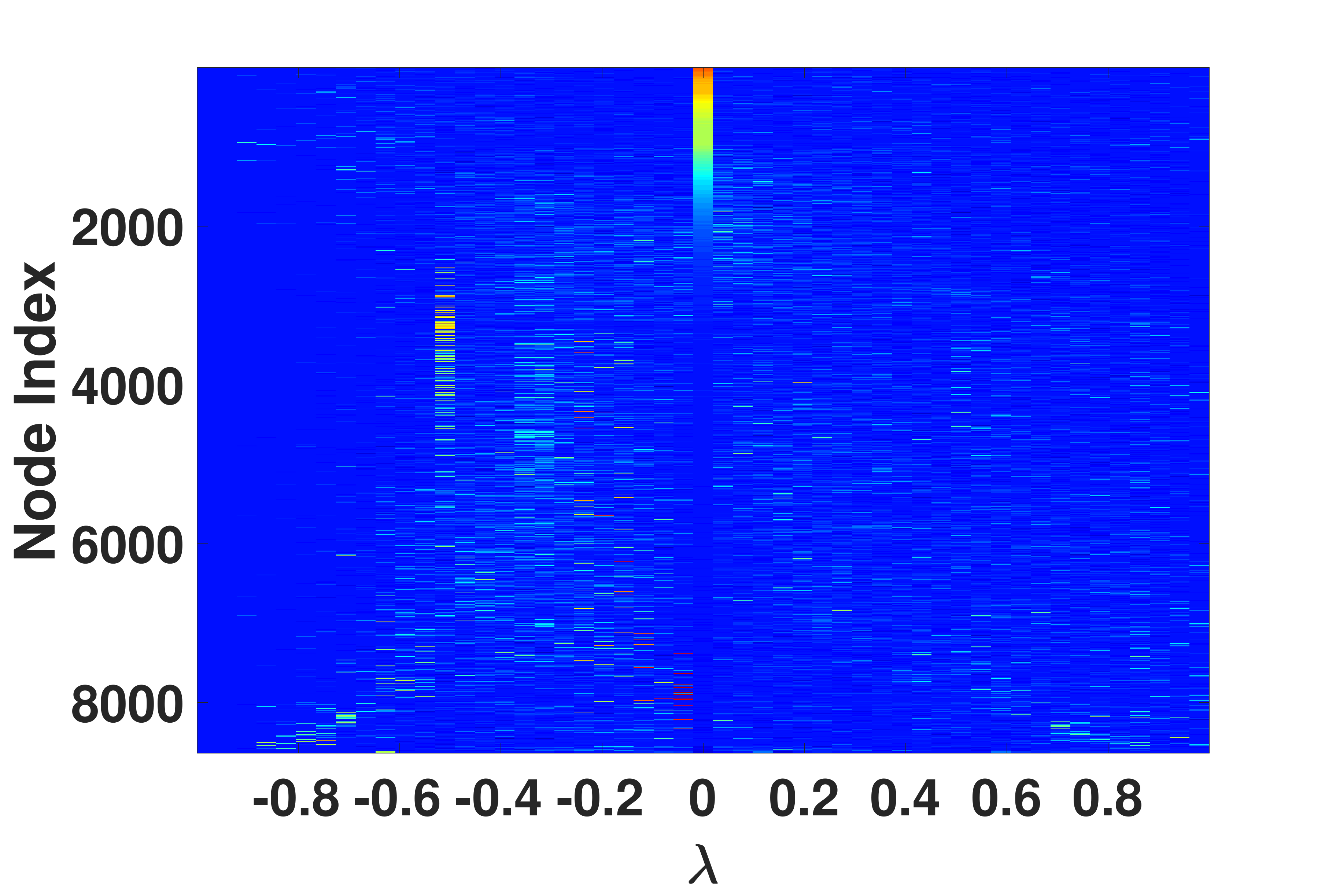}
    \caption{HepTh Collaboration Network}
    \label{fig:hepth_ldos}
  \end{subfigure}
  \begin{subfigure}{0.19\textwidth}
    \centering  
    \captionsetup{justification=centering}
    \includegraphics[width=\textwidth,trim = .4cm 0.5cm 3.5cm 1.3cm,clip]
    {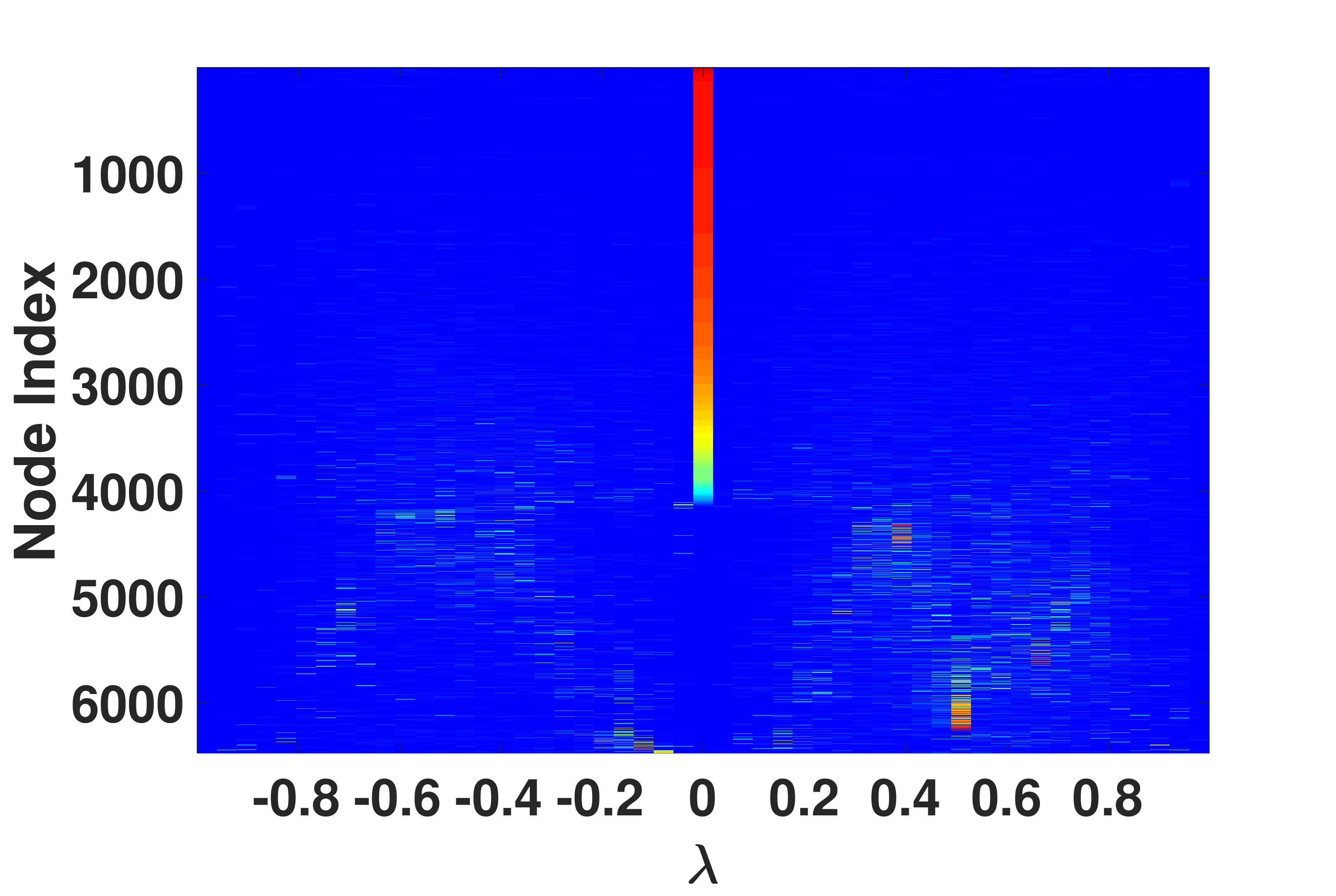}
    \caption{Autonomous System Network (2000)}
    \label{fig:as2_ldos}
  \end{subfigure}
  \begin{subfigure}{0.19\textwidth}
    \centering  
    \captionsetup{justification=centering}
    \includegraphics[width=\textwidth,trim = .4cm 0.5cm 3.5cm 1.3cm,clip]
    {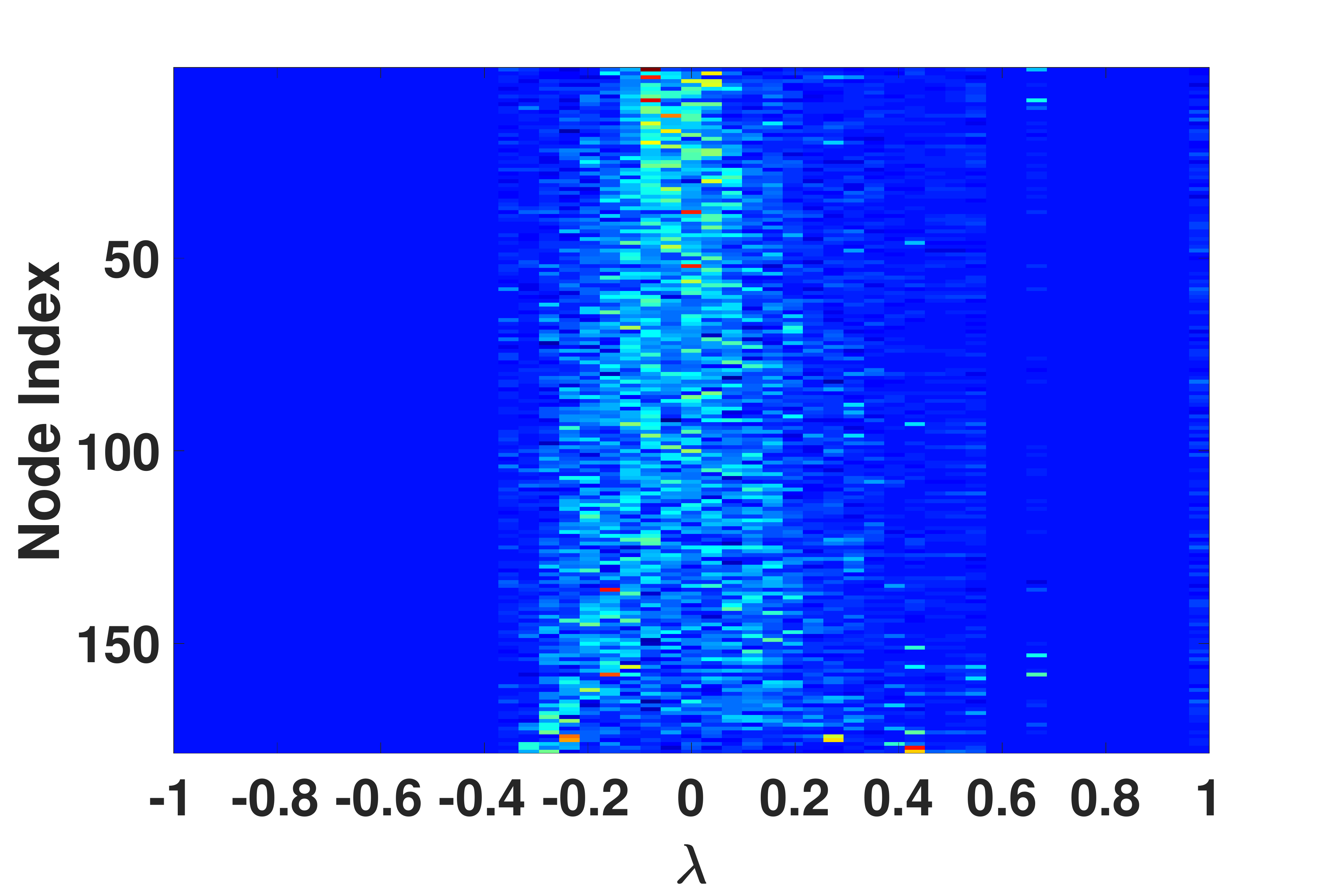}{}
    \caption{Harry Potter Characters Network}
    \label{fig:hp_ldos}
  \end{subfigure}
  \begin{subfigure}{0.19\textwidth}
    \centering  
    \captionsetup{justification=centering}
    \includegraphics[width=\textwidth,trim = .4cm 0.5cm 3.5cm 1.3cm,clip]
    {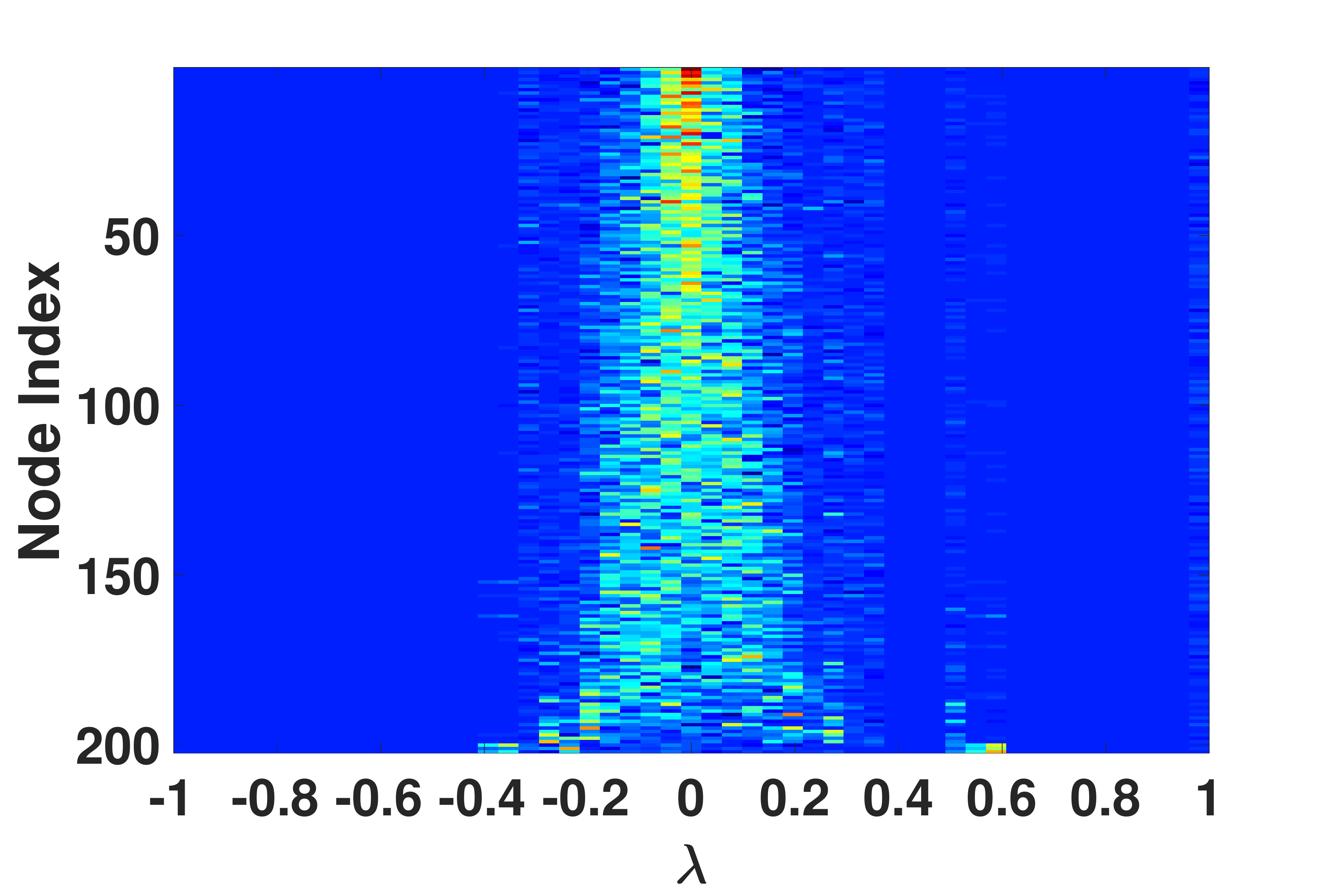}
    \caption{Twitter Ego \\Networks}
    \label{fig:twitter_ldos}
  \end{subfigure}
  \begin{subfigure}{0.19\textwidth}
    \centering  
    \captionsetup{justification=centering}
    \includegraphics[width=\textwidth,trim = .4cm 0.5cm 3.5cm 0.3cm,clip]
    {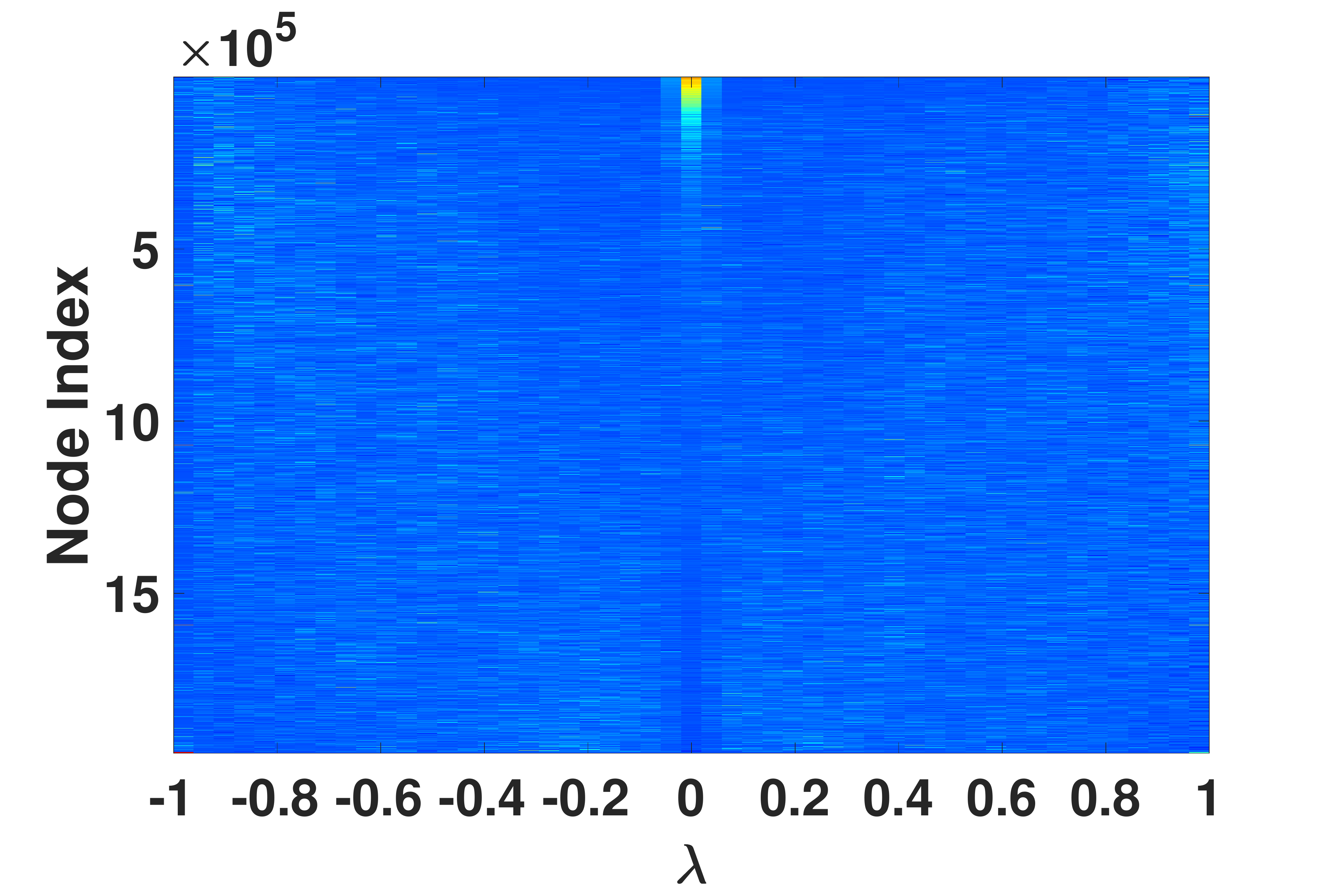}
    \caption{California Road Network}
    \label{fig:roadca_ldos}
  \end{subfigure}
  \caption{DOS(top)/PDOS(bottom) histograms for the normalized adjacency of 10
  networks from five domains. For DOS, blue bars are the true spectrum, and red
  points are from KPM ($500$ moments and $20$ Hadamard probes). For PDOS, the
  spectral histograms of all nodes are aligned vertically. Red indicates high
  weight around an eigenvalue, and blue indicates low weight. The true spectrum
  for the California Road Network (\ref{fig:roadca_ldos}) is omitted, as it is
  too large to compute exactly (1,965,206 nodes).}
  \label{fig:gallery}
\end{figure*}

\subsection{Gallery of DOS/PDOS}
We first present our spectral histogram approximation from DOS/ PDOS on a wide
variety of graphs, including collaboration networks, online social networks,
road networks and autonomous systems (dataset details are in the appendix). For
all examples, we apply our methods to the normalized adjacency matrices using
$500$ Chebyshev moments and $20$ Hadamard probe vectors. Afterwards, the
spectral density is integrated into $50$ histogram bins. In figure 
\ref{fig:gallery}, the DOS approximation is on the first row, and the PDOS
approximation is on the second. When a spike exists in the spectrum, we apply
motif filtering, and DOS is zoomed appropriately to show the remaining part. For
PDOS, we stack the spectral histograms for all nodes vertically, sorted by their
projected weights on the leading left singular vector. Red indicates that a node
has high weight at certain parts of the spectrum, whereas blue indicates low
weight.

We observe many distinct shapes of spectrum in our examples. The eigenvalues of
denser graphs, such as the Marvel characters network (\ref{fig:marvel_ldos}:
average degree $52.16$) and Facebook union of ego networks 
(\ref{fig:facebook_ldos}: average degree $43.69$), exhibit decay similar to
thepower-law around $\lambda=0$. There has been study on the power-law
distribution in the eigenvalues of the adjacency and the Laplacian matrix, but
it only focuses on the leading eigenvalues rather than the entire spectrum~
\cite{eikmeier2017revisiting} for large real-world datasets. Relatively sparse
graphs (\ref{fig:erdos_ldos}: average degree $3.06$,; \ref{fig:as_ldos}: average
degree $4.13$) often possess spikes, especially around $\lambda=0$, which
reflect a larger set of loosely-connected boundary nodes. It is much more
evident in the PDOS spectral histograms, which allow us to pick out the nodes
with dominant weights at $\lambda=0$ and those that contribute most to local
structures. Finally, though the road network is quite sparse (ave. deg $2.50$),
its regularity results in a lack of special features, and most nodes contribute
evenly to the spectrum according to PDOS.

\subsection{Computation time}

In this experiment, we show the scaling of our methods by applying them to
graphs of varying size of nodes, edges, and sparsity patterns. Rather than
computation power, the memory cost of loading a graph with 100M-1B edges is more
often the constraint. Hence, we report runtimes for a Python version on a Google
Cloud instance with 200GB memory and an Intel Xeon E5 v3 CPU at 2.30GHz.

The datasets we use are obtained from the SNAP repository~\cite{snapnets}. For
each graph, we compute the first $10$ Chebyshev moments using KPM with $20$
probe vectors. Most importantly, the cost for each moment is independent of the
total number of moments we compute. Table \ref{tab:scaling} reports number of
nodes, number of edges, average degree of nodes, and the average runtime for
computing each moment. We can observe that the runtime is in accordance with
the theoretical complexity $O(N_z(|V| + |E|))$. For the Friendster social
network with about 1.8 billion edges, computing each moment takes about 1000
seconds to compute, which means we could obtain a rough approximation to its
spectrum within a day. As the dominant cost is matrix-matrix multiplication and
we use several probe vectors, our approach has ample opportunity for parallel
computation.
\begin{table}
\centering
\caption{%
Average time to compute each Chebyshev moment (with 20 probes) for graphs from
the SNAP repository.}
\label{tab:scaling}
\begin{tabular}{r l l l l}
\toprule
Network & \# Nodes &\# Edges & Avg.\ Deg.\ & Time (s) \\ \midrule
Facebook & 4,039 & 88,234 & 43.69 & 0.007\\
AstroPh & 18,772 & 198,110 & 21.11 & 0.028\\
Enron & 36,692 & 183,831 & 10.02 & 0.046\\
Gplus & 107,614 & 13,673,453 & 254.12 & 1.133\\
Amazon & 334,863 & 925,872 & 5.53 & 0.628\\
Neuron & 1,018,524 & 24,735,503 & 48.57 & 9.138\\
RoadNetCA & 1,965,206 & 2,766,607 & 2.82 & 2.276\\
Orkut & 3,072,441 & 117,185,083 & 76.28 & 153.7\\
LiveJournal & 3,997,962 & 34,681,189 & 17.35 & 14.52 \\
Friendster & 65,608,366 & 1,806,067,135 & 55.06 & 1,017\\
\bottomrule
\end{tabular}
\end{table}

\subsection{Model Verification}

In this experiment, we investigate the spectrum for some of the popular graph
models, and whether they resemble the behavior of real-world data. Two of the
most popular models used to describe real-world graphs are the scale-free model 
\cite{barabasi1999emergence} and the small-world model 
\cite{watts1998collective}. Farkas et al. \cite{farkas2001spectra} has analyzed
the spectrum of the adjacency matrix; we instead consider the normalized
adjacency.

The scale-free model grows a random graph with the preferential attachment
process, starting from an initial seed graph and adding one node and $m$ edges
at every step. Figure \ref{fig:ba} shows spectral histograms for this model
with $5000$ nodes and different choices of $m$. When $m=1$, the generated graph
has abundant local motifs like many sparse real-world graphs. By searching in
PDOS for the nodes that have high weight at the two spikes, we find node-doubles
($\lambda=0)$ and singly-attached chains ($\lambda=\pm 1/\sqrt{2}$). When $m=5$,
the graph is denser, without any particular motifs, resulting in an
approximately semicircular spectral distribution.

\begin{figure}
  \begin{subfigure}{0.235\textwidth}
    \centering  
    \captionsetup{justification=centering}
    \includegraphics[width=\textwidth,trim = .4cm 0.5cm 3.5cm 1.3cm,clip]{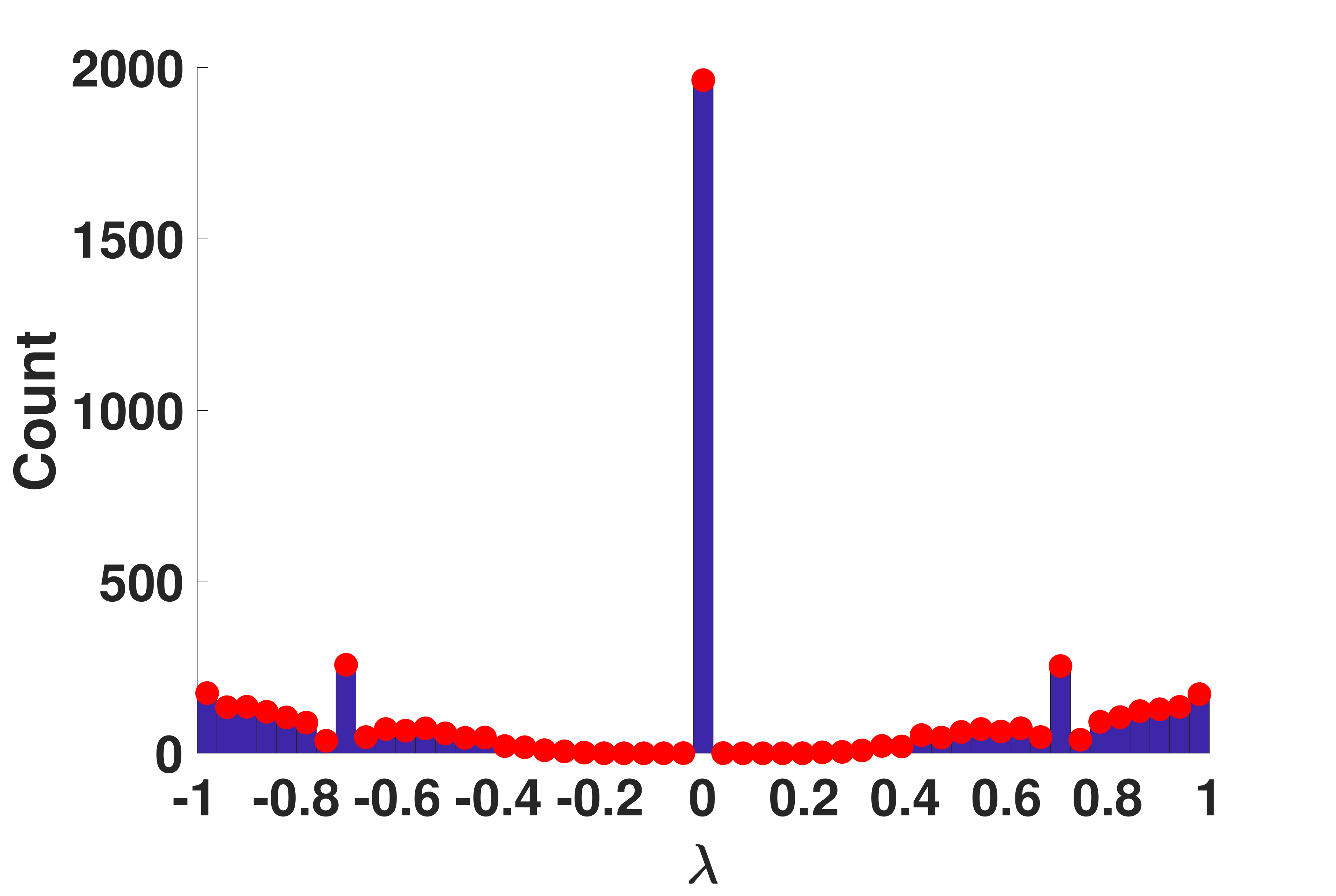}
    \caption{$m = 1$}
    \label{fig:ba_sparse}
  \end{subfigure}
  \begin{subfigure}{0.235\textwidth}
    \centering
    \captionsetup{justification=centering}
    \includegraphics[width=\textwidth,trim = .4cm 0.5cm 3.5cm 1.3cm,clip]{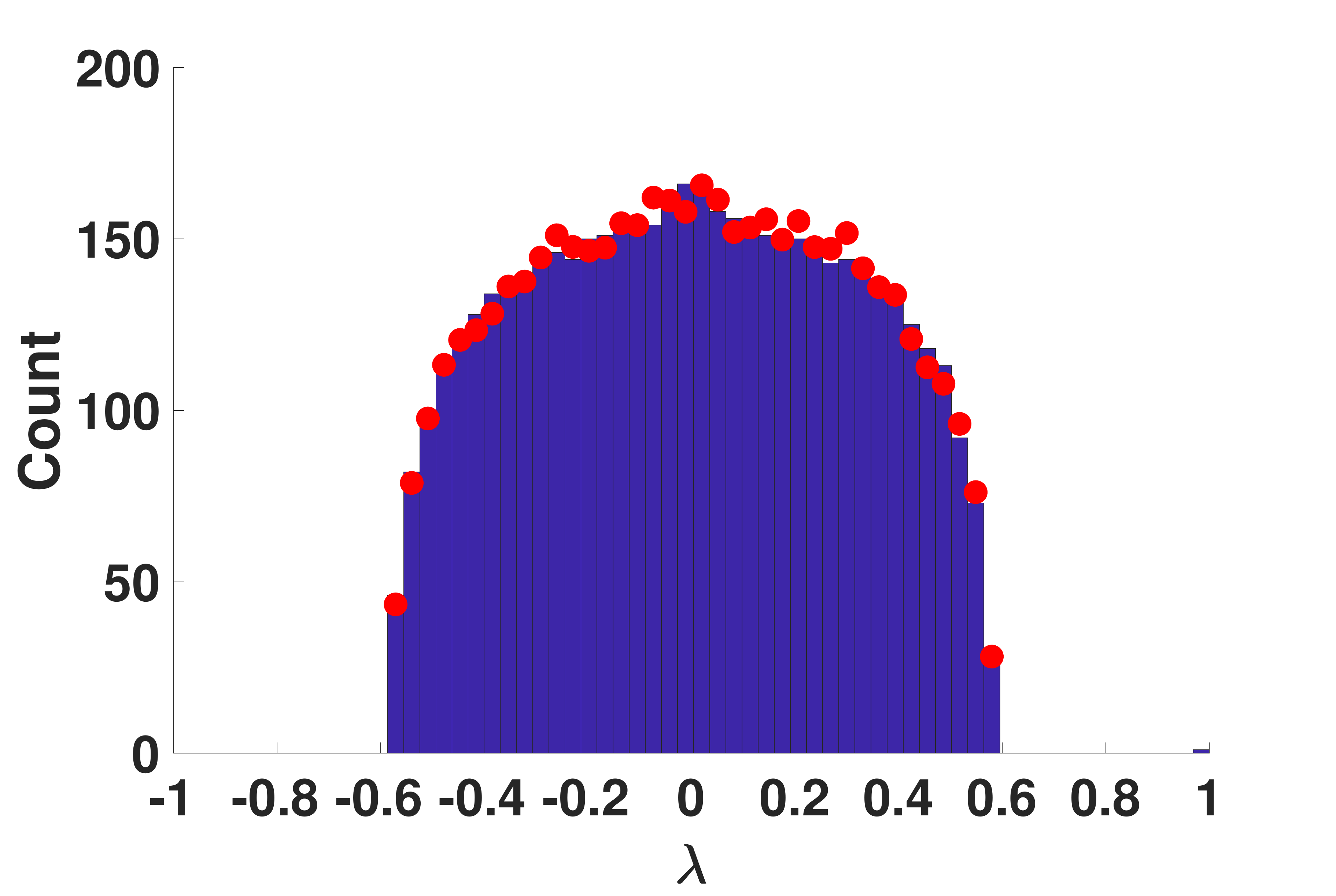}
    \caption{$m = 5$}
    \label{fig:ba_dense}
  \end{subfigure}
  \caption{Spectral histogram for scale-free model with $5000$
  nodes and different $m$. Blue 
  bars are the real spectrum, red points are from KPM
  ($500$ moments and $20$ probes).}
  \label{fig:ba}
\end{figure}

The small-world model generates a random graph by re-wiring edges of a ring
lattice with a certain probability $p$. Here we construct these
graphs on $5000$ nodes with $p=0.5$; the pattern in spectrum is
insensitive for a wide range of $p$. In Figure \ref{fig:sw}, when the graph is
sparse with $5000$ edges, the spectrum has spikes at $0$ and $\pm 1$, indicating
local symmetries, bipartite structure, and disconnected components. With $50000$
edges, localized structures disappear and the spectrum has narrower support.

\begin{figure}
  \begin{subfigure}{0.235\textwidth}
    \centering  
    \captionsetup{justification=centering}
    \includegraphics[width=\textwidth,trim = .4cm 0.5cm 3.5cm 1.3cm,clip]{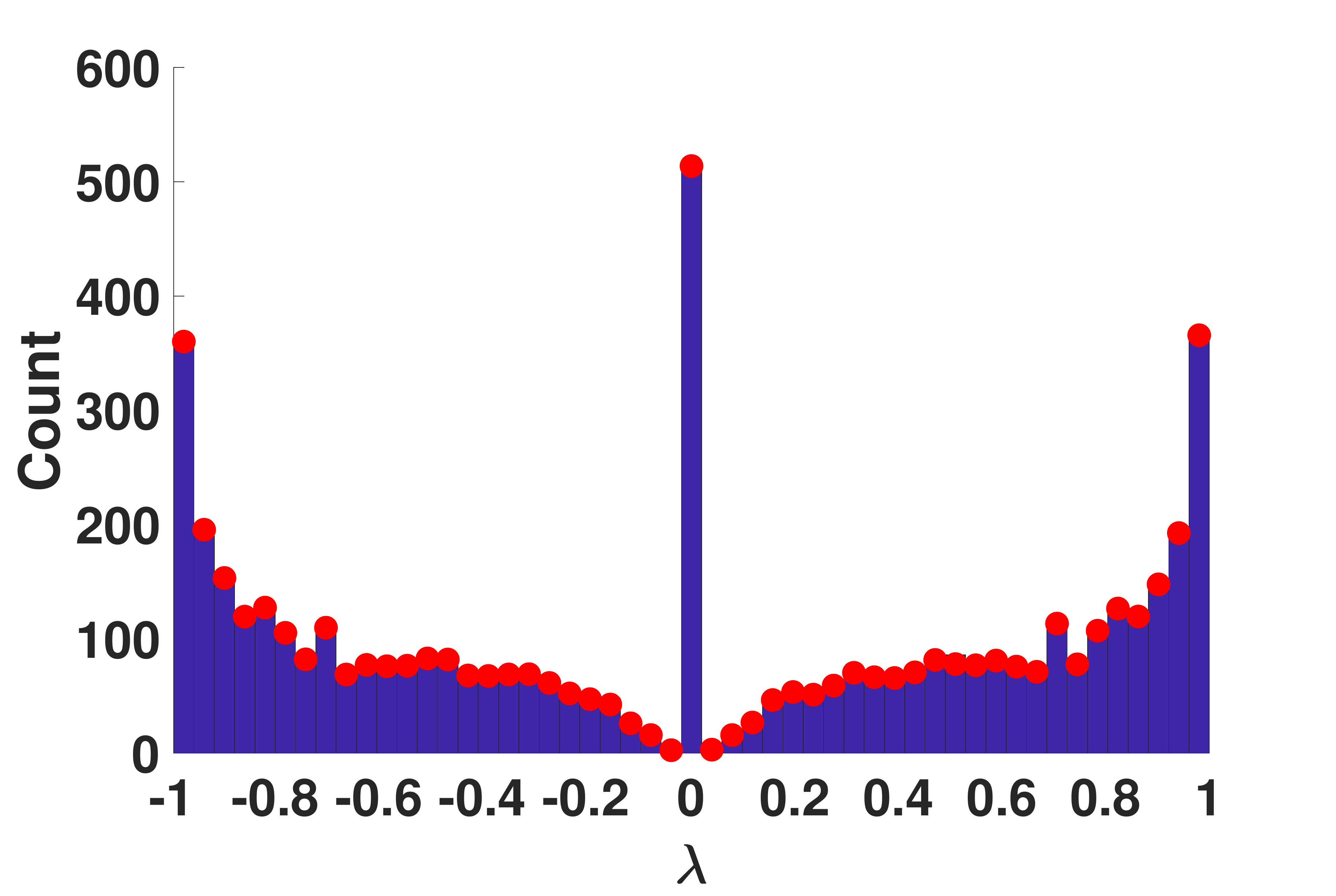}
    \caption{$|E|=5k$}
    \label{fig:sw_sparse}
  \end{subfigure}
  \begin{subfigure}{0.235\textwidth}
    \centering
    \captionsetup{justification=centering}
    \includegraphics[width=\textwidth,trim = .4cm 0.5cm 3.5cm 1.3cm,clip]{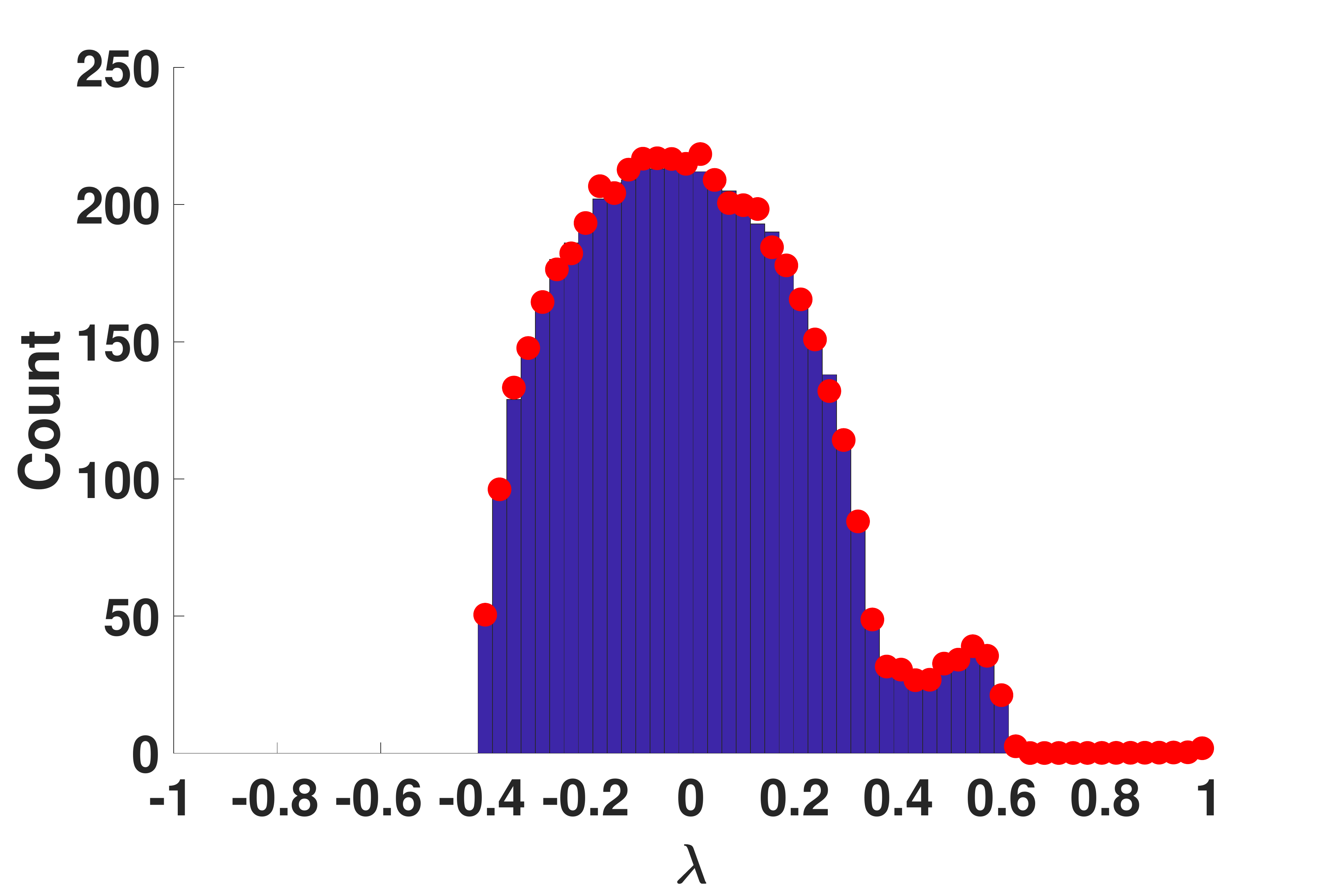}
    \caption{$|E|=50k$}
    \label{fig:sw_dense}
  \end{subfigure}
  \caption{Spectral histograms for small-world model with $5000$
  nodes and re-wiring probability $p=0.5$, starting with $5000$
  (\ref{fig:sw_sparse}) and $50000$ (\ref{fig:sw_dense} edges.
  Blue bars are the real spectrum,
  red points are from KPM ($5000$ moments and $20$ probes).}
  \label{fig:sw}
\end{figure}

Finally, we investigate the Block Two-Level Erd\"{o}s-R\'{e}nyi (BTER) model 
\cite{seshadhri2012community}, which directly fits an input graph. BTER
constructs a similar graph by a two-step process: first create a collection of
Erd\"{o}s-R\'{e}nyi subgraphs, then interconnect those using a Chung-Lu model  
\cite{chung2002connected}. Seshadhri et al. showed their model accurately
captures the observable properties of the given graph, including the eigenvalues
of the adjacency matrix. Figure \ref{fig:bter} compares the DOS/PDOS of the
Erd\"{o}s collaboration network and its BTER counterpart. Unlike the  original
graph, most $0$ eigenvalues in BTER graph come from isolated nodes. The BTER
graph also has many more isolated edges ($\lambda=\pm1$), singly-attached chains
($\lambda=\pm1/\sqrt{2})$), and singly-attached triangles ($\lambda=-1/2$). We
locate these motifs by inspecting nodes with high weights at respective part of
the spectrum.

\begin{figure}
  \begin{subfigure}{0.235\textwidth}
    \centering  
    \captionsetup{justification=centering}
    \includegraphics[width=\textwidth,trim = .4cm 0.5cm 3.5cm 1.3cm,clip]
    {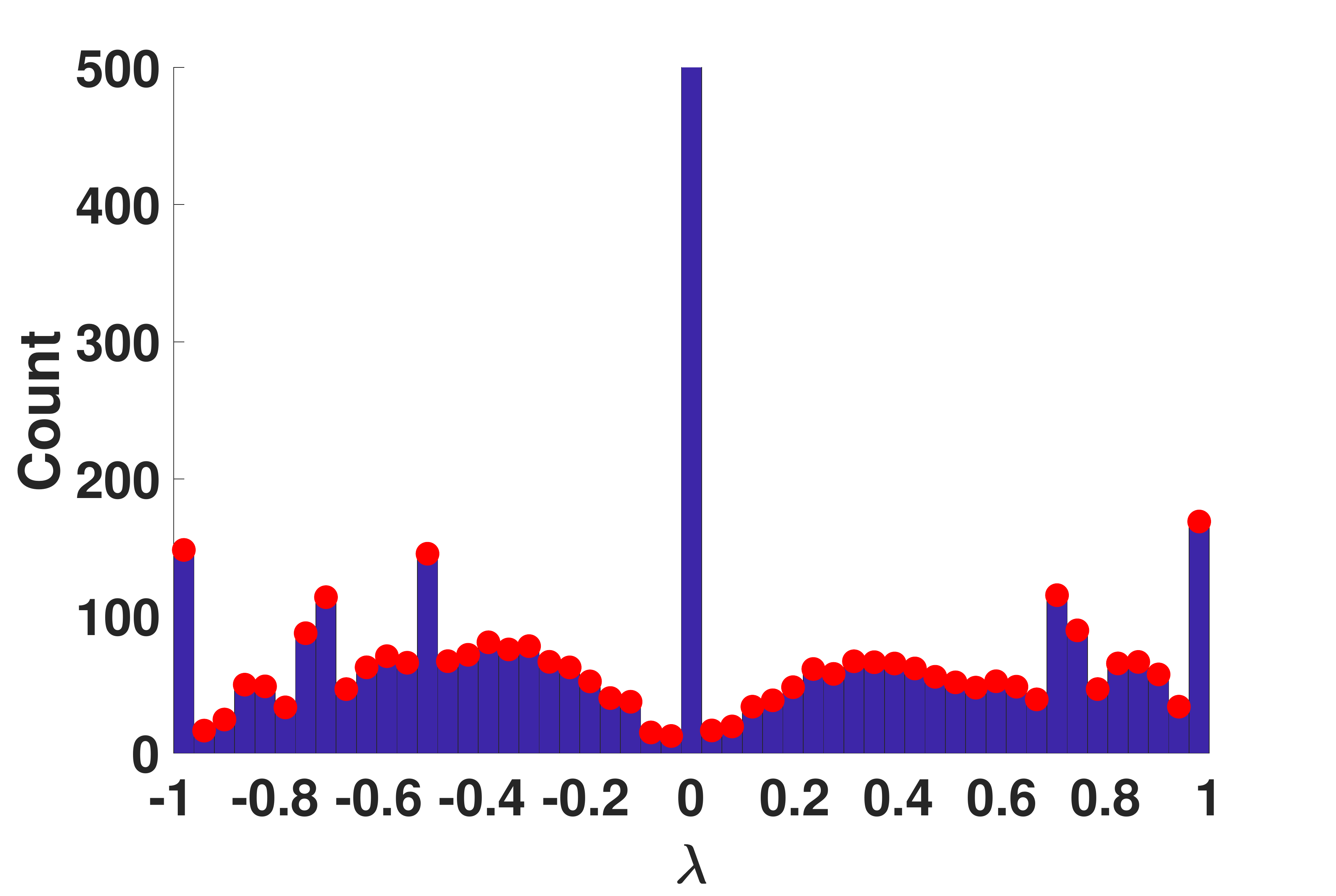}
    \caption{BTER DOS}
    \label{fig:bter_dos}
  \end{subfigure}
  \begin{subfigure}{0.235\textwidth}
    \centering
    \captionsetup{justification=centering}
    \includegraphics[width=\textwidth,trim = .4cm 0.5cm 3.5cm 1.3cm,clip]
    {./erdos}
    \caption{Erd\"{o}s DOS}
    \label{fig:erdos_dos2}
  \end{subfigure}
  \begin{subfigure}{0.235\textwidth}
    \centering  
    \captionsetup{justification=centering}
    \includegraphics[width=\textwidth,trim = .4cm 0.5cm 3.5cm 1.3cm,clip]
    {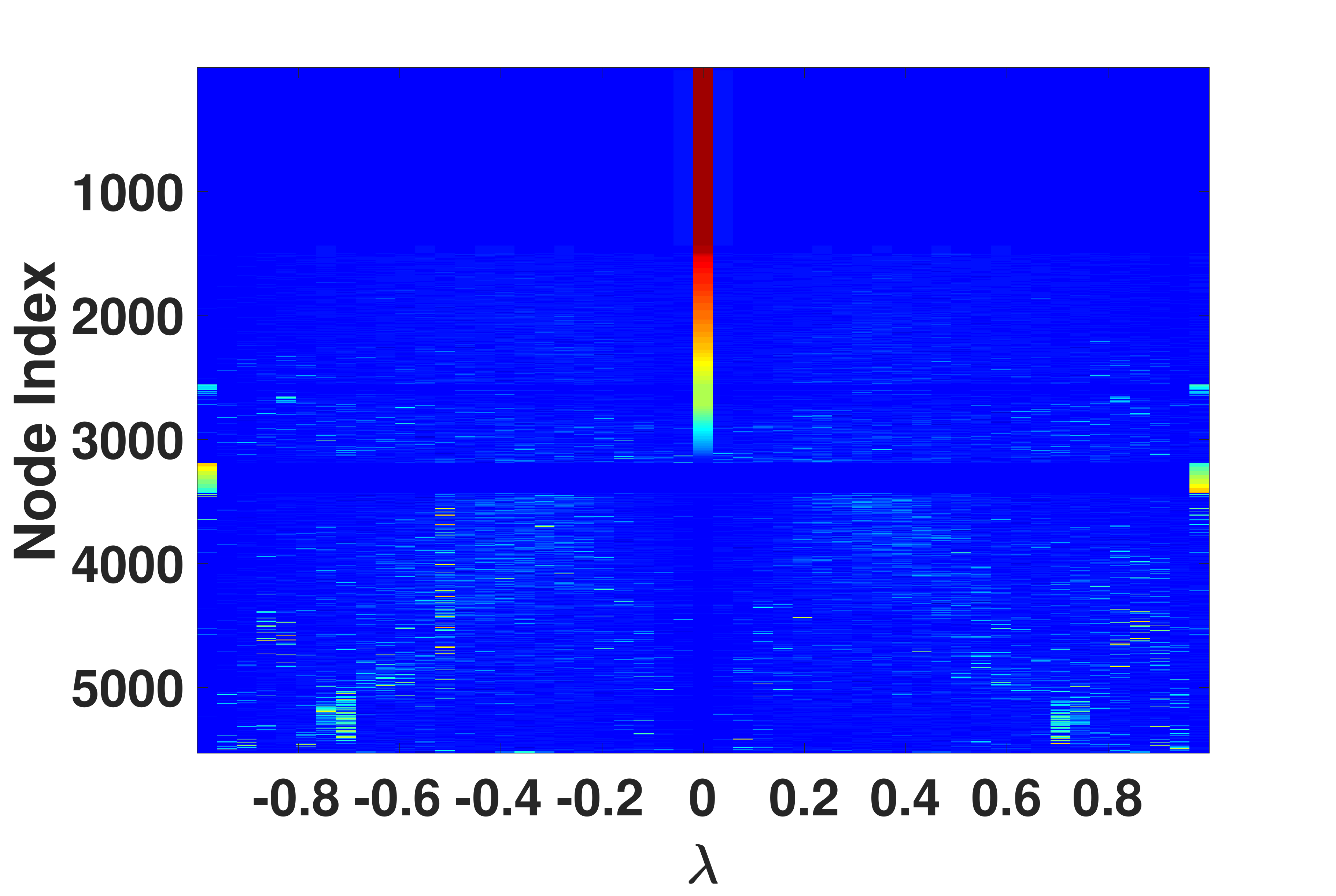}
    \caption{BTER PDOS}
    \label{fig:bter_ldos}
  \end{subfigure}
  \begin{subfigure}{0.235\textwidth}
    \centering
    \captionsetup{justification=centering}
    \includegraphics[width=\textwidth,trim = .4cm 0.5cm 3.5cm 1.3cm,clip]
    {./erdos_ldos}
    \caption{Erd\H{o}s PDOS}
    \label{fig:erdos_ldos2}
  \end{subfigure}
  \caption{Comparison of spectral histogram between Erd\H{o}s Collaboration
  Network and the BTER model. Both DOS and PDOS are computed with $500$ moments
  and $20$ probe vectors.} \label{fig:bter}
\end{figure}

\section{Discussion}
\label{sec:discussion}
%!TEX root = kdd_2019.tex

In this paper, we make the computation of spectral densities a
practical tool for the analysis of large real-world network.  Our
approach borrows from methods in solid state physics, but with
adaptations that improve performance in the network analysis setting
by special handling of graph motifs that leave distinctive spectral
fingerprints.  We show that the spectral densities are stable to small
changes in the graph, as well as providing an analysis of the approximation
error in our methods.  We illustrate the efficiency of our approach by
treating graphs with tens of millions of nodes and billions of edges
using only a single compute node.  The method provides a compelling
visual fingerprint of a graph, and we show how this fingerprint can be
used for tasks such as model verification.

Our approach opens the door for the use of complete spectral
information in large-scale network analysis.  It provides a
framework for scalable computation of quantities 
already used in network science, such as common centrality
measures and graph connectivity indices (such as the Estrada index)
that can be expressed in terms of the diagonals and traces of matrix
functions.  But we expect it to serve more generally to define new
families of features that describe graphs and the roles nodes play
within those graphs. We have shown that graphs from different 
backgrounds demonstrate distinct spectral characteristics, and thus 
can be clustered based on those. Looking at LDOS across nodes for role
discovery, we can identify the ones with high similarity in their local 
structures. Moreover, extracting nodes with large weights at various points of
the spectrum uncovers motifs and symmetries. In the future, we expect to use
DOS/LDOS as graph features for applications in graph clustering, graph
matching, role classification, and other tasks.

\textbf{Acknowledgments.} We thank NSF DMS-1620038 for supporting this work. 

% \nocite{*}
\bibliographystyle{ACM-Reference-Format}
\bibliography{references,arb-refs}

%%% -*-BibTeX-*-
%%% Do NOT edit. File created by BibTeX with style
%%% ACM-Reference-Format-Journals [18-Jan-2012].

\begin{thebibliography}{51}

%%% ====================================================================
%%% NOTE TO THE USER: you can override these defaults by providing
%%% customized versions of any of these macros before the \bibliography
%%% command.  Each of them MUST provide its own final punctuation,
%%% except for \shownote{}, \showDOI{}, and \showURL{}.  The latter two
%%% do not use final punctuation, in order to avoid confusing it with
%%% the Web address.
%%%
%%% To suppress output of a particular field, define its macro to expand
%%% to an empty string, or better, \unskip, like this:
%%%
%%% \newcommand{\showDOI}[1]{\unskip}   % LaTeX syntax
%%%
%%% \def \showDOI #1{\unskip}           % plain TeX syntax
%%%
%%% ====================================================================

\ifx \showCODEN    \undefined \def \showCODEN     #1{\unskip}     \fi
\ifx \showDOI      \undefined \def \showDOI       #1{#1}\fi
\ifx \showISBNx    \undefined \def \showISBNx     #1{\unskip}     \fi
\ifx \showISBNxiii \undefined \def \showISBNxiii  #1{\unskip}     \fi
\ifx \showISSN     \undefined \def \showISSN      #1{\unskip}     \fi
\ifx \showLCCN     \undefined \def \showLCCN      #1{\unskip}     \fi
\ifx \shownote     \undefined \def \shownote      #1{#1}          \fi
\ifx \showarticletitle \undefined \def \showarticletitle #1{#1}   \fi
\ifx \showURL      \undefined \def \showURL       {\relax}        \fi
% The following commands are used for tagged output and should be
% invisible to TeX
\providecommand\bibfield[2]{#2}
\providecommand\bibinfo[2]{#2}
\providecommand\natexlab[1]{#1}
\providecommand\showeprint[2][]{arXiv:#2}

\bibitem[\protect\citeauthoryear{Avron and Toledo}{Avron and Toledo}{2011}]%
        {avron2011randomized}
\bibfield{author}{\bibinfo{person}{Haim Avron} {and} \bibinfo{person}{Sivan
  Toledo}.} \bibinfo{year}{2011}\natexlab{}.
\newblock \showarticletitle{Randomized algorithms for estimating the trace of
  an implicit symmetric positive semi-definite matrix}.
\newblock \bibinfo{journal}{\emph{Journal of the ACM (JACM)}}
  \bibinfo{volume}{58}, \bibinfo{number}{2} (\bibinfo{year}{2011}),
  \bibinfo{pages}{8}.
\newblock


\bibitem[\protect\citeauthoryear{Banerjee}{Banerjee}{2008}]%
        {banerjee2008spectrum}
\bibfield{author}{\bibinfo{person}{Anirban Banerjee}.}
  \bibinfo{year}{2008}\natexlab{}.
\newblock \emph{\bibinfo{title}{The spectrum of the graph Laplacian as a tool
  for analyzing structure and evolution of networks}}.
\newblock \bibinfo{thesistype}{Ph.D. Dissertation}.
\newblock


\bibitem[\protect\citeauthoryear{Barab{\'a}si and Albert}{Barab{\'a}si and
  Albert}{1999}]%
        {barabasi1999emergence}
\bibfield{author}{\bibinfo{person}{Albert-L{\'a}szl{\'o} Barab{\'a}si} {and}
  \bibinfo{person}{R{\'e}ka Albert}.} \bibinfo{year}{1999}\natexlab{}.
\newblock \showarticletitle{Emergence of scaling in random networks}.
\newblock \bibinfo{journal}{\emph{science}} \bibinfo{volume}{286},
  \bibinfo{number}{5439} (\bibinfo{year}{1999}), \bibinfo{pages}{509--512}.
\newblock


\bibitem[\protect\citeauthoryear{Bekas, Kokiopoulou, and Saad}{Bekas
  et~al\mbox{.}}{2007}]%
        {bekas2007estimator}
\bibfield{author}{\bibinfo{person}{Costas Bekas}, \bibinfo{person}{Effrosyni
  Kokiopoulou}, {and} \bibinfo{person}{Yousef Saad}.}
  \bibinfo{year}{2007}\natexlab{}.
\newblock \showarticletitle{An estimator for the diagonal of a matrix}.
\newblock \bibinfo{journal}{\emph{Applied Numerical Mathematics}}
  \bibinfo{volume}{57}, \bibinfo{number}{11-12} (\bibinfo{year}{2007}),
  \bibinfo{pages}{1214--1229}.
\newblock


\bibitem[\protect\citeauthoryear{Belkin and Niyogi}{Belkin and Niyogi}{2001}]%
        {belkin2001laplacian}
\bibfield{author}{\bibinfo{person}{Mikhail Belkin} {and}
  \bibinfo{person}{Partha Niyogi}.} \bibinfo{year}{2001}\natexlab{}.
\newblock \showarticletitle{Laplacian Eigenmaps and Spectral Techniques for
  Embedding and Clustering}. In \bibinfo{booktitle}{\emph{Proceedings of the
  14th International Conference on Neural Information Processing Systems:
  Natural and Synthetic}} \emph{(\bibinfo{series}{NIPS'01})}.
  \bibinfo{publisher}{MIT Press}, \bibinfo{address}{Cambridge, MA, USA},
  \bibinfo{pages}{585--591}.
\newblock


\bibitem[\protect\citeauthoryear{Chavel}{Chavel}{1984}]%
        {chavel1984eigenvalues}
\bibfield{author}{\bibinfo{person}{Isaac Chavel}.}
  \bibinfo{year}{1984}\natexlab{}.
\newblock \bibinfo{booktitle}{\emph{Eigenvalues in Riemannian geometry}}.
  Vol.~\bibinfo{volume}{115}.
\newblock \bibinfo{publisher}{Academic press}.
\newblock


\bibitem[\protect\citeauthoryear{Cheeger}{Cheeger}{1969}]%
        {cheeger1969lower}
\bibfield{author}{\bibinfo{person}{Jeff Cheeger}.}
  \bibinfo{year}{1969}\natexlab{}.
\newblock \showarticletitle{A lower bound for the smallest eigenvalue of the
  Laplacian}. In \bibinfo{booktitle}{\emph{Proceedings of the Princeton
  conference in honor of Professor S. Bochner}}.
\newblock


\bibitem[\protect\citeauthoryear{Chung and Lu}{Chung and Lu}{2002}]%
        {chung2002connected}
\bibfield{author}{\bibinfo{person}{Fan Chung} {and} \bibinfo{person}{Linyuan
  Lu}.} \bibinfo{year}{2002}\natexlab{}.
\newblock \showarticletitle{Connected components in random graphs with given
  expected degree sequences}.
\newblock \bibinfo{journal}{\emph{Annals of combinatorics}}
  \bibinfo{volume}{6}, \bibinfo{number}{2} (\bibinfo{year}{2002}),
  \bibinfo{pages}{125--145}.
\newblock


\bibitem[\protect\citeauthoryear{Chung and Lu}{Chung and Lu}{2006}]%
        {chung2006complex}
\bibfield{author}{\bibinfo{person}{Fan Chung} {and} \bibinfo{person}{Linyuan
  Lu}.} \bibinfo{year}{2006}\natexlab{}.
\newblock \bibinfo{booktitle}{\emph{Complex graphs and networks}}.
\newblock Number 107 in \bibinfo{series}{CBMS Regional Conference Series in
  Mathematics}. \bibinfo{publisher}{American Mathematical Soc.}
\newblock


\bibitem[\protect\citeauthoryear{Chung and Graham}{Chung and Graham}{1997}]%
        {chung1997spectral}
\bibfield{author}{\bibinfo{person}{Fan~RK Chung} {and}
  \bibinfo{person}{Fan~Chung Graham}.} \bibinfo{year}{1997}\natexlab{}.
\newblock \bibinfo{booktitle}{\emph{Spectral graph theory}}.
\newblock Number~92. \bibinfo{publisher}{American Mathematical Soc.}
\newblock


\bibitem[\protect\citeauthoryear{Cohen-Steiner, Kong, Sohler, and
  Valiant}{Cohen-Steiner et~al\mbox{.}}{2018}]%
        {cohen2018approximating}
\bibfield{author}{\bibinfo{person}{David Cohen-Steiner},
  \bibinfo{person}{Weihao Kong}, \bibinfo{person}{Christian Sohler}, {and}
  \bibinfo{person}{Gregory Valiant}.} \bibinfo{year}{2018}\natexlab{}.
\newblock \showarticletitle{Approximating the Spectrum of a Graph}. In
  \bibinfo{booktitle}{\emph{Proceedings of the 24th ACM SIGKDD International
  Conference on Knowledge Discovery \& Data Mining}}. ACM,
  \bibinfo{pages}{1263--1271}.
\newblock


\bibitem[\protect\citeauthoryear{Cvetkovic, Simic, and Rowlinson}{Cvetkovic
  et~al\mbox{.}}{2009}]%
        {cvetkovic2009introduction}
\bibfield{author}{\bibinfo{person}{Drago{\v{s}} Cvetkovic},
  \bibinfo{person}{Slobodan Simic}, {and} \bibinfo{person}{Peter Rowlinson}.}
  \bibinfo{year}{2009}\natexlab{}.
\newblock \bibinfo{booktitle}{\emph{An introduction to the theory of graph
  spectra}}.
\newblock \bibinfo{publisher}{Cambridge University Press}.
\newblock


\bibitem[\protect\citeauthoryear{Cvetkovi\'c, Doob, and Sachs}{Cvetkovi\'c
  et~al\mbox{.}}{1998}]%
        {ctekovic1998spectra}
\bibfield{author}{\bibinfo{person}{D.~M. Cvetkovi\'c}, \bibinfo{person}{M.
  Doob}, {and} \bibinfo{person}{H. Sachs}.} \bibinfo{year}{1998}\natexlab{}.
\newblock \bibinfo{booktitle}{\emph{Spectra of Graphs: Theory and Applications}
  (\bibinfo{edition}{third} ed.)}.
\newblock \bibinfo{publisher}{Wiley}.
\newblock


\bibitem[\protect\citeauthoryear{Davis, Hager, and Duff}{Davis
  et~al\mbox{.}}{2014}]%
        {davis2014suitesparse}
\bibfield{author}{\bibinfo{person}{Tim Davis}, \bibinfo{person}{WW Hager},
  {and} \bibinfo{person}{IS Duff}.} \bibinfo{year}{2014}\natexlab{}.
\newblock \showarticletitle{SuiteSparse}.
\newblock \bibinfo{journal}{\emph{URL: faculty. cse. tamu.
  edu/davis/suitesparse. html}} (\bibinfo{year}{2014}).
\newblock


\bibitem[\protect\citeauthoryear{Donath and Hoffman}{Donath and
  Hoffman}{2003}]%
        {donath2003lower}
\bibfield{author}{\bibinfo{person}{William~E Donath} {and}
  \bibinfo{person}{Alan~J Hoffman}.} \bibinfo{year}{2003}\natexlab{}.
\newblock \showarticletitle{Lower bounds for the partitioning of graphs}.
\newblock In \bibinfo{booktitle}{\emph{Selected Papers Of Alan J Hoffman: With
  Commentary}}. \bibinfo{publisher}{World Scientific},
  \bibinfo{pages}{437--442}.
\newblock


\bibitem[\protect\citeauthoryear{dpmartin42}{dpmartin42}{2014}]%
        {hprepo}
\bibfield{author}{\bibinfo{person}{dpmartin42}.}
  \bibinfo{year}{2014}\natexlab{}.
\newblock \bibinfo{title}{Networks}.
\newblock
  \bibinfo{howpublished}{\url{https://github.com/dpmartin42/Networks/commits/master}}.
\newblock


\bibitem[\protect\citeauthoryear{Ducastelle and Cyrot-Lackmann}{Ducastelle and
  Cyrot-Lackmann}{1970}]%
        {ducastelle1970moments}
\bibfield{author}{\bibinfo{person}{Francois Ducastelle} {and}
  \bibinfo{person}{Fran{\c{c}}oise Cyrot-Lackmann}.}
  \bibinfo{year}{1970}\natexlab{}.
\newblock \showarticletitle{Moments developments and their application to the
  electronic charge distribution of d bands}.
\newblock \bibinfo{journal}{\emph{Journal of Physics and Chemistry of Solids}}
  \bibinfo{volume}{31}, \bibinfo{number}{6} (\bibinfo{year}{1970}),
  \bibinfo{pages}{1295--1306}.
\newblock


\bibitem[\protect\citeauthoryear{Eikmeier and Gleich}{Eikmeier and
  Gleich}{2017}]%
        {eikmeier2017revisiting}
\bibfield{author}{\bibinfo{person}{Nicole Eikmeier} {and}
  \bibinfo{person}{David~F Gleich}.} \bibinfo{year}{2017}\natexlab{}.
\newblock \showarticletitle{Revisiting Power-law Distributions in Spectra of
  Real World Networks}. In \bibinfo{booktitle}{\emph{Proceedings of the 23rd
  ACM SIGKDD International Conference on Knowledge Discovery and Data Mining}}.
  \bibinfo{pages}{817--826}.
\newblock


\bibitem[\protect\citeauthoryear{Farkas, Der\'{e}nyi, Barab\'{a}si, and
  Vicsek}{Farkas et~al\mbox{.}}{2001}]%
        {farkas2001spectra}
\bibfield{author}{\bibinfo{person}{Ill\'{e}s~J Farkas}, \bibinfo{person}{Imre
  Der\'{e}nyi}, \bibinfo{person}{Albert-L\'{a}szl\'{o} Barab\'{a}si}, {and}
  \bibinfo{person}{Tamas Vicsek}.} \bibinfo{year}{2001}\natexlab{}.
\newblock \showarticletitle{Spectra of ``real-world'' graphs: Beyond the
  semicircle law}.
\newblock \bibinfo{journal}{\emph{Physical Review E}} \bibinfo{volume}{64},
  \bibinfo{number}{2} (\bibinfo{year}{2001}), \bibinfo{pages}{026704}.
\newblock


\bibitem[\protect\citeauthoryear{Gleich}{Gleich}{2016}]%
        {rodger}
\bibfield{author}{\bibinfo{person}{David Gleich}.}
  \bibinfo{year}{2016}\natexlab{}.
\newblock \bibinfo{title}{Repository of Difficult Graph Experiments and Results
  (RODGER)}.
\newblock
  \bibinfo{howpublished}{\url{https://www.cs.purdue.edu/homes/dgleich/rodger/}}.
\newblock


\bibitem[\protect\citeauthoryear{Golub and Meurant}{Golub and Meurant}{1997}]%
        {golub1997matrices}
\bibfield{author}{\bibinfo{person}{Gene~H Golub} {and}
  \bibinfo{person}{G{\'e}rard Meurant}.} \bibinfo{year}{1997}\natexlab{}.
\newblock \showarticletitle{Matrices, moments and quadrature II; how to compute
  the norm of the error in iterative methods}.
\newblock \bibinfo{journal}{\emph{BIT Numerical Mathematics}}
  \bibinfo{volume}{37}, \bibinfo{number}{3} (\bibinfo{year}{1997}),
  \bibinfo{pages}{687--705}.
\newblock


\bibitem[\protect\citeauthoryear{Gordon, Webb, and Wolpert}{Gordon
  et~al\mbox{.}}{1992}]%
        {gordon1992cannot}
\bibfield{author}{\bibinfo{person}{Carolyn Gordon}, \bibinfo{person}{David~L.
  Webb}, {and} \bibinfo{person}{Scott Wolpert}.}
  \bibinfo{year}{1992}\natexlab{}.
\newblock \showarticletitle{One Cannot Hear the Shape of a Drum}.
\newblock \bibinfo{journal}{\emph{Bull. Amer. Math. Soc.}}
  \bibinfo{volume}{27} (\bibinfo{year}{1992}), \bibinfo{pages}{134--138}.
\newblock


\bibitem[\protect\citeauthoryear{Higham}{Higham}{2008}]%
        {higham2008functions}
\bibfield{author}{\bibinfo{person}{Nicholas~J Higham}.}
  \bibinfo{year}{2008}\natexlab{}.
\newblock \bibinfo{booktitle}{\emph{Functions of matrices: theory and
  computation}}. Vol.~\bibinfo{volume}{104}.
\newblock \bibinfo{publisher}{Siam}.
\newblock


\bibitem[\protect\citeauthoryear{Huffaker, Fomenkov, and claffy}{Huffaker
  et~al\mbox{.}}{2012}]%
        {caida2012}
\bibfield{author}{\bibinfo{person}{B. Huffaker}, \bibinfo{person}{M. Fomenkov},
  {and} \bibinfo{person}{k. claffy}.} \bibinfo{year}{2012}\natexlab{}.
\newblock \bibinfo{booktitle}{\emph{{Internet Topology Data Comparison}}}.
\newblock \bibinfo{type}{{T}echnical {R}eport}.
  \bibinfo{institution}{Cooperative Association for Internet Data Analysis
  (CAIDA)}.
\newblock


\bibitem[\protect\citeauthoryear{Hutchinson}{Hutchinson}{1990}]%
        {hutchinson1990stochastic}
\bibfield{author}{\bibinfo{person}{Michael~F Hutchinson}.}
  \bibinfo{year}{1990}\natexlab{}.
\newblock \showarticletitle{A stochastic estimator of the trace of the
  influence matrix for laplacian smoothing splines}.
\newblock \bibinfo{journal}{\emph{Communications in Statistics-Simulation and
  Computation}} \bibinfo{volume}{19}, \bibinfo{number}{2}
  (\bibinfo{year}{1990}), \bibinfo{pages}{433--450}.
\newblock


\bibitem[\protect\citeauthoryear{Jackson}{Jackson}{1911}]%
        {jackson1911genauigkeit}
\bibfield{author}{\bibinfo{person}{Dunham Jackson}.}
  \bibinfo{year}{1911}\natexlab{}.
\newblock \bibinfo{booktitle}{\emph{{\"U}ber die Genauigkeit der Ann{\"a}herung
  stetiger Funktionen durch ganze rationale Funktionen gegebenen Grades und
  trigonometrische Summen gegebener Ordnung}}.
\newblock \bibinfo{publisher}{Dieterich'schen Universit{\"a}t Buchdruckerei}.
\newblock


\bibitem[\protect\citeauthoryear{Jackson}{Jackson}{2006}]%
        {jackson2006math}
\bibfield{author}{\bibinfo{person}{John~David Jackson}.}
  \bibinfo{year}{2006}\natexlab{}.
\newblock \bibinfo{booktitle}{\emph{Mathematics for Quantum Mechanics: An
  Introductory Survey of Operators, Eigenvalues, and Linear Vector Spaces}}.
\newblock \bibinfo{publisher}{Dover Publications}.
\newblock


\bibitem[\protect\citeauthoryear{Kac}{Kac}{1966}]%
        {kac1966hear}
\bibfield{author}{\bibinfo{person}{Mark Kac}.} \bibinfo{year}{1966}\natexlab{}.
\newblock \showarticletitle{Can One Hear the Shape of a Drum?}
\newblock \bibinfo{journal}{\emph{The American Mathematical Monthly}}
  \bibinfo{volume}{73}, \bibinfo{number}{4} (\bibinfo{year}{1966}),
  \bibinfo{pages}{1--23}.
\newblock


\bibitem[\protect\citeauthoryear{Kantorovich and Rubinstein}{Kantorovich and
  Rubinstein}{1958}]%
        {kantorovich1958space}
\bibfield{author}{\bibinfo{person}{Leonid~Vasilevich Kantorovich} {and}
  \bibinfo{person}{Gennady~S Rubinstein}.} \bibinfo{year}{1958}\natexlab{}.
\newblock \showarticletitle{On a space of completely additive functions}.
\newblock \bibinfo{journal}{\emph{Vestnik Leningrad. Univ}}
  \bibinfo{volume}{13}, \bibinfo{number}{7} (\bibinfo{year}{1958}),
  \bibinfo{pages}{52--59}.
\newblock


\bibitem[\protect\citeauthoryear{Karypis and Kumar}{Karypis and Kumar}{1998}]%
        {karypis1998fast}
\bibfield{author}{\bibinfo{person}{George Karypis} {and} \bibinfo{person}{Vipin
  Kumar}.} \bibinfo{year}{1998}\natexlab{}.
\newblock \showarticletitle{A fast and high quality multilevel scheme for
  partitioning irregular graphs}.
\newblock \bibinfo{journal}{\emph{SIAM J.\ on Scientific Computing}}
  \bibinfo{volume}{20}, \bibinfo{number}{1} (\bibinfo{year}{1998}).
\newblock


\bibitem[\protect\citeauthoryear{Leskovec and Krevl}{Leskovec and
  Krevl}{2014}]%
        {snapnets}
\bibfield{author}{\bibinfo{person}{Jure Leskovec} {and} \bibinfo{person}{Andrej
  Krevl}.} \bibinfo{year}{2014}\natexlab{}.
\newblock \bibinfo{title}{{SNAP Datasets}: {Stanford} Large Network Dataset
  Collection}.
\newblock \bibinfo{howpublished}{\url{http://snap.stanford.edu/data}}.
\newblock


\bibitem[\protect\citeauthoryear{L{\'e}vy}{L{\'e}vy}{2006}]%
        {levy2006laplace}
\bibfield{author}{\bibinfo{person}{Bruno L{\'e}vy}.}
  \bibinfo{year}{2006}\natexlab{}.
\newblock \showarticletitle{{Laplace-Beltrami} eigenfunctions towards an
  algorithm that ``understands" geometry}. In \bibinfo{booktitle}{\emph{Shape
  Modeling and Applications, 2006. SMI 2006. IEEE International Conference
  on}}. IEEE, \bibinfo{pages}{13--13}.
\newblock


\bibitem[\protect\citeauthoryear{Magnus and Neudecker}{Magnus and
  Neudecker}{1988}]%
        {magnus1988matrix}
\bibfield{author}{\bibinfo{person}{Jan~R Magnus} {and} \bibinfo{person}{Heinz
  Neudecker}.} \bibinfo{year}{1988}\natexlab{}.
\newblock \showarticletitle{Matrix differential calculus with applications in
  statistics and econometrics}.
\newblock \bibinfo{journal}{\emph{Wiley series in probability and mathematical
  statistics}} (\bibinfo{year}{1988}).
\newblock


\bibitem[\protect\citeauthoryear{McKean}{McKean}{1972}]%
        {mckean1972selberg}
\bibfield{author}{\bibinfo{person}{H.~P. McKean}.}
  \bibinfo{year}{1972}\natexlab{}.
\newblock \showarticletitle{Selberg's trace formula as applied to a compact
  Riemann surface}.
\newblock \bibinfo{journal}{\emph{Communications on Pure and Applied
  Mathematics}} \bibinfo{volume}{25}, \bibinfo{number}{3}
  (\bibinfo{year}{1972}), \bibinfo{pages}{225--246}.
\newblock


\bibitem[\protect\citeauthoryear{McSherry}{McSherry}{2001}]%
        {mcsherry2001spectral}
\bibfield{author}{\bibinfo{person}{Frank McSherry}.}
  \bibinfo{year}{2001}\natexlab{}.
\newblock \showarticletitle{Spectral partitioning of random graphs}. In
  \bibinfo{booktitle}{\emph{focs}}. IEEE, \bibinfo{pages}{529}.
\newblock


\bibitem[\protect\citeauthoryear{Mihail}{Mihail}{1989}]%
        {Mihail-1989-Markov}
\bibfield{author}{\bibinfo{person}{M. Mihail}.}
  \bibinfo{year}{1989}\natexlab{}.
\newblock \showarticletitle{Conductance and convergence of Markov chains-a
  combinatorial treatment of expanders}. In \bibinfo{booktitle}{\emph{30th
  Annual Symposium on Foundations of Computer Science}}.
  \bibinfo{publisher}{{IEEE}}.
\newblock
\urldef\tempurl%
\url{https://doi.org/10.1109/sfcs.1989.63529}
\showDOI{\tempurl}


\bibitem[\protect\citeauthoryear{Mohar}{Mohar}{1989}]%
        {mohar1989isoperimetric}
\bibfield{author}{\bibinfo{person}{Bojan Mohar}.}
  \bibinfo{year}{1989}\natexlab{}.
\newblock \showarticletitle{Isoperimetric numbers of graphs}.
\newblock \bibinfo{journal}{\emph{Journal of Combinatorial Theory, Series B}}
  \bibinfo{volume}{47}, \bibinfo{number}{3} (\bibinfo{year}{1989}),
  \bibinfo{pages}{274--291}.
\newblock


\bibitem[\protect\citeauthoryear{Montenegro, Tetali, et~al\mbox{.}}{Montenegro
  et~al\mbox{.}}{2006}]%
        {montenegro2006mathematical}
\bibfield{author}{\bibinfo{person}{Ravi Montenegro}, \bibinfo{person}{Prasad
  Tetali}, {et~al\mbox{.}}} \bibinfo{year}{2006}\natexlab{}.
\newblock \showarticletitle{Mathematical aspects of mixing times in Markov
  chains}.
\newblock \bibinfo{journal}{\emph{Foundations and Trends{\textregistered} in
  Theoretical Computer Science}} \bibinfo{volume}{1}, \bibinfo{number}{3}
  (\bibinfo{year}{2006}), \bibinfo{pages}{237--354}.
\newblock


\bibitem[\protect\citeauthoryear{Ng, Jordan, and Weiss}{Ng
  et~al\mbox{.}}{2002}]%
        {ng2002spectral}
\bibfield{author}{\bibinfo{person}{Andrew~Y Ng}, \bibinfo{person}{Michael~I
  Jordan}, {and} \bibinfo{person}{Yair Weiss}.}
  \bibinfo{year}{2002}\natexlab{}.
\newblock \showarticletitle{On spectral clustering: Analysis and an algorithm}.
  In \bibinfo{booktitle}{\emph{Advances in neural information processing
  systems}}. \bibinfo{pages}{849--856}.
\newblock


\bibitem[\protect\citeauthoryear{Page, Brin, Motwani, and Winograd}{Page
  et~al\mbox{.}}{1999}]%
        {page1999pagerank}
\bibfield{author}{\bibinfo{person}{Lawrence Page}, \bibinfo{person}{Sergey
  Brin}, \bibinfo{person}{Rajeev Motwani}, {and} \bibinfo{person}{Terry
  Winograd}.} \bibinfo{year}{1999}\natexlab{}.
\newblock \bibinfo{booktitle}{\emph{The {PageRank} citation ranking: Bringing
  order to the web}}.
\newblock \bibinfo{type}{{T}echnical {R}eport}. \bibinfo{institution}{Stanford
  InfoLab}.
\newblock


\bibitem[\protect\citeauthoryear{Parlett}{Parlett}{1984}]%
        {Parlett-1984-sparse}
\bibfield{author}{\bibinfo{person}{B.~N. Parlett}.}
  \bibinfo{year}{1984}\natexlab{}.
\newblock \showarticletitle{The Software Scene in the Extraction of Eigenvalues
  from Sparse Matrices}.
\newblock \bibinfo{journal}{\emph{SIAM J. Sci. Statist. Comput.}}
  \bibinfo{volume}{5}, \bibinfo{number}{3} (\bibinfo{date}{sep}
  \bibinfo{year}{1984}), \bibinfo{pages}{590--604}.
\newblock
\urldef\tempurl%
\url{https://doi.org/10.1137/0905042}
\showDOI{\tempurl}


\bibitem[\protect\citeauthoryear{Pothen, Simon, and Liou}{Pothen
  et~al\mbox{.}}{1990}]%
        {pothen1990partitioning}
\bibfield{author}{\bibinfo{person}{Alex Pothen}, \bibinfo{person}{Horst~D.
  Simon}, {and} \bibinfo{person}{Kan-Pu Liou}.}
  \bibinfo{year}{1990}\natexlab{}.
\newblock \showarticletitle{Partitioning Sparse Matrices with Eigenvectors of
  Graphs}.
\newblock \bibinfo{journal}{\emph{SIAM J. Matrix Anal. Appl.}}
  \bibinfo{volume}{11}, \bibinfo{number}{3} (\bibinfo{year}{1990}),
  \bibinfo{pages}{430--452}.
\newblock


\bibitem[\protect\citeauthoryear{Seshadhri, Kolda, and Pinar}{Seshadhri
  et~al\mbox{.}}{2012}]%
        {seshadhri2012community}
\bibfield{author}{\bibinfo{person}{Comandur Seshadhri},
  \bibinfo{person}{Tamara~G Kolda}, {and} \bibinfo{person}{Ali Pinar}.}
  \bibinfo{year}{2012}\natexlab{}.
\newblock \showarticletitle{Community structure and scale-free collections of
  Erd{\H{o}}s-R{\'e}nyi graphs}.
\newblock \bibinfo{journal}{\emph{Physical Review E}} \bibinfo{volume}{85},
  \bibinfo{number}{5} (\bibinfo{year}{2012}), \bibinfo{pages}{056109}.
\newblock


\bibitem[\protect\citeauthoryear{Sinclair and Jerrum}{Sinclair and
  Jerrum}{1989}]%
        {Sinclair-1989-Markov}
\bibfield{author}{\bibinfo{person}{Alistair Sinclair} {and}
  \bibinfo{person}{Mark Jerrum}.} \bibinfo{year}{1989}\natexlab{}.
\newblock \showarticletitle{Approximate counting, uniform generation and
  rapidly mixing Markov chains}.
\newblock \bibinfo{journal}{\emph{Information and Computation}}
  \bibinfo{volume}{82}, \bibinfo{number}{1} (\bibinfo{date}{jul}
  \bibinfo{year}{1989}), \bibinfo{pages}{93--133}.
\newblock
\urldef\tempurl%
\url{https://doi.org/10.1016/0890-5401(89)90067-9}
\showDOI{\tempurl}


\bibitem[\protect\citeauthoryear{Trefethen}{Trefethen}{2013a}]%
        {trefethen2013approximation}
\bibfield{author}{\bibinfo{person}{Lloyd~N Trefethen}.}
  \bibinfo{year}{2013}\natexlab{a}.
\newblock \bibinfo{booktitle}{\emph{Approximation theory and approximation
  practice}}. Vol.~\bibinfo{volume}{128}.
\newblock \bibinfo{publisher}{Siam}.
\newblock


\bibitem[\protect\citeauthoryear{Trefethen}{Trefethen}{2013b}]%
        {Trefethen-2013-ATAP}
\bibfield{author}{\bibinfo{person}{Lloyd~N Trefethen}.}
  \bibinfo{year}{2013}\natexlab{b}.
\newblock \bibinfo{booktitle}{\emph{Approximation theory and approximation
  practice}}. Vol.~\bibinfo{volume}{128}.
\newblock \bibinfo{publisher}{Siam}.
\newblock


\bibitem[\protect\citeauthoryear{Trevisan}{Trevisan}{2012}]%
        {trevisan2012max}
\bibfield{author}{\bibinfo{person}{Luca Trevisan}.}
  \bibinfo{year}{2012}\natexlab{}.
\newblock \showarticletitle{Max cut and the smallest eigenvalue}.
\newblock \bibinfo{journal}{\emph{SIAM J. Comput.}} \bibinfo{volume}{41},
  \bibinfo{number}{6} (\bibinfo{year}{2012}), \bibinfo{pages}{1769--1786}.
\newblock


\bibitem[\protect\citeauthoryear{Watts and Strogatz}{Watts and
  Strogatz}{1998}]%
        {watts1998collective}
\bibfield{author}{\bibinfo{person}{Duncan~J Watts} {and}
  \bibinfo{person}{Steven~H Strogatz}.} \bibinfo{year}{1998}\natexlab{}.
\newblock \showarticletitle{Collective dynamics of ‘small-world’networks}.
\newblock \bibinfo{journal}{\emph{nature}} \bibinfo{volume}{393},
  \bibinfo{number}{6684} (\bibinfo{year}{1998}), \bibinfo{pages}{440}.
\newblock


\bibitem[\protect\citeauthoryear{Wei{\ss}e, Wellein, Alvermann, and
  Fehske}{Wei{\ss}e et~al\mbox{.}}{2006}]%
        {weisse2006kernel}
\bibfield{author}{\bibinfo{person}{Alexander Wei{\ss}e},
  \bibinfo{person}{Gerhard Wellein}, \bibinfo{person}{Andreas Alvermann}, {and}
  \bibinfo{person}{Holger Fehske}.} \bibinfo{year}{2006}\natexlab{}.
\newblock \showarticletitle{The kernel polynomial method}.
\newblock \bibinfo{journal}{\emph{Reviews of modern physics}}
  \bibinfo{volume}{78}, \bibinfo{number}{1} (\bibinfo{year}{2006}).
\newblock


\bibitem[\protect\citeauthoryear{Weyl}{Weyl}{1911}]%
        {weyl1911asymptotische}
\bibfield{author}{\bibinfo{person}{Hermann Weyl}.}
  \bibinfo{year}{1911}\natexlab{}.
\newblock \showarticletitle{{\"U}ber die asymptotische Verteilung der
  Eigenwerte}.
\newblock \bibinfo{journal}{\emph{Nachrichten von der Gesellschaft der
  Wissenschaften zu G{\"o}ttingen, Mathematisch-Physikalische Klasse}}
  \bibinfo{volume}{1911} (\bibinfo{year}{1911}), \bibinfo{pages}{110--117}.
\newblock


\bibitem[\protect\citeauthoryear{Wigner}{Wigner}{1958}]%
        {wigner1958distribution}
\bibfield{author}{\bibinfo{person}{Eugene~P. Wigner}.}
  \bibinfo{year}{1958}\natexlab{}.
\newblock \showarticletitle{On the distribution of the roots of certain
  symmetric matrices}.
\newblock \bibinfo{journal}{\emph{Annals of Mathematics}}
  (\bibinfo{year}{1958}), \bibinfo{pages}{325--327}.
\newblock


\end{thebibliography}

\newpage
\appendix
\section{Data Source}

The datasets used in this paper mainly come from the SNAP \cite{snapnets} and
RODGER repositories~\cite{rodger}. Table \ref{tab:data} is a list of the
networks from these two sources.

\begin{table}[H]
\centering
\caption{List of datasets and the corresponding source.}
\label{tab:data}
\begin{tabular}{r l l}
\toprule
Network & Full Name & Source \\ \midrule
Facebook & Facebook Ego Networks & SNAP\\
Gplus & Google+ Ego Networks & SNAP\\
Twitter & Twitter Ego Networks & SNAP\\
LiveJournal & LiveJournal Online Social Network & SNAP \\
Friendster & Friendster Online Social Network & SNAP\\
Orkut & Orkut Online Social Network & SNAP\\
Amazon & Amazon Product Network & SNAP\\
Enron & Enron Email Communication Network & SNAP\\
AstroPh & Arxiv Astro Physics Collaboration \\
        & Network & SNAP\\
HepTh & Arxiv High Energy Physics Theory \\
      & Collaboration Network & SNAP\\
RoadNetCA & California Road Network & SNAP\\
AS-733 & Autonomous System Network & SNAP\\
AS-CAIDA& CAIDA Autonomous System Network & SNAP\\
Neuron & Megascale Cell-Cell Similarity Network & SNAP\\
Erd\"{o}s & Erd\"{o}s Collaboration Network & RODGER\\
Marvel Chars & Marvel Characters Network & RODGER\\
\bottomrule
\end{tabular}
\end{table}

In addition, we used the Minnesota Road Network from the SuiteSparse Matrix
Collection \cite{davis2014suitesparse}, and the Harry Potter Characters Network
from an open source repository \cite{hprepo}.

\section{Code Release}

Code for reproducible experiments and figures are available at 
\url{https://github.com/kd383/NetworkDOS}.

\end{document}